\documentclass[12pt]{article}
\usepackage[top=1in, bottom=1.25in, left=1.25in, right=1.25in]{geometry} 
\geometry{verbose}
\pagestyle{plain}
\usepackage{color}
\usepackage{textcomp}
\usepackage{verbatim}
\usepackage{amsthm}
\usepackage{amsmath}
\usepackage{amssymb}
\usepackage{graphicx}
\usepackage{esint}
\usepackage{authblk}

\makeatletter



\usepackage{fancyhdr}
\lhead{}\chead{}\rhead{}
\lfoot{}\cfoot{\thepage}\rfoot{}

\usepackage{sectsty}\allsectionsfont{\sffamily\mdseries\upshape} 
\usepackage{amsthm}\usepackage{caption}\usepackage{subcaption}\usepackage{comment}\usepackage{appendix}
\usepackage{todonotes}
\usepackage{mathabx}
\theoremstyle{plain} \newtheorem{theorem}{Theorem}
\theoremstyle{definition} 
\theoremstyle{plain} \newtheorem{lemma}{Lemma}

\theoremstyle{plain} \newtheorem{prop}{Proposition}

\date{}

\makeatother

\begin{document}

\title{Trapped modes in armchair graphene nanoribbons}
\author[1]{V. A. Kozlov\thanks{vladimir.kozlov@liu.se}}
\author[2]{S. A. Nazarov\thanks{srgnazarov@yahoo.co.uk}} 
\author[1]{A. Orlof\thanks{a.orlof@gmail.com}}

\affil[1]{\textit {Mathematics and Applied Mathematics, MAI, Link\"{o}ping University, SE-58183 Link\"{o}ping, Sweden}}
\affil[2]{\textit{ Saint-Petersburg State University, Universitetsky pr., 28, Peterhof, 198504, St. Petersburg, Russia,
Peter the Great St. Petersburg Polytechnic University, St. Petersburg 195251, Russia,
 Institute of Problems of Mechanical Engineering
RAS, V.O., Bolshoi pr., 61, 199178, St.-Petersburg, Russia}}
\maketitle
\begin{abstract}
We study scattering on an ultra-low potential in armchair graphene
nanoribbon. Using the continuous Dirac model and including a couple
of artificial waves in the scattering process, described by an augumented
scattering matrix, we derive a condition for the existence of a trapped
mode. We consider the threshold energies, where the the multiplicity
of the continuous spectrum changes and show that a trapped mode may
appear for energies slightly less than a threshold and its multiplicity
does not exceed one. For energies which are higher than a threshold,
there are no trapped modes, provided that the potential is sufficiently
small. 

\end{abstract}

\section{Introduction}

The very high quality of graphene samples \cite{Chen,Novoselsov} allows us to consider
the production of defects deliberately. There are two types of defects:
short- and long-range. Vacancies and adatoms are classified as a short-range
type and are modelled by Dirac-delta functions. On the other hand,
electric or magnetic fields, interactions with the substrate, Coulomb charges,
ripples and wrinkles can be considered as long-range disorder and
modelled by smooth functions (a Gaussian for example). In the present
study we assume that graphene is free of short-range defects and we
consider only long-range defects described by an external potential.

We work within the continuous Dirac model, where electrons dynamics can
be described by a system of 4 equations \cite{CastroNeto}
\begin{equation}
\mathcal{D}\left(\begin{array}{c}
u\\
v\\
u'\\
v'
\end{array}\right)+\delta\mathcal{P}\left(\begin{array}{c}
u\\
v\\
u'\\
v'
\end{array}\right)=\omega\left(\begin{array}{c}
u\\
v\\
u'\\
v'
\end{array}\right),\label{eq:DiractotINIT}
\end{equation}
with
\begin{equation}
\mathcal{D}=\mathcal{D}(\partial_{x},\partial_{y}):=\left(\begin{array}{cccc}
0 & i\partial_{x}+\partial_{y} & 0 & 0\\
i\partial_{x}-\partial_{y} & 0 & 0 & 0\\
0 & 0 & 0 & -i\partial_{x}+\partial_{y}\\
0 & 0 & -i\partial_{x}-\partial_{y} & 0
\end{array}\right),\label{def:dirac}
\end{equation}
where $(x,y)$ are dimentionless cartesian coordinates (obtained by
the change of coordinates $(x,y)\mapsto\frac{4}{3\sqrt{3}a_{CC}}(x,y)$
in the original problem \cite{CastroNeto}, where $a_{CC}=0.142${[}nm{]}
is a distance between nearest carbon atoms) and consequently $\omega$
is a dimensionless energy (where the original energy in {[}eV{]} can
be recovered by the multiplication by $\frac{2t}{\sqrt{3}}$,where $t=2.77[eV]$
is nearest-neighbour hopping integral); the potential $\mathcal{P}$
is a dimensionless, real-valued function with a compact support and
$\delta$ is a real-value small parameter.

In the discrete model graphene hexagonal lattice is described as a composition
of two interpenetrating triangular lattices (called A and B). The
consequence of this division is a two component wave function $(\psi_{A},\psi_{B})$,
where $\psi_{A}$ ($\psi_{B}$) describes the electron on the sites of
lattice A (B). In the discrete model there are two minima in the dispersion
relation, called $\mathbf{K}=(0,-K)$ and $\mathbf{K}'=(0,K)$ valleys
(with $K=\pi$ in our dimentionless formulation). Low energy approximation
($a_{CC}\rightarrow0$), which enables the passage from discrete to
continuous model, has to be done close to those minima separately, leading
to the following form of the total wave functions \cite{Macucci}:
\begin{equation}
\psi_{A}(x,y)=e^{i\mathbf{K}\cdot(x,y)}u(x,y)-ie^{i\mathbf{K}'\cdot(x,y)}u'(x,y),\label{eq:wfA}
\end{equation}
\begin{equation}
\psi_{B}(x,y)=-ie^{i\mathbf{K}\cdot(x,y)}v(x,y)+e^{i\mathbf{K}'\cdot(x,y)}v'(x,y),\label{eq:wfB}
\end{equation}
with components $(u,v)$ coming from the approximation close to $\mathbf{K}$
point and fulfilling the system of the two first equations in (\ref{eq:DiractotINIT})
and components $(u',v')$ coming from the approximation close to $\mathbf{K}'$
point and fulfilling the system of the two last equations in (\ref{eq:DiractotINIT}).
An armchair nanoribbon is modelled as a strip $\Pi=(0,L)\times\mathbb{R}$,
$L>0$, parallel to the x-axis. The wave function has to disappear
on the nanoribbon edges, which in the armchair case contain sites from
both sublattices A and B. Consequently it is required \cite{Brey,Macucci}:

\begin{align*}
\psi_{A}(x,0) & =0,\;\;\;\psi_{B}(x,0)=0,\\
\psi_{A}(x,L) & =0,\;\;\;\psi_{B}(x,L)=0.
\end{align*}
From (\ref{eq:wfA}) and (\ref{eq:wfB}), these boundary conditions
transform to

\begin{eqnarray}
u(x,0)-iu'(x,0)=0, &  & -iv(x,0)+v'(x,0)=0,\label{eq:BC1INIT}\\
e^{-i2\pi L}u(x,L)-iu'(x,L)=0, &  & -ie^{-i2\pi L}v(x,L)+v'(x,L)=0.\label{eq:BC2INIT}
\end{eqnarray}
These boundary conditions describe the mixing between valleys which is
characteristic for armchair nanoribbons. For a detailed derivation of
the continuous model see \cite{Macucci}.

Our potential $\mathcal{P}$ is assumed to be of long-range type and
can be described by a diagonal matrix with equal elements \cite{Ando}.

We introduce the energy thresholds
\[
\omega_{1}=\underset{j\in\mathbb{Z}}{\min\Big\{}|\pi+\frac{\pi j}{L}|:|\pi+\frac{\pi j}{L}|>0\Big\},
\]
 and 
\begin{equation}
\omega_{k+1}=\underset{j\in\mathbb{Z}}{\min\Big\{}|\pi+\frac{\pi j}{L}|:|\pi+\frac{\pi j}{L}|>\omega_{k}\Big\},\quad k=1\,,2\,,\ldots.\label{eq:defthreshold}
\end{equation}
Note that 
\begin{equation}
d_{*}\leq|\omega_{k+1}-\omega_{k}|\leq\frac{\pi}{L},\;\;\; d_{*}=\frac{\pi}{L}\min_{\substack{m\in\mathbb{Z}\\
2L+m\neq0
}
}|2L+m|.\label{eq:dstar}
\end{equation}

The continuous spectrum of the problem (\ref{eq:DiractotINIT}), (\ref{eq:BC1INIT}),
(\ref{eq:BC2INIT}) with $\mathcal{P}=0$ depends on the nanoribbon
width $L$ and covers $(-\infty,\omega_{1}]\cup[\omega_{1},+\infty)$.
At the thresholds, the multiplicity of the continuous spectrum changes. 

A trapped mode is defined as a vector eigenfunction (from $L_{2}$
space) that corresponds to an eigenvalue embedded in the continuous
spectrum. The main result of the paper is the following theorem about the
existence of trapped modes in armchair graphene nanoribbons for energies
close to one of the thresholds that can be chosen arbitrary.

\begin{theorem}\label{thrm1}For every $N=1,\,2,\ldots$, there exists
$\epsilon_{N}>0$ such that for each $\epsilon\in(0,\epsilon_{N})$
there exists $\delta\sim\sqrt{\epsilon}$ and a potential $\mathcal{P}$
with $\sup|\mathcal{P}|<1$, such that the problem {\rm (\ref{eq:DiractotINIT})},
{\rm (\ref{eq:BC1INIT})}, {\rm (\ref{eq:BC2INIT})} has a trapped mode
for $\omega=\omega_{N}-\epsilon$.
\end{theorem}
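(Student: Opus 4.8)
\textit{Proof strategy (outline).} The plan is to realize $\Pi=(0,L)\times\mathbb{R}$ as a Dirac waveguide, introduce the augmented scattering matrix that stays well defined across the threshold $\omega_N$, translate the existence of a trapped mode into an algebraic condition on that matrix, and finally satisfy this condition by a suitable choice of $\mathcal{P}$ and of $\delta$, exploiting the balance $\delta\sim\sqrt{\epsilon}$.

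\textit{Step 1: separation of variables and the almost standing wave.} Applying the partial Fourier transform in $y$ (or separating variables $W(x,y)=e^{i\eta y}\Phi(x)$) turns the free problem $(\mathcal{D}-\omega)W=0$ with the boundary conditions \eqref{eq:BC1INIT}, \eqref{eq:BC2INIT} into a one-dimensional spectral problem on $(0,L)$ whose eigenvalues are exactly the numbers $\pm|\pi+\pi j/L|$, $j\in\mathbb{Z}$; this reproduces the thresholds \eqref{eq:defthreshold} and shows that for $\omega\in(\omega_{N-1},\omega_N)$ there are finitely many, say $M=M(N)$, propagating waves $w_1^\pm,\dots$ (oscillating in $y$) and, in addition, the transverse mode that becomes propagating precisely at $\omega_N$ produces a \emph{couple of artificial waves} $w^\pm$ that decay/grow in $y$ like $e^{\mp\kappa|y|}$ with $\kappa=\kappa(\epsilon)=c_N\sqrt{\epsilon}\,(1+O(\epsilon))$, $c_N>0$; the $O(\epsilon^{-1/2})$ length scale $\kappa^{-1}$ of these waves is the origin of the scaling $\delta\sim\sqrt{\epsilon}$.

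\textit{Step 2: augmented scattering matrix and the trapped-mode criterion.} Since the usual $M\times M$ scattering matrix degenerates as $\omega\uparrow\omega_N$, one adjoins $w^\pm$ to the scattering basis with the normalization that keeps the natural symplectic (Green's-formula) form nondegenerate, and obtains a matrix $\mathfrak{S}=\mathfrak{S}(\epsilon,\delta,\mathcal{P})\in U(M+2)$ depending analytically on $(\epsilon,\delta)$ and linearly-to-leading-order on $\mathcal{P}$; unitarity follows from the symmetric structure of $\mathcal{D}$ together with the integration-by-parts identity on $\Pi$, and $\mathfrak{S}$ extends continuously to $\epsilon=0$. A function $W\in L_2(\Pi)$ solving the problem must, outside $\mathrm{supp}\,\mathcal{P}$, be a combination of the decaying waves $e^{-\kappa y}$ (at $+\infty$) and $e^{+\kappa y}$ (at $-\infty$) only; hence a trapped mode at $\omega=\omega_N-\epsilon$ exists if and only if the $M\times 2$ block of $\mathfrak{S}$ coupling the artificial channels to the propagating ones vanishes. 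By unitarity this forces the $2\times2$ artificial block of $\mathfrak{S}$ to be itself unitary, and the conjugation symmetry of the Dirac system reduces this to a single real scalar equation $F(\epsilon,\delta,\mathcal{P})=0$; the scalar nature of this equation is also what yields multiplicity at most one.

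\textit{Step 3: choosing the potential, and the main obstacle.} At $\delta=0$ the augmented matrix $\mathfrak{S}_0(\epsilon)$ is explicit (free scattering), and one computes $\partial_\delta\mathfrak{S}|_{\delta=0}$ as a sum of integrals $\int_\Pi \mathcal{P}\,(\text{products of the waves }w_j^\pm,w^\pm)$; because $w^\pm$ spread over a $y$-window of length $\sim\kappa^{-1}\sim\epsilon^{-1/2}$, the artificial--artificial entries of this derivative are of size $\sim\kappa^{-1}$ while all others stay $O(1)$, so with $\delta\sim\sqrt{\epsilon}$ the correction to the relevant block of $\mathfrak{S}$ is of order one and its sign and magnitude are controlled by the profile of $\mathcal{P}$. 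Plugging this into $F$, one shows that $F(\epsilon,0,\cdot)$ is bounded away from the degenerate values, while the scaled $\delta$-derivative of $F$ is nonzero and can be given either sign by the choice of $\mathcal{P}$; a contraction/implicit-function argument in a ball of radius $O(\sqrt{\epsilon})$ then produces, for every small $\epsilon$, a $\delta=\delta(\epsilon)\sim\sqrt{\epsilon}$ with $F=0$. The constraint $\sup|\mathcal{P}|<1$ costs nothing: replacing $\mathcal{P}$ by $t\mathcal{P}$ and $\delta$ by $\delta/t$ leaves $\delta\mathcal{P}$ unchanged, so one takes $t$ large. The hard part is the uniformity of all remainder estimates as $\epsilon\to0$: since the artificial waves live on a scale $\epsilon^{-1/2}$, every error term in the $\delta$-expansion of $\mathfrak{S}$ (and in the resolvent/parametrix used to construct it) must be shown not to overcome the $\kappa^{-1}$ gain, which requires weighted (Kondratiev-type) a priori estimates for $\mathcal{D}-\omega$ on $\Pi$ that are uniform down to the threshold, together with the verification that the free term $F(\epsilon,0,\cdot)$ does not approach the bad set. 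This uniform control is the technical core on which the whole argument rests.
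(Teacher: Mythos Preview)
Your outline has the right overall architecture (augmented scattering matrix, algebraic trapped-mode criterion, fixed-point argument), but two of the key steps are incorrect as stated, and together they lead you to misidentify both the source of the scaling $\delta\sim\sqrt{\epsilon}$ and the actual technical core.

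\textbf{The trapped-mode criterion.} Your Step~2 claims that a trapped mode exists iff the $M\times 2$ coupling block vanishes. This is false: for $\delta=0$ the augmented matrix is the identity, the coupling block vanishes, yet there is no trapped mode. The correct criterion (the paper's Theorem~\ref{ThTrap}) is a condition on the $2\times2$ \emph{artificial} block alone, namely $\det\big(\mathbf{S}_{\dagger\dagger}+d(\epsilon)\Upsilon\big)=0$ with $d(\epsilon)=(\lambda_\epsilon+1)/(\lambda_\epsilon-1)=-1-2i\sqrt{2\omega_N\epsilon}+O(\epsilon)$ encoding the requirement that only the decaying exponentials survive at $\pm\infty$. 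The vanishing of the coupling block then \emph{follows} from unitarity once this determinant condition holds, not the other way around.

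\textbf{The origin of $\delta\sim\sqrt{\epsilon}$.} Your Step~3 attributes the scaling to the artificial waves ``spreading over a window of length $\kappa^{-1}$'', making the artificial--artificial entries of $\partial_\delta\mathfrak{S}$ of size $\kappa^{-1}$. But $\mathcal{P}$ has compact support in $[-R_0,R_0]\times[0,L]$, so all the integrals $\int_\Pi \mathcal{P}\,\mathbf{w}_j^\tau\cdot\overline{\mathbf{w}_k^\theta}$ are $O(1)$ uniformly in $\epsilon$ (the normalized waves $\mathbf{w}_N^\pm$ are analytic and bounded at $\epsilon=0$, see \eqref{eq:weps+}--\eqref{eq:weps-}). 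The $\sqrt{\epsilon}$ instead comes from the \emph{algebraic} structure of the criterion: with $\mathbf{S}=\mathbb{I}+i\delta s$ and $d(\epsilon)=-e^{i\sigma}$, $\sigma\sim\sqrt{\epsilon}$, the equation $\Im(1+i\delta s_{N+}^{N+})=\pm\Im d$ forces $\delta\sim\sigma\sim\sqrt{\epsilon}$. Consequently the ``hard part'' is not uniform-in-$\epsilon$ resolvent bounds (analyticity of $\mathbf{S}$ in $(\epsilon,\delta)$ handles that cleanly), but rather that the determinant condition does \emph{not} reduce to a single scalar equation for a fixed profile $\mathcal{P}$: one must simultaneously kill several entries $s_{N+}^{j\tau}$, $s_{N-}^{N+}$ and set $\Re s_{N+}^{N+}=1$. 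The paper does this by taking $\mathcal{P}=\Phi+\sum_\alpha\eta^\alpha\Psi^\alpha$ with as many free real parameters $\eta^\alpha$ as equations, proving a linear-independence lemma for the products $\mathbf{w}_N^+\cdot\overline{\mathbf{w}_j^\tau}$ (Lemma~\ref{Lemma2}) to make the linearized system invertible, and then closing with the Banach fixed-point theorem. Your single-parameter implicit-function step does not cover this.
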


The second result shows that trapped modes may appear only for energies
slightly smaller than a threshold and that the spectrum far from the threshold
is free of embedded eigenvalues, provided the potential is sufficiently small. Moreover their multiplicity does
not exceed one.

\begin{theorem}\label{thrm2_intro}There exist positive numbers $\epsilon_{0}$
independent of $N$ and $\delta_{0}$, such that if 

(i) $\omega\in[\omega_{N},\omega_{N}+\epsilon_{0}]$ or $|\omega-\omega_{k}|>\epsilon_{0}$,
for all $k=1,2,\ldots$ and $|\delta|<\delta_{0}$ with $\delta_{0}$
independent of $N$ and $k$, then the problem {\rm (\ref{eq:DiractotINIT})},
{\rm (\ref{eq:BC1INIT})}, {\rm (\ref{eq:BC2INIT})} has no trapped modes;

(ii) $\omega\in[\omega_{N}-\epsilon_{0},\omega_{N}]$ and $|\delta|<\delta_{0}$
with $\delta_{0}$ which may depend on $N$, then the multiplicity
of a trapped mode to probelm {\rm (\ref{eq:DiractotINIT})}, {\rm (\ref{eq:BC1INIT})},
{\rm (\ref{eq:BC2INIT})} does not exceed $1$.

\end{theorem}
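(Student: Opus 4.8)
The idea is to reduce the trapped-mode problem to a spectral statement about an augmented scattering matrix and then to run a perturbation argument that is uniform in $N$. First I would separate variables: the partial Fourier transform in the longitudinal variable $x\in\mathbb R$ together with the expansion in the transverse eigenfunctions on $(0,L)$ yields, for each $\omega$ in the continuous spectrum, a classification of the waves at $x\to\pm\infty$ into $n=n(\omega)$ propagating ones (oscillatory, longitudinal wavenumber bounded away from $0$) and infinitely many evanescent ones whose decay exponents are bounded below by a positive constant, with a single exception: when $\omega=\omega_N-\epsilon$ lies just below a threshold, exactly one pair of waves behaves like $e^{\pm\kappa x}$ with $\kappa=\kappa(\omega)\sim\sqrt\epsilon\to0$. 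A trapped mode $\Psi$ lies in $L_2(\Pi)$, so the coefficients of all propagating waves in its far-field expansion vanish and $\Psi$ decays exponentially; conversely, by the theory built on the augmented scattering matrix, a trapped mode at $\omega$ exists precisely when the augmented matrix $\mathcal S=\mathcal S(\omega,\delta)$ — unitary, continuous in $(\omega,\delta)$, of size $n(\omega)$ plus the number of artificial near-threshold waves — has $1$ as an eigenvalue (up to normalisation), the multiplicity of the trapped mode being governed by the part of that eigenspace lying in the artificial-wave subspace. I would also record the Green (symplectic) identity for $\mathcal D+\delta\mathcal P$ under the armchair conditions (\ref{eq:BC1INIT}), (\ref{eq:BC2INIT}), whose self-adjointness turns integration by parts over $\Pi$ into a boundary term at $x=\pm\infty$ expressed through far-field coefficients; this identity underlies the construction and the unitarity of $\mathcal S$.

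For part (i), fix once and for all a small weight exponent $\beta>0$, chosen in terms of $d_*$ from (\ref{eq:dstar}) and $\epsilon_0$ only, hence independently of $N$. For $\omega$ as in (i) there are no artificial waves (at $\omega=\omega_N$ the channel $N$ sits at a band edge, and for $\omega\in(\omega_N,\omega_N+\epsilon_0]$ it is genuinely propagating), and I claim the a priori estimate $\|\Psi\|_{H^1_{-\beta}}\le C\,\|(\mathcal D-\omega)\Psi\|_{L^2_{-\beta}}$ holds on exponentially weighted Sobolev spaces with $C$ \emph{uniform in $N$}. This is precisely what (\ref{eq:dstar}) buys: the distance from $\omega$ to the spectra of the transverse fibre operators, and hence the distance from $\beta$ to the admissible decay exponents, is controlled independently of $N$, and there is no threshold resonance in this range; equivalently, $\mathcal S(\omega,0)-1$ is boundedly invertible with the norm of its inverse bounded uniformly in $N$. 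Since $\mathcal P$ has compact support, $\delta\mathcal P$ is a small perturbation in these spaces, so for $|\delta|<\delta_0$ with $\delta_0$ small and $N$-independent the estimate survives for $\mathcal D+\delta\mathcal P-\omega$; hence the only exponentially decaying solution is $\Psi=0$, and by the first step this is the only candidate for a trapped mode. This is (i).

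For part (ii), $\omega=\omega_N-\epsilon$ with $\epsilon\in(0,\epsilon_0)$: now the estimate of (i) degrades only through the single slowly-decaying pair $e^{\pm\kappa x}$, all other evanescent channels still decaying at a rate bounded below uniformly, while the $n=N-1$ genuine channels are non-resonant ($\omega$ is bounded away from $\omega_1,\dots,\omega_{N-1}$ by at least $d_*-\epsilon_0$). Reducing the exponentially decaying problem for $\mathcal D+\delta\mathcal P-\omega$ through these non-resonant channels — equivalently, forming the appropriate Schur complement of $\mathcal S(\omega,\delta)$ over the $N-1$ genuine slots — leaves a $2\times2$ matrix $M(\omega,\delta)$ such that trapped modes at this $\omega$ correspond bijectively to $\ker(M(\omega,\delta)-I)$; in particular their multiplicity is at most $2$. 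To sharpen this to $1$: $M(\omega,0)$, the reduced matrix of the free problem, has no eigenvalue equal to $1$ (no trapped mode at $\delta=0$), and as $\epsilon\to0$ exactly one of its two eigenvalues tends to $1$ (the one attached to the vanishing exponent $\kappa$), the other staying a definite distance away; hence for $|\delta|<\delta_0$ (now possibly $N$-dependent) the perturbed $M(\omega,\delta)$ has at most one eigenvalue equal to $1$, so $\dim\ker(M(\omega,\delta)-I)\le1$ and the trapped mode has multiplicity at most $1$. Together with (i) this proves Theorem \ref{thrm2_intro}.

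The step I expect to be the main obstacle is the uniformity in $N$ in part (i): establishing the weighted a priori estimate for $\mathcal D-\omega$ near but not at a threshold with a constant governed only by $d_*$ of (\ref{eq:dstar}) and not by $N$, and in particular ruling out, for $\omega$ slightly above $\omega_N$, any near-degeneracy of $\mathcal S(\omega,0)$ whose smallness would depend on $N$. In part (ii) the delicate point is to confirm that the reduced $2\times2$ block really is effectively one-dimensional near the eigenvalue $1$ as $\epsilon\to0$ — that the coalescence of the $e^{\pm\kappa x}$ waves does not drive both eigenvalues of $M(\omega,0)$ to $1$ — which comes down to the Jordan structure of the transverse fibre operator at the threshold $\omega_N$ and the non-resonance of the $N-1$ open channels there.
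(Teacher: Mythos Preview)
Your plan for part (i) is essentially what the paper does: Proposition \ref{Prop1}(2) gives a strip $|\Im\lambda|\le\gamma$ containing only real exponents, and then the operator ${\cal D}+(\delta\mathcal P-\omega)\mathcal I$ is an isomorphism on the weighted space $X_\gamma^+$ for $|\delta|<\delta_0$ (Theorem \ref{T1a}), so the only decaying solution is zero. One remark in your write-up is wrong, though: you say this is ``equivalently, $\mathcal S(\omega,0)-1$ is boundedly invertible''. At $\delta=0$ there is no scattering at all and $\mathbf S(\omega,0)=\mathbb I$, so $\mathbf S(\omega,0)-\mathbb I=0$; that reformulation is false and should be dropped. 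The weighted-space argument stands on its own and does not need the scattering matrix.

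For part (ii) your route is substantially more complicated than necessary, and the criterion you invoke is not the one that actually holds. The paper's trapped-mode criterion (Theorem \ref{ThTrap}) is \emph{not} ``$\mathcal S$ has eigenvalue $1$'' or ``$\ker(M-I)\neq0$''; it is the degeneracy of $\mathbf S_{\dagger\dagger}+d(\epsilon)\Upsilon$ with $d(\epsilon)=(\lambda_\epsilon+1)/(\lambda_\epsilon-1)$. Since $\mathbf S(\omega,0)=\mathbb I$, your assertions about the eigenvalues of $M(\omega,0)$ and about ``exactly one tending to $1$'' have no clear meaning in this setting. The paper instead bypasses the scattering matrix entirely for (ii) and gives a two-line linear-algebra argument: by Proposition \ref{Prop1}(1) and Theorem \ref{T3s2}, any trapped mode has the form $w=Ce^{i\lambda_\epsilon x}(\dots)+De^{-i\lambda_\epsilon x}(\dots)+R$ with $R\in\mathcal H_\gamma^+$; given two trapped modes $w_1,w_2$, the combination $w_3=w_1-(C_1/C_2)w_2$ kills the first slow exponential, so $w_3\in X_\gamma^+$, and the isomorphism from Theorem \ref{T1a} forces $w_3=0$. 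That is the whole proof of multiplicity $\le1$. Your Schur-complement and eigenvalue-tracking machinery is not needed, and as formulated it does not go through.
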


To approach the problem, we follow the technique based on the augumented
scattering matrix developed in \cite{naKamotsky,Nazarov1,Nazarov2,Nazarov3}.

This is the second paper about trapped modes in graphene nanoribbons,
that we consider. In the first one \cite{TrappedArXiv}, we analysed
the case of zigzag nanoribbon with a corresponding non-elliptic boundary
value problem. 

The paper is organised as follows. In Sect. \ref{sec:Dirac-equation}
we analyse the Dirac equation without potential. For a fixed energy, we
construct the bounded solutions (waves) in Sect. \ref{sub:Solutions-with-real}.
Additionally, when the considered energy is close to a threshold (\ref{eq:defthreshold}),
we construct the two unbounded solutions in Sect. \ref{sub:Solutions-with-complex}.
In Sect. \ref{sub:Symplectic-form}, we introduce a symplectic form,
used to define the direction of wave propagation in Sect. \ref{sub:Biorthogonality-real},
\ref{sub:Biorthogonality-complex}. In Sect. \ref{sub:Non-hom}, we
give a solvability result for the non-homogenous problem. 

In Sect. \ref{sec:Dirac-withP}, we include a potential in the Dirac
equation and consider a scattering problem using the augumented
scattering matrix (Sect. \ref{sub:Augumented-scattering-matrix}).
In Sect. \ref{sub:Necessary-and-sufficient}, we give a necessary
and sufficient condition for the existence of trapped modes from which,
in Sect. \ref{sub:Proof-of-Theorem2}, we extract an example potential
that produces a trapped mode and prove Theorem \ref{thrm1}. Finally
in Sec. \ref{sub:Proof-of-Theorem2}, we prove Theorem \ref{thrm2_intro}
about the multiplicity of trapped modes.

\section{The Dirac equation\label{sec:Dirac-equation} }

\subsection{Problem statement\label{sub:Problem-statement}}

First, we consider problem (\ref{eq:DiractotINIT}) without potential
($\mathcal{P}=0$), i.e.
\begin{equation}
\mathcal{D}\left(\begin{array}{c}
u\\
v\\
u'\\
v'
\end{array}\right)=\omega\left(\begin{array}{c}
u\\
v\\
u'\\
v'
\end{array}\right),\label{eq:DiractotnoP}
\end{equation}
with the boundary conditions (\ref{eq:BC1INIT}), (\ref{eq:BC2INIT}).

Our goal is to find solutions of at most exponential growth to the
above problem; in particular, we need bounded solutions to describe the
continuous spectrum of the operator corresponding to (\ref{eq:DiractotnoP}),
(\ref{eq:BC1INIT}), (\ref{eq:BC2INIT}).

Let us introduce the space $X_{0}$ which contains $(u,v,u',v')$,
such that each component belongs to $L^{2}(\Pi)$ and $(i\partial_{x}-\partial_{y})u$,
$(i\partial_{x}+\partial_{y})v$, $(i\partial_{x}+\partial_{y})u'$,
$(-i\partial_{x}+\partial_{y})v'$ are also in $L^{2}(\Pi)$; moreover
the components fulfill the conditions (\ref{eq:BC1INIT}), (\ref{eq:BC2INIT}).
The norm in $X_{0}$ is the usual $L_{2}$-norm for all the components
and their derivatives described above. The operator $\mathcal{D}$ is
self-adjoint in $(L^{2}(\Pi))^{4}$ with the domain $X_{0}$. 
One can verify that the problem (\ref{eq:DiractotnoP}),
(\ref{eq:BC1INIT}), (\ref{eq:BC2INIT}) is elliptic. Another equivalent norm in $X_0$ is given by the following proposition. 
\begin{prop}It holds that 
\[
\int_{\Pi}|\mathcal{D}(u,v,u',v')^{\top}|^{2}dxdy=\int_{\Pi}\Big(|\nabla u|^{2}+|\nabla v|^{2}+|\nabla u'|^{2}+|\nabla v'|^2\Big)dxdy,
\]
for $(u,v,u',v')\in X_0$.
\end{prop}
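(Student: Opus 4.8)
The proof will be an integration by parts, preceded by a routine reduction to smooth functions. The plan is: \textbf{(i)} reduce to $\phi=(u,v,u',v')^\top$ smooth and compactly supported in the unbounded direction of the strip and satisfying \eqref{eq:BC1INIT}--\eqref{eq:BC2INIT}, such functions being dense in $X_0$ because the problem is elliptic; \textbf{(ii)} expand $|\mathcal D\phi|^2$ pointwise; \textbf{(iii)} recognize the ``extra'' terms as divergences and integrate them, so that $\int_\Pi|\mathcal D\phi|^2-\int_\Pi|\nabla\phi|^2$ collapses to line integrals over the two edges $\{y=0\}$ and $\{y=L\}$ of the ribbon; \textbf{(iv)} show these edge integrals vanish thanks to the armchair boundary conditions.

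For step (ii), $\mathcal D$ splits into two $2\times2$ blocks, so
\[
|\mathcal D\phi|^2=|(i\partial_x-\partial_y)u|^2+|(i\partial_x+\partial_y)v|^2+|(-i\partial_x-\partial_y)u'|^2+|(-i\partial_x+\partial_y)v'|^2 .
\]
Expanding each square gives $|\partial_x f|^2+|\partial_y f|^2=|\nabla f|^2$ for the relevant component $f$ plus a cross term $\pm2\operatorname{Im}\!\big((\partial_x f)\,\overline{\partial_y f}\big)$, with sign $+$ for $u$ and $v'$ and sign $-$ for $v$ and $u'$. Hence $|\mathcal D\phi|^2=|\nabla u|^2+|\nabla v|^2+|\nabla u'|^2+|\nabla v'|^2+\mathcal R$ with
\[
\mathcal R=2\operatorname{Im}\!\big((\partial_x u)\overline{\partial_y u}\big)-2\operatorname{Im}\!\big((\partial_x v)\overline{\partial_y v}\big)-2\operatorname{Im}\!\big((\partial_x u')\overline{\partial_y u'}\big)+2\operatorname{Im}\!\big((\partial_x v')\overline{\partial_y v'}\big),
\]
and it remains to prove $\int_\Pi\mathcal R\,dx\,dy=0$. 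For step (iii) I would use the identity $2\operatorname{Im}\!\big((\partial_x f)\overline{\partial_y f}\big)=i\big(\partial_x(\overline f\,\partial_y f)-\partial_y(\overline f\,\partial_x f)\big)$, which holds because the second-order mixed derivatives cancel. Integrating over the strip, the $\partial_x$-divergence contributes nothing (compact support in that variable), while the $\partial_y$-divergence leaves the integral of $\overline f\,\partial_x f$ over the edges $\{y=0\}$ and $\{y=L\}$. Summing the four contributions, $\int_\Pi\mathcal R$ reduces to $-i$ times the integral over the ribbon boundary (with the usual orientation sign between the two edges) of $\overline u\,\partial_x u-\overline v\,\partial_x v-\overline{u'}\,\partial_x u'+\overline{v'}\,\partial_x v'$.

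Step (iv) is where the specific boundary conditions enter, and it is the only genuinely problem-dependent point — but not a hard one. On $\{y=0\}$ the conditions \eqref{eq:BC1INIT} read $u=iu'$ and $v'=iv$; differentiating these relations in $x$ (the edge direction) gives $\partial_x u=i\partial_x u'$ and $\partial_x v'=i\partial_x v$ there, whence $\overline u\,\partial_x u=\overline{u'}\,\partial_x u'$ and $\overline{v'}\,\partial_x v'=\overline v\,\partial_x v$, so the integrand above vanishes identically on $\{y=0\}$. On $\{y=L\}$ the conditions \eqref{eq:BC2INIT} relate $u$ to $u'$ (and $v'$ to $v$) through the unimodular factor $e^{\pm 2\pi iL}$, and since $|e^{\pm2\pi iL}|=1$ the same cancellation takes place. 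Hence $\int_\Pi\mathcal R=0$, and the general case follows by the density reduction of step (i). I do not expect a real obstacle: the only care needed is that density reduction (which is what legitimizes the integration by parts on the unbounded strip), and the content is the edge cancellation. As a sanity check, the identity is exactly the expected one — $\mathcal D$ is formally self-adjoint with $\mathcal D^2=-\Delta\,I_4$, so $\|\mathcal D\phi\|^2$ and $\|\nabla\phi\|^2$ can differ only by boundary terms, and the armchair conditions are precisely what makes those terms cancel.
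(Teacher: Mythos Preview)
Your proof is correct and follows essentially the same route as the paper's: expand $|\mathcal D\phi|^2$ pointwise, write the cross terms as divergences, integrate to obtain edge terms at $y=0$ and $y=L$, and cancel these using the armchair boundary conditions (the paper pairs $u$ with $u'$ and $v$ with $v'$ exactly as you do). If anything you are slightly more careful than the paper, which performs the integration by parts formally without the explicit density reduction of your step (i).
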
\label{PropEllip}
\begin{proof}
Proof is presented in Appendix \ref{sec:Ellipticity}.
\end{proof}

Due to the nanoribbon geometry we are looking for non-trivial exponential
(or power exponential) solutions in $x$, namely

\begin{equation}
(u(x,y),v(x,y),u'(x,y),v'(x,y))=e^{i\lambda x}(\mathcal{U}(y),\mathcal{V}(y),\mathcal{U}'(y),\mathcal{V}'(y)),\label{uosc}
\end{equation}
where a complex number $\lambda$ is the longitudinal component of
the wave vector parallel with the nanoribbon edge. 

For $\omega=0$ we have $(i\partial_{x}+\partial_{y})v=(i\partial_{x}-\partial_{y})u=(-i\partial_{x}+\partial_{y})v'=(-i\partial_{x}-\partial_{y})u'=0$,
therefore $u=u(x+iy),\, v=v(-x+iy),\, u'=u'(-x+iy),\, v'=v'(x+iy),$
which together with the boundary conditions (\ref{eq:BC1INIT}), (\ref{eq:BC2INIT})
and the form (\ref{uosc}) give solutions only for nanoribbons
with a width $L$ such that $L$ is a natural number and $\lambda=0$. These solutions are
\begin{equation*}
(u,v,u',v')=(1,C,-i,iC),\,\,\, C\in\mathbb{C}.
\end{equation*}
Now assume that $\omega\neq0$, then (\ref{eq:DiractotnoP}) can be
written as:
\begin{equation}
-\Delta u=\omega^{2}u\ ,\ \ v=\omega^{-1}(i\partial_{x}-\partial_{y})u\ ,\ \ -\Delta u'=\omega^{2}u'\ ,\ \ v'=\omega^{-1}(-i\partial_{x}-\partial_{y})u'.\label{delt}
\end{equation}
Then the insertion of (\ref{uosc}) into (\ref{delt}) and (\ref{eq:BC1INIT}), (\ref{eq:BC2INIT}), gives
\begin{equation}
\begin{cases}
-\mathcal{U}_{yy}=(\omega^{2}-\lambda^{2})\mathcal{U}\ ,\ \  & -\mathcal{U}'_{yy}=(\omega^{2}-\lambda^{2})\mathcal{U}'\\
\mathcal{V}=\omega^{-1}(-\lambda\mathcal{U}-\mathcal{U}_{y})\ ,\ \  & \mathcal{V}'=\omega^{-1}(\lambda\mathcal{U}'-\mathcal{U}_{y}'),\\
\mathcal{U}(0)-i\mathcal{U}'(0)=0\ ,\ \  & -i\mathcal{V}(0)+\mathcal{V}'(0)=0\\
e^{-i2\pi L}\mathcal{U}(L)-i\mathcal{U}'(L)=0\ ,\ \  & -ie^{-i2\pi L}\mathcal{V}(L)+\mathcal{V}'(L)=0.
\end{cases}\label{eq:sep}
\end{equation}
If $\lambda^{2}=\omega^{2}$, then problem (\ref{eq:sep}) has a non-trivial
solution only when $L$ is a natural number. In this case 
\begin{equation}
(u(x,y),v(x,y),u'(x,y),v'(x,y))=e^{\pm i\omega x}(1,\mp1,-i,\mp i)\label{eq:solkapzero}
\end{equation}
and there is no power exponential solution.

Now, consider the case $\lambda^{2}\neq\omega^{2}$ ($\omega\neq0$).
If $\omega\neq\omega_{N}$, $N=1\,,2\,,\ldots$, then all the solutions
of (\ref{eq:DiractotnoP}), (\ref{eq:BC1INIT}), (\ref{eq:BC2INIT})
have the form (\ref{uosc}) with 
\begin{equation}
(\mathcal{U}(y),\mathcal{V}(y),\mathcal{U}{}^{'}(y),\mathcal{V}{}^{'}(y))=(e^{i\kappa y},-\omega^{-1}(\lambda+i\kappa)e^{i\kappa y},-ie^{-i\kappa y},-i\omega^{-1}(\lambda+i\kappa)e^{-i\kappa y}),\label{eq:solUV-2-1}
\end{equation}
where 
\begin{equation}
\lambda=\pm\sqrt{\omega^{2}-\kappa^{2}},\,\,\,\kappa=\pi+\frac{\pi j}{L},\:\:\: j=0,\,\pm1,\,\pm2,\ldots.\label{eq:lamsqrt}
\end{equation}
When $\omega=\omega_{N}$, $N=1\,,2\,,\ldots$ and $\lambda=0$, we
have two more solutions when $2L$ is not a natural number
and they are 

\begin{equation}
w_{N}^{0}(x,y)=(e^{i\kappa y},-i\text{sgn}(\kappa)e^{i\kappa y},-ie^{-i\kappa y},\text{sgn}(\kappa)e^{-i\kappa y}),\label{eq:w0}
\end{equation}

\begin{equation}
w_{N}^{1}(x,y)=xw_{N}^{0}(x,y)+|\kappa|^{-1}(0,ie^{i\kappa y},0,-e^{-i\kappa y}),\label{eq:w1}
\end{equation}
with $\kappa=\pi+\frac{\pi m}{L}$, where $m$ is defined through
$\omega_{N}$, 
\[
\omega_{N}=|\pi+\frac{\pi m}{L}|=\underset{j\in\mathbb{Z}}{\min\Big\{}|\pi+\frac{\pi j}{L}|:|\pi+\frac{\pi j}{L}|>\omega_{k}\Big\},\;\; k=1\,,2\,,\ldots,N-1.
\]
It follows that $\kappa$ is one of two $\omega_{N}$ or $-\omega_{N}$.
Differently if $2L$ is a natural number, then we have four more solutions,
two of the form (\ref{eq:w0}) and (\ref{eq:w1}) with $\kappa=\omega_{N}$
and two of the same form (\ref{eq:w0}) and (\ref{eq:w1}) but with
$\kappa=-\omega_{N}$.

\subsection{Symmetries}

There are three symmetries in the system, let's denote them by $T_{1}$,
$T_{2}$ and $T_{3}$. If $(u(x,y),v(x,y),u'(x,y),v'(x,y))$ is a
solution to (\ref{eq:DiractotnoP}), (\ref{eq:BC1INIT}), (\ref{eq:BC2INIT})
then through symmetry transformations $T_{1}$, $T_{2}$ and $T_{3}$,
there are three more solutions which respectively read

\[
\left(\begin{array}{c}
\overline{u}'(x,y)\\
\overline{v}'(x,y)\\
\overline{u}(x,y)\\
\overline{v}(x,y)
\end{array}\right),\left(\begin{array}{c}
v(-x,y)\\
-u(-x,y)\\
-v'(-x,y)\\
u'(-x,y))
\end{array}\right),\left(\begin{array}{c}
e^{i2\pi L}\overline{u}(x,L-y)\\
-e^{i2\pi L}\overline{v}(x,L-y)\\
-\overline{u}'(x,L-y)\\
\overline{v}'(x,L-y)
\end{array}\right).
\]
Superpositions of those symmetries give eight solutions in total in
this case. Moreover, if the nanoribbon width $2L$ is a natural number,
then there is an additional symmetry $T_{4}$ giving the following
solution
\[
\left(\begin{array}{c}
\overline{u}(-x,y)\\
\overline{v}(-x,y)\\
-\overline{u}'(-x,y)\\
-\overline{v}'(-x,y)
\end{array}\right),
\]
and so there are 16 solutions in this case. In what follows we find
solutions $(u,v,u',v')$ for positive $\omega$ only. Then the solution
for $-\omega$ is $(u,-v,u',-v')$.

\subsection{Solutions with real wave number $\lambda$\label{sub:Solutions-with-real}}

In this paper we focus on the case when $2L$ is not a natural
number (Figure \ref{fig:disp} (a) and (b)).

Consider first the case when $\omega\in(\omega_{N-1},\omega_{N})$
for certain $N=1\,,2\,,\ldots$ (we assume that $\omega_{0}=0$).
We enumerate the real exponent in (\ref{eq:lamsqrt}) as follows.
Assuming that $\lambda$ in (\ref{eq:lamsqrt}) is real, we have that

\begin{equation}
-\omega<\pi+\frac{\pi j}{L}<\omega\,\,\,\mbox{or}\,\,\,-L(1+\frac{\omega}{\pi})<j<L(\frac{\omega}{\pi}-1).\label{eq:-om+om}
\end{equation}
According to the last inequality, we enumerate $\kappa$ 
\begin{equation*}
-\omega<\kappa_{1}<\kappa_{2}<\ldots<\kappa_{M}<\omega,
\end{equation*}
where $M=M(\omega)=N-1$ is the number of indexes $j$ satysfying (\ref{eq:-om+om}).
For any $\kappa_{j}$, $j=1\,,2\,,\ldots,M$, there are two values
of $\lambda$ in (\ref{eq:lamsqrt}) that define solutions (\ref{eq:solkapzero}),
(\ref{eq:solUV-2-1}), let us denote them $\pm\lambda_{j}$ with $\lambda_{j}=\sqrt{\omega^{2}-\kappa_{j}^{2}}$,
$j=1,\,2,\,\ldots,M$. We can introduce the notation for solutions
(\ref{eq:solUV-2-1})

\begin{figure}
\includegraphics[scale=0.55]{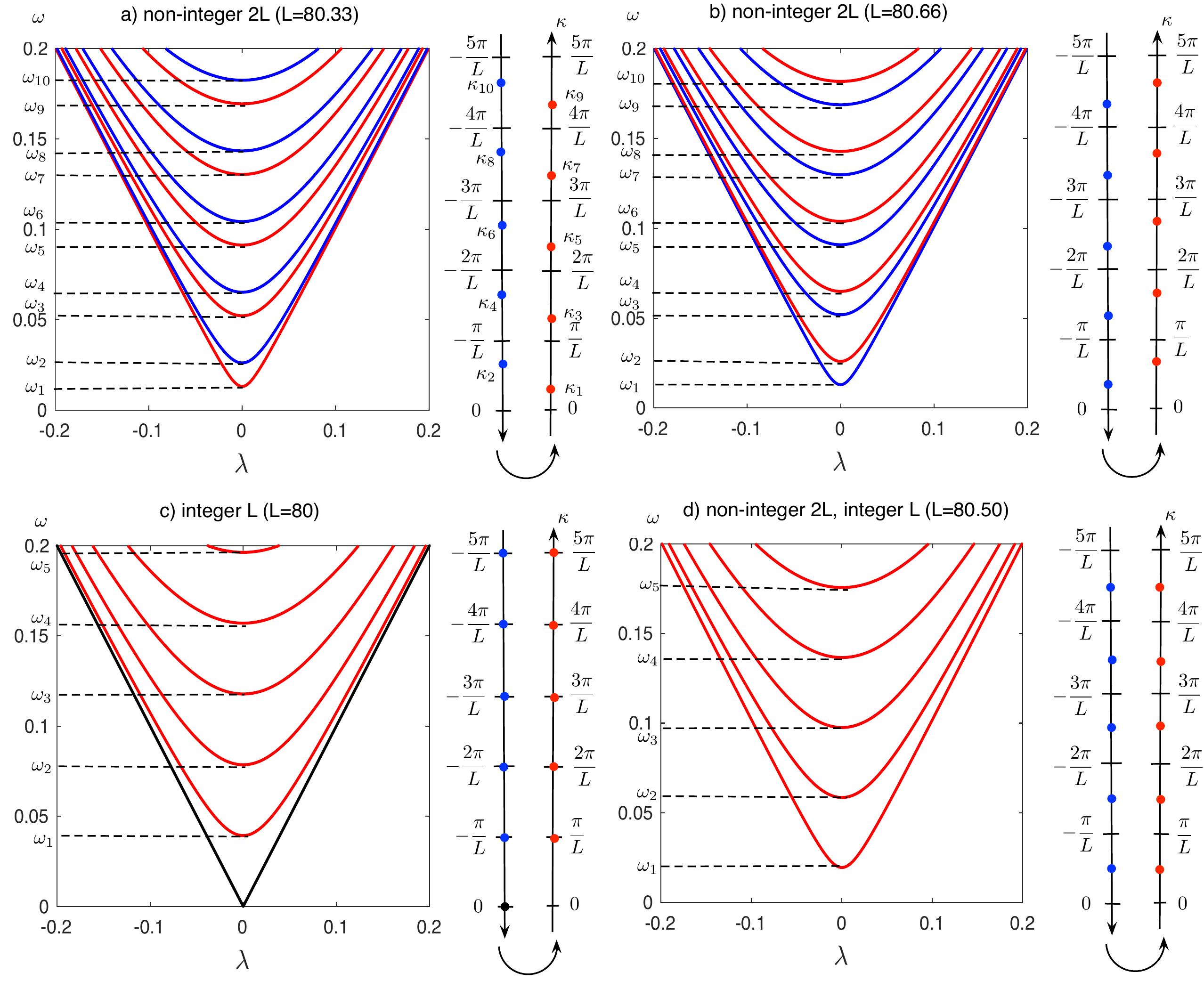}
\caption{{\footnotesize Dispersion relation: energy $\omega$ versus wave vector component
$\lambda$ for nanoribbons of different width $L$: (a) and (b) for
non integer $2L$, (c) integer $L$, (d) integer $2L$ but non integer
$L$. Threshold energies $\omega_{1}<\omega_{2}<\ldots$ and values
of $\kappa$ (blue and red dots) are indicated. The colors of the
dots and curves correspond to the sign of the $\kappa$ value, which when
red is positive and when blue is negative.} \label{fig:disp}}
\end{figure}

\begin{align}
w_{j}^{\pm}(x,y) & =(u_{j}^{\pm}(x,y),v_{j}^{\pm}(x,y),u_{j}{}^{'\pm}(x,y),v_{j}{}^{'\pm}(x,y))\label{eq:soluv}\\
 & =e^{\pm i\lambda_{j}x}(\mathcal{U}_{j}^{\pm}(y),\mathcal{V}_{j}^{\pm}(y),\mathcal{U}{}_{j}^{'\pm}(y),\mathcal{V}{}_{j}^{'\pm}(y)),
\end{align}
where 
\begin{eqnarray*}
 & (\mathcal{U}_{j}^{\pm}(y),\mathcal{V}_{j}^{\pm}(y),\mathcal{U}{}_{j}^{'\pm}(y),\mathcal{V}{}_{j}^{'\pm}(y))\nonumber \\
= & (e^{i\kappa_{j}y},-{\omega}^{-1}(\pm\lambda_{j}+i\kappa_{j})e^{i\kappa_{j}y},-ie^{-i\kappa_{j}y},-i{\omega}^{-1}(\pm\lambda_{j}+i\kappa_{j})e^{-i\kappa_{j}y}).
\end{eqnarray*}

Now, consider the threshold case, that is $\omega=\omega_{N}$, $N=1,\,2,\,,3\,,\ldots$
and $\lambda=0$. Then $\kappa^{2}=\omega^{2}$. In the case $2L$
is not a natural number, there is only one value of $j$ that satisfies
one of the two relations 
\begin{equation*}
\pi+\frac{\pi j}{L}=\pm\omega,
\end{equation*}
using this value of $j$ we define $\kappa_{N}=\pi+\frac{\pi j}{L}$,
with the use of which we find two additional solutions to our problem
(\ref{eq:DiractotnoP}), (\ref{eq:BC1INIT}), (\ref{eq:BC2INIT})

\begin{equation}
w_{N}^{0}(x,y)=(e^{i\kappa_{N}y},-i\text{sgn}(\kappa_{N})e^{i\kappa_{N}y},-ie^{-i\kappa_{N}y},\text{sgn}(\kappa_{N})e^{-i\kappa_{N}y}),\label{eq:w0Ntau}
\end{equation}

\begin{equation}
w_{N}^{1}(x,y)=xw_{N}^{0}(x,y)+{\omega_{N}}^{-1}(0,ie^{i\kappa_{N}y},0,-e^{-i\kappa_{N}y}).\label{eq:w1Ntau}
\end{equation}

Thus, the continuous spectrum depends on the nanoribbon width. For
each $\omega\geq\omega_{1}$ there is a bounded solution to (\ref{eq:DiractotnoP}),
(\ref{eq:BC1INIT}), (\ref{eq:BC2INIT}) of the form (\ref{uosc});
additionally there are bounded solutions for $0<\omega<\omega_{1}$
for $L$ being a natural number. Hence the continuous spectrum for
the Dirac operator $\mathcal{D}$ is $(-\infty,-\omega_{1}]\cup[\omega_{1},\infty)$
when $L$ is not a natural number and $(-\infty,\infty)$ when $L$
is a natural number. Note that $\omega_{1}$ depends on $L$ and is
small for $L$ close to a natural number.

\subsection{Solutions with imaginary wave number $\lambda$\label{sub:Solutions-with-complex}}

Consider the case when $\omega$ is close to the threshold $\omega_{N}$.
Introduce a small parameter $\epsilon$ and denote by $\omega_{\epsilon}$
the energy $\omega_{\epsilon}=\omega_{N}-\epsilon$, $\epsilon>0$.
Then the root $\lambda=0$ bifurcates into two imaginary roots $\pm\lambda_{\epsilon}=\pm\lambda_{\epsilon}(\epsilon)$,
$\lambda_{\epsilon}>0$, which can be found from the equation 
\begin{equation*}
\omega_{\epsilon}^{2}=\omega_{N}^{2}+\lambda_{\epsilon}^{2}.
\end{equation*}
They have the expansion

\begin{equation}
\lambda_{\epsilon}=i\sqrt{\epsilon}\sqrt{2\omega_{N}-\epsilon}),\label{eq:lambdaexpansion}
\end{equation}
so that $\overline{\lambda}_{\epsilon}=-\lambda_{\epsilon}$. 

Now instead of two threshold solutions (\ref{eq:w0Ntau}) and (\ref{eq:w1Ntau}),
we have two solutions of the form (\ref{uosc}) with $ $$\pm\lambda_{\epsilon}$
\begin{equation}
w_{N}^{\pm}(x,y)=e^{\pm i\lambda_{\epsilon}x}\Big(e^{i\kappa_{N}y},-\frac{\pm\lambda_{\epsilon}+i\kappa_{N}}{\omega_{\epsilon}}e^{i\kappa_{N}y},-ie^{-i\kappa_{N}y},-\frac{\pm i\lambda_{\epsilon}-\kappa_{N}}{\omega_{\epsilon}}e^{-i\kappa_{N}y}\Big.\label{eq:wpm}
\end{equation}

By (\ref{eq:lambdaexpansion}), functions (\ref{eq:wpm}) can be written
in terms of $w_{N}^{0}$, $w_{N}^{1}$ (see (\ref{eq:w0}) and (\ref{eq:w1}))
\[
w_{N}^{\pm}(x,y)=w_{N}^{0}(x,y)\mp\sqrt{\epsilon}\sqrt{2\omega_{N}}w_{N}^{1}(x,y)+O(\epsilon).
\]
The waves (\ref{eq:wpm}) are not analytic in $\epsilon$ but their linear
combination
\begin{align}
{\bf w}_{N}^{\epsilon+}(x,y) & =\frac{w_{N}^{+}+w_{N}^{-}}{2}=w_{N}^{0}+O(\epsilon),\label{eq:weps+}
\end{align}
\begin{align}
{\bf w}_{N}^{\epsilon-}(x,y) & =\frac{w_{N}^{+}-w_{N}^{-}}{2\lambda_{\epsilon}}=iw_{N}^{1}+O(\epsilon),\label{eq:weps-}
\end{align}
are analytic with respect to $\epsilon$ for small $|\epsilon|$.

\subsection{Proposition about the location of $\lambda$ }

\begin{figure}
\includegraphics[scale=0.52]{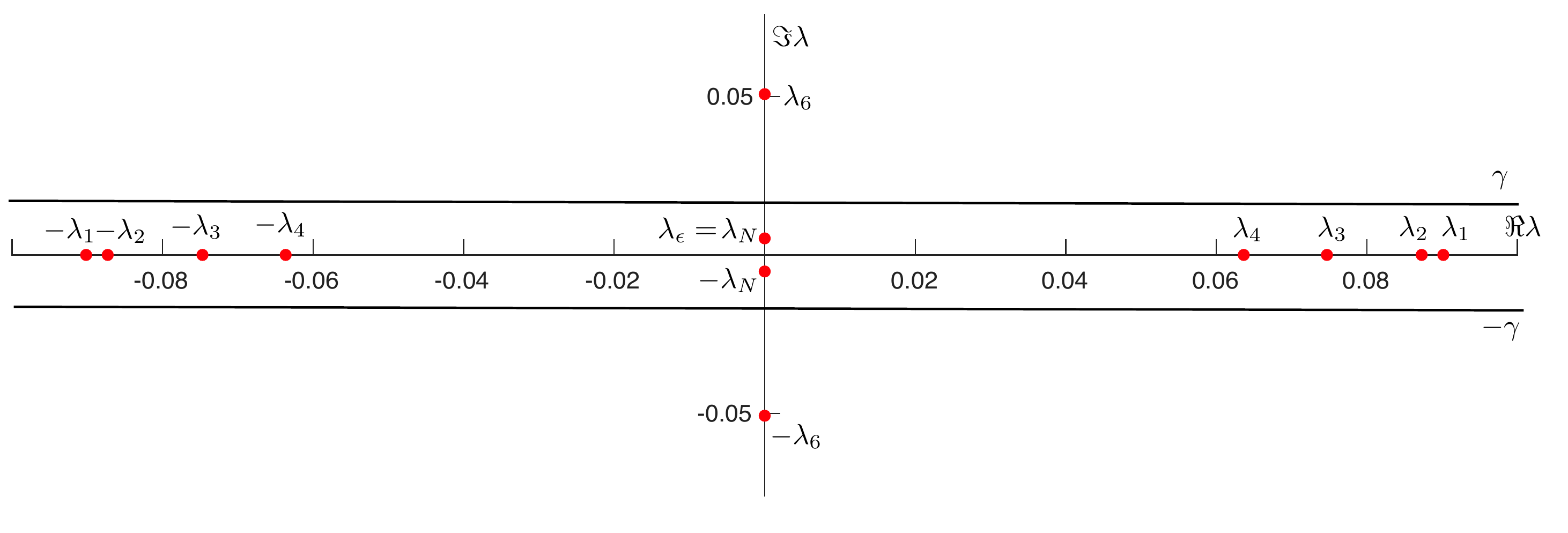}\caption{{\footnotesize The values of $\lambda$ are on the real and imaginary axis only. When the
energy is close to one of the thresholds $\omega_{N}$, it is possible
to choose a strip $|\Im\lambda|\leq\gamma_{N}$ such that all real
and only two imaginary values of $\lambda$ are within the strip (indicated).
For example when $L=80.33$ and $\omega=\omega_{5}-\epsilon$ ($\omega_{5}=0.0912$
and $\epsilon=0.0001$), then all the possible values of $\lambda$
are indicated with red dots, only real values are enumerated: $\pm\lambda_{1},\pm\lambda_{2,},\ldots,\pm\lambda_{4}$
($N=5$) and the imaginary values of $\lambda$ are indicated by $\pm\lambda_{\epsilon}$.
For the sake of the proof of Proposition (\ref{Prop1}), we indicate
$\lambda_{N}$ ($\lambda_{N}=\lambda_{5}$ in our case and $\lambda_{N}=\lambda_{\epsilon}$
as $\omega$ is slightly less and close to the threshold) and $\lambda_{N+1}=\lambda_{6}$.}
$ $\label{fig:lambdaval}}
\end{figure}

\begin{prop}\label{Prop1} There is a constant $\epsilon_{0}=\epsilon_{0}(L)>0$
such that the following assertions are valid

(1) For every $\omega_{N}$ there exist $\gamma_{N}=\gamma_{N}(L)>0$
such that the strip $|\Im\lambda|\leq\gamma_{N}$ contains all the real
and two imaginary values of $\lambda$ described in Sect. {\rm\ref{sub:Solutions-with-real}}
and Sect. {\rm\ref{sub:Solutions-with-complex}} when $\omega\in[\omega_{N}-\epsilon_{0},\omega_{N})$. 

(2) There exist $\gamma=\gamma(L)>0$ such that the strip $|\Im\lambda|\leq\gamma$
contains the real values of $\lambda$ described in Sect. {\rm\ref{sub:Solutions-with-real}}
when $|\omega-\omega_{k}|>\epsilon_{0}$ with $k=1,2,\ldots$
or when $\omega\in[\omega_{N},\omega_{N}+\epsilon_{0}]$.

\end{prop}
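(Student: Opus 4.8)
The plan is to prove Proposition \ref{Prop1} by examining explicitly where the values of $\lambda$ can lie and how they depend on $\omega$. The key observation, established already in Sect. \ref{sub:Solutions-with-real} and Sect. \ref{sub:Solutions-with-complex}, is that every admissible $\lambda$ satisfies $\lambda^2 = \omega^2 - \kappa^2$ with $\kappa = \pi + \pi j/L$ for some $j \in \mathbb{Z}$. Since $\omega^2 - \kappa^2$ is real, $\lambda$ is either purely real (when $|\kappa| < \omega$), zero (when $|\kappa| = \omega$, i.e. at a threshold), or purely imaginary (when $|\kappa| > \omega$). The imaginary values $\lambda = \pm i\sqrt{\kappa^2 - \omega^2}$ thus live on the imaginary axis, and their distance from the real axis is exactly $\sqrt{\kappa^2 - \omega^2}$, which is controlled by how far $|\kappa|$ is from $\omega$.

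For part (1), I would fix $N$ and work with $\omega \in [\omega_N - \epsilon_0, \omega_N)$. The only $\kappa$-values producing imaginary roots inside a thin strip are those with $|\kappa|$ just above $\omega$; by the definition \eqref{eq:defthreshold} of the thresholds and the spacing estimate \eqref{eq:dstar}, the smallest such $|\kappa|$ equals $\omega_N$ (for the two roots $\pm\lambda_\epsilon$ of \eqref{eq:lambdaexpansion}, with $\sqrt{\kappa^2-\omega^2} = \sqrt{\epsilon}\sqrt{2\omega_N - \epsilon} \to 0$ as $\epsilon \to 0$), while the next one, call it associated with $\omega_{N+1}$, satisfies $|\kappa| \geq \omega_{N+1} \geq \omega_N + d_*$. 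Hence for that root $|\Im\lambda| = \sqrt{\omega_{N+1}^2 - \omega^2} \geq \sqrt{\omega_{N+1}^2 - \omega_N^2} \geq \sqrt{d_*(2\omega_N)} \geq \sqrt{2 d_* \omega_1} =: 2\gamma_N$ (using $\omega_N \geq \omega_1$; one may also take the uniform lower bound if desired). Choosing $\epsilon_0 = \epsilon_0(L)$ small enough that $\sqrt{\epsilon_0}\sqrt{2\omega_N} < \gamma_N$ puts $\pm\lambda_\epsilon$ strictly inside $|\Im\lambda| \leq \gamma_N$ while leaving every other imaginary root outside it; all real roots are trivially inside since they have $\Im\lambda = 0$. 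This gives the claimed $\gamma_N = \gamma_N(L)$.

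For part (2) the situation is even simpler because no root is allowed to approach the real axis. When $|\omega - \omega_k| > \epsilon_0$ for every $k$, every $\kappa$ with $|\kappa| \neq \omega$ is bounded away from $\omega$: either $|\kappa| \leq \omega$ (real root, $\Im\lambda = 0$) or $|\kappa| \geq \omega + \text{(gap)}$, where the gap is at least the distance from $\omega$ to the nearest threshold above it, which exceeds $\epsilon_0$; thus $|\Im\lambda| = \sqrt{\kappa^2 - \omega^2} \geq \sqrt{2\omega\epsilon_0} \geq \sqrt{2\omega_1 \epsilon_0} =: \gamma$. The same estimate works on $\omega \in [\omega_N, \omega_N + \epsilon_0]$: here $\omega$ has moved past the threshold $\omega_N$, so the $\kappa$-values with $|\kappa| = \omega_N$ now give real roots, and the nearest $\kappa$ from above is at $|\kappa| \geq \omega_{N+1} \geq \omega_N + d_* \geq \omega + d_* - \epsilon_0$, which stays bounded away from $\omega$ once $\epsilon_0 < d_*/2$, again yielding a uniform $\gamma = \gamma(L) > 0$ with the property that the strip $|\Im\lambda| \leq \gamma$ contains only the real roots.

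The only delicate point — and the one I would treat most carefully — is the bookkeeping of \emph{which} $\kappa$-values fall just above $\omega$ in each regime, so that exactly the right roots (the two near-threshold imaginary ones in part (1), and none in part (2)) land inside the strip; this is precisely what the definitions \eqref{eq:defthreshold} and the spacing bounds \eqref{eq:dstar} are designed to control, so the argument reduces to combining those with the elementary identity $|\Im\lambda| = \sqrt{\kappa^2 - \omega^2}$ and choosing $\epsilon_0(L)$ and $\gamma(L)$, $\gamma_N(L)$ accordingly. One should also note that the assumption that $2L$ is not a natural number (in force throughout Sect. \ref{sub:Solutions-with-real}) guarantees no threshold is doubly degenerate, so near $\omega_N$ there are exactly two imaginary roots $\pm\lambda_\epsilon$ and no extra coincidences to worry about.
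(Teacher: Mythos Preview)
Your approach is essentially the same as the paper's: both use the identity $|\Im\lambda|=\sqrt{\kappa^2-\omega^2}$ together with the threshold spacing bound \eqref{eq:dstar}. There is, however, a genuine slip in your treatment of part (1) that breaks the uniformity of $\epsilon_0$.

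You bound the next imaginary root by
\[
\sqrt{\omega_{N+1}^2-\omega^2}\;\geq\;\sqrt{2d_*\omega_N}\;\geq\;\sqrt{2d_*\omega_1}=:2\gamma_N,
\]
and then require $\sqrt{2\epsilon_0\omega_N}<\gamma_N$. But having replaced $\omega_N$ by $\omega_1$ in the definition of $\gamma_N$, this last inequality reads $\epsilon_0<\tfrac{d_*\omega_1}{4\omega_N}$, which forces $\epsilon_0\to0$ as $N\to\infty$ and so cannot give a single $\epsilon_0=\epsilon_0(L)$ valid for every $N$, as the proposition demands. The point is that $\Im\lambda_\epsilon$ and $\Im\lambda_{N+1}$ \emph{both} scale like $\sqrt{\omega_N}$; you must keep that factor in $\gamma_N$ (which is allowed to depend on $N$) so that it cancels in the constraint on $\epsilon_0$. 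Taking instead $\gamma_N=\tfrac12\sqrt{2d_*\omega_N}$ gives $\epsilon_0<d_*/4$, independent of $N$, and the argument goes through. This is exactly what the paper does: it compares $\Im\lambda_N\le\sqrt{2\epsilon_0\omega_N}$ directly with $\Im\lambda_{N+1}-\Im\lambda_N\ge\tfrac{\sqrt{d_*}\sqrt{2\omega_N}}{2\sqrt{1+\epsilon_0/d_*}}$ and solves for $\epsilon_0$ in terms of $d_*$ alone.

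A smaller point in part (2): your estimate $\sqrt{\kappa^2-\omega^2}\ge\sqrt{2\omega\epsilon_0}\ge\sqrt{2\omega_1\epsilon_0}$ uses $\omega\ge\omega_1$, which need not hold. Use instead $|\kappa|+\omega\ge|\kappa|\ge\omega_1$ (since every $|\kappa|$ is a threshold), giving $\sqrt{\kappa^2-\omega^2}\ge\sqrt{\epsilon_0\omega_1}$, which is the bound the paper obtains.
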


\begin{proof}

For the sake of the exposition, let us numerate $\kappa_{N}$ and
$\kappa_{N+1}$ according to (\ref{eq:lamsqrt}) so that $\kappa_{N}=\pi+\frac{\pi m}{L}$,
where $m$ is defined through $\omega_{N}$ 
\[
\omega_{N}=|\pi+\frac{\pi m}{L}|=\underset{j\in\mathbb{Z}}{\min\Big\{}|\pi+\frac{\pi j}{L}|:|\pi+\frac{\pi j}{L}|>\omega_{k}\Big\},\;\; k=1\,,2\,,\ldots,N-1.
\]
It follows that $\kappa_{N}$ is equal to $\omega_{N}$ or $-\omega_{N}$.
Similarly $\kappa_{N+1}=\pi+\frac{\pi m}{L}$ with $m$ defined
through $\omega_{N+1}$, 
\[
\omega_{N+1}=|\pi+\frac{\pi m}{L}|=\underset{j\in\mathbb{Z}}{\min\Big\{}|\pi+\frac{\pi j}{L}|:|\pi+\frac{\pi j}{L}|>\omega_{k}\Big\},\;\; k=1\,,2\,,\ldots,N,
\]
and so $\kappa_{N+1}$ is equal to $\omega_{N+1}$ or $-\omega_{N+1}$.
Consequently, we define $\lambda_{m}=\sqrt{\omega^{2}-\kappa_{m}^{2}}$,
$m=N,\, N+1$ (Figure \ref{fig:lambdaval}).

(1) Let us first estimate the difference $\Im\lambda_{N+1}-\Im\lambda_{N}$.
Let $\epsilon_{0}$ be a small positive number depending on $L$ and
being chosen later. We have 

$ $
\begin{align*}
\Im\lambda_{N+1}-\Im\lambda_{N} & =\sqrt{\omega_{N+1}^{2}-\omega^{2}}-\sqrt{\omega_{N}^{2}-\omega^{2}}=\frac{(\omega_{N+1}^{2}-\omega_{N}^{2})}{\sqrt{\omega_{N+1}^{2}-\omega^{2}}+\sqrt{\omega_{N}^{2}-\omega^{2}}}\\
 & \geq\frac{(\omega_{N+1}-\omega_{N})(\omega_{N+1}+\omega_{N})}{2\sqrt{\omega_{N+1}^{2}-\omega^{2}}}\geq\frac{(\omega_{N+1}-\omega_{N})\sqrt{\omega_{N+1}+\omega_{N}}}{2\sqrt{\omega_{N+1}-\omega_{N}+\epsilon_{0}}}.
\end{align*}
Using estimate (\ref{eq:dstar}), we proceed as 
\[
\Im\lambda_{N+1}-\Im\lambda_{N}\geq\frac{\sqrt{d_{*}}\sqrt{2\omega_{N}}}{2\sqrt{1+\frac{\epsilon_{0}}{\omega_{N+1}-\omega_{N}}}}\geq\frac{\sqrt{d_{*}}\sqrt{2\omega_{N}}}{2\sqrt{1+\frac{\epsilon_{0}}{d_{*}}}}.
\]
Now, $\gamma_{N}$ exists only when $\Im\lambda_{N}$ is small enough.
The upper bound on $\Im\lambda_{N}$ is

\[
\Im\lambda_{N}=\sqrt{\omega_{N}^{2}-\omega^{2}}\leq\sqrt{2\epsilon_{0}\omega_{N}}
\]
and we consider it to be small when $\Im\lambda_{N}<\Im\lambda_{N+1}-\Im\lambda_{N}$
and $\epsilon_{0}$ is chosen so that

\[
\sqrt{\epsilon_{0}2\omega_{N}}\leq\frac{d_{*}\sqrt{2\omega_{N}}}{2\sqrt{d_{*}+\epsilon_{0}}}
\iff
\sqrt{\epsilon_{0}}\leq\frac{d_{*}}{2\sqrt{d_{*}+\epsilon_{0}}}.
\]
From the last inequality, we get that $\epsilon_{0}\leq\frac{1+\sqrt{2}}{2}d_{*}$
and so $\epsilon_{0}$ does not depend on $N$. The value of  $\gamma_{N}$
can be chosen as any between $\Im\lambda_{N}<\gamma_{N}<\Im\lambda_{N+1}-\Im\lambda_{N}$
(Figure \ref{fig:lambdaval}). 

(2) Consider first $|\omega-\omega_{k}|>\epsilon_{0}$ with $k=1,2,\ldots$
. For $\Im\lambda_{k}\neq0$, according to (\ref{eq:lamsqrt}), we
have 
\[
\Im\lambda_{k}=\sqrt{\omega_{k}^{2}-\omega^{2}}\geq\sqrt{\epsilon_{0}}\sqrt{\omega_{k}+\omega}\geq\sqrt{\epsilon_{0}}\sqrt{\omega_{1}}.
\]
Choosing $\gamma=\frac{1}{2}\sqrt{\epsilon_{0}\omega_{1}}$, we get
that the strip contains only real values of $\lambda$.

If $\omega\in[\omega_{N},\omega_{N}+\epsilon_{0}]$, then $\Im\lambda_{N}=0$
and 
\begin{align*}
\Im\lambda_{N+1} & =\sqrt{\omega_{N+1}^{2}-\omega^{2}}\geq\sqrt{\omega_{N+1}-\omega_{N}+\epsilon_{0}}\sqrt{\omega_{N+1}+\omega_{N}}\\
 & \geq\sqrt{d_{*}+\epsilon_{0}}\sqrt{\omega_{2}+\omega_{1}}\geq\sqrt{\epsilon_{0}}\sqrt{\omega_{1}}.
\end{align*}
Again, we can choose $\gamma=\frac{\sqrt{\epsilon_{0}\omega_{1}}}{2}$ so that 
the strip contains only real values of $\lambda$.

\end{proof}

\subsection{The symplectic form\label{sub:Symplectic-form}}

For two solutions $w=(u,v,u',v')$ and $\tilde{w}=(\tilde{u},\tilde{v},\tilde{u}',\tilde{v}')$
of the problem (\ref{eq:DiractotnoP}), (\ref{eq:BC1INIT}), (\ref{eq:BC2INIT}),
let us define the quantity

\begin{equation}
q_{a}(w,\tilde{w})=-i\int_{0}^{L}\overline{\tilde{u}}(a,y)v(a,y)+\overline{\tilde{v}}(a,y)u(a,y)-\overline{\tilde{u}}'(a,y)v'(a,y)-\overline{\tilde{v}}'(a,y)u'(a,y)dy.\label{eq:q}
\end{equation}
Since
\begin{multline*}
0=\int_{\Pi_{a,b}}\left(\begin{array}{c}
\overline{\tilde{u}}\\
\overline{\tilde{v}}\\
\overline{\tilde{u}'}\\
\overline{\tilde{v}'}
\end{array}\right)({\cal D}-\omega{\cal I})\left(\begin{array}{c}
u\\
v\\
u'\\
v'
\end{array}\right)dxdy-\int_{\Pi_{a,b}}\left(\begin{array}{c}
u\\
v\\
u'\\
v'
\end{array}\right)(\overline{\mathcal{D}}-\omega{\cal I})\left(\begin{array}{c}
\overline{\tilde{u}}\\
\overline{\tilde{v}}\\
\overline{\tilde{u}'}\\
\overline{\tilde{v}'}
\end{array}\right)dxdy\\
=-q_{b}(w,\tilde{w})+q_{a}(w,\tilde{w}),
\end{multline*}
where $\Pi_{a,b}=(0,L)\times(a,b)$, $a<b$, we see that $q_{a}$
does not depend on $a$ and we use the notation $q$ for this
form.

The form q is symplectic because it is sesquilinear and anti-Hermitian.

\subsection{Biorthogonality conditions when the wave vector $\lambda$ is real\label{sub:Biorthogonality-real}}

Here we discuss the biorthogonality conditions for the solutions to (\ref{eq:DiractotnoP}),
(\ref{eq:BC1INIT}), (\ref{eq:BC2INIT}). Since we are interested
mostly in the case when $\omega=\omega_{\epsilon}$, where $\omega_{\epsilon}=\omega_{N}-\epsilon$,
we consider this case here. Using (\ref{eq:q}), we obtain the biorthogonality
conditions for the oscillatory waves $w_{j}^{\pm}$ in (\ref{eq:soluv}):
\begin{equation*}
q(w_{j}^{\tau},w_{k}^{\theta})=0\;\;\mbox{if \ensuremath{(j,\tau)\neq(k,\theta)}\;\;\mbox{and \ensuremath{q(w_{j}^{\tau},w_{j}^{\tau})=\frac{4\tau iL\lambda_{j}}{\omega}}}.}
\end{equation*}
Therefore 
\begin{equation}
q(w_{j}^{\tau},w_{k}^{\theta})=\frac{4\tau iL\lambda_{j}}{\omega}\delta_{j,k}\delta_{\tau,\theta},\label{April1a}
\end{equation}
for $j,k=1,\ldots,N-1$ and $\tau,\theta=\pm$. We put 
\begin{equation*}
{\bf w}_{k}^{\tau}=\frac{\sqrt{\omega}}{2\sqrt{L|\lambda_{k}|}}w_{k}^{\tau},\;\;\; k=1,\ldots,N-1,\;\;\;\tau=\pm.
\end{equation*}
Then by (\ref{April1a}), we have:
\begin{equation}
q({\bf w}_{j}^{\tau},{\bf w}_{k}^{\theta})=\tau i\delta_{j,k}\delta_{\tau,\theta}.\label{eq:bioosc}
\end{equation}
and when $\omega=\omega_{N}$, then
\begin{equation}
q(w_{N}^{0},w_{N}^{0})=0,\ q(w_{N}^{0},w_{N}^{1})=-\frac{2L}{\omega_{N}},\; q(w_{N}^{1},w_{N}^{1})=0.
\end{equation}

\subsection{Biorthogonality conditions when the wave vector $\lambda$ is imaginary\label{sub:Biorthogonality-complex}}

Let us check if the waves $\mathbf{w}_{N}^{\epsilon\pm}$ fulfil the
orthogonality conditions. A direct evaluation gives
\begin{equation*}
q(w_{N}^{\pm},w_{N}^{\pm})=0,\ \ \ q(w_{N}^{-},w_{N}^{+})=\frac{i4L}{\omega_{\epsilon}}\lambda_{+},\ \ \ q(w_{N}^{+},w_{N}^{-})=\frac{i4L}{\omega_{\epsilon}}\lambda_{-}.
\end{equation*}
 Consequently
\begin{equation*}
q({\bf w}_{N}^{\epsilon+},{\bf w}_{N}^{\epsilon+})=0,\ \ \ q({\bf w}_{N}^{\epsilon+},{\bf w}_{N}^{\epsilon-})=\frac{i2L}{\omega_{\epsilon}},\ \ \ q({\bf w}_{N}^{\epsilon-},{\bf w}_{N}^{\epsilon-})=0.
\end{equation*}
As waves ${\bf w}_{N}^{\epsilon+}$ and ${\bf w}_{N}^{\epsilon-}$
do not fulfil the biorthogonality conditions, we introduce their linear
combinations
\begin{equation}
{\bf w}_{N}^{\pm}=\frac{{\bf w}_{N}^{\epsilon+}\pm{\bf w}_{N}^{\epsilon-}}{\mathcal{N}}=\frac{1}{2\mathcal{N}}\Big(1\pm\frac{1}{\lambda_{\epsilon}}\Big)w_{N}^{+}+\frac{1}{2\mathcal{N}}\Big(1\mp\frac{1}{\lambda_{\epsilon}}\Big)w_{N}^{-},\;\;\;\mathcal{N}=2\sqrt{\frac{L}{\omega_{\epsilon}}}.\label{eq:WNPWNM}
\end{equation}
 Then the new waves (\ref{eq:WNPWNM}) fulfill the condition

\[
q({\bf w}_{N}^{\tau},{\bf w}_{N}^{\theta})=\tau i\delta_{\tau,\theta}.
\]

\subsection{The non-homogeneous problem\label{sub:Non-hom}}

Consider the non-homogeneous problem 
\begin{equation}
(i\partial_{x}+\partial_{y})v-\omega u=g\;\;\;\mbox{in \ensuremath{\Pi},}\label{1}
\end{equation}
\begin{equation}
(i\partial_{x}-\partial_{y})u-\omega v=h\;\;\;\mbox{in\,\ \ensuremath{\Pi},}\label{2}
\end{equation}
\begin{equation}
(-i\partial_{x}+\partial_{y})v'-\omega u'=g'\;\;\;\mbox{in \ensuremath{\Pi},}\label{3}
\end{equation}
\begin{equation}
(-i\partial_{x}-\partial_{y})u'-\omega v'=h'\;\;\;\mbox{in\,\ \ensuremath{\Pi},}\label{4}
\end{equation}
supplied with boundary conditions (\ref{eq:BC1INIT}), (\ref{eq:BC2INIT}). 

In order to formulate the solvability results for this problem, we introduce
some spaces. The space $L_{\sigma}^{\pm}(\Pi)$, $\sigma>0$, consists
of all functions $g$ such that $e^{\pm\sigma x}g\in L^{2}(\Pi)$.
Then the space $X_{\sigma}^{\pm}$ contains $(u,v,u',v')$ such that
$e^{\pm\sigma x}(u,v,u',v')\in X_{0}$. The norms in the above spaces
are defined by $||g;L_{\sigma}^{\pm}(\Pi)||=||e^{\pm\sigma x}g;L^{2}(\Pi)||$
and $||(u,v,u',v');X_{\sigma}^{\pm}||=||e^{\pm\sigma x}(u,v,u',v');X_{0}||$
respectively.

\begin{theorem}\label{T1} Let $\omega>0$ and let $\sigma>0$ be
such that the line $\Im\lambda=\pm\sigma$ contains no $\lambda$
defined by {\rm (\ref{eq:lamsqrt})}. Then the operator%
\footnote{For the simplicity of the notation, we write $L_{\sigma}^{\pm}(\Pi)$
for both spaces of functions and vectors. Here for example we write
$L_{\sigma}^{\pm}(\Pi)$ instead of $L_{\sigma}^{\pm}(\Pi)\times L_{\sigma}^{\pm}(\Pi)\times L_{\sigma}^{\pm}(\Pi)\times L_{\sigma}^{\pm}(\Pi)$.
This notation is applied to the other spaces introduced later
as well. %
} 
\begin{equation*}
\mathcal{D}-\omega{\cal I}\;:\; X_{\sigma}^{\pm}\;\to\; L_{\sigma}^{\pm}(\Pi)
\end{equation*}
is an isomorphism.\end{theorem}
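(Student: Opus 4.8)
The plan is to reduce the system to decoupled Helmholtz-type problems on the strip and analyze them by separation of variables in the transverse variable $y$, exactly along the lines already used to produce the solutions in Sect.~\ref{sub:Solutions-with-real}. First I would observe that, by the relations (\ref{delt}) adapted to the non-homogeneous right-hand side, equations (\ref{1})--(\ref{4}) can be rewritten so that $u$ and $u'$ satisfy $(-\Delta-\omega^2)u=\omega h-(i\partial_x+\partial_y)g$ and similarly for $u'$, with $v$ and $v'$ then recovered algebraically; the boundary conditions (\ref{eq:BC1INIT}), (\ref{eq:BC2INIT}) couple the primed and unprimed components along $y=0$ and $y=L$. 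The key point is that, after the change of unknowns, the mixed boundary conditions become a self-contained elliptic boundary value problem for a $\mathbb{C}^4$-valued Helmholtz system whose transverse operator has the eigenfunctions $e^{i\kappa y}$ with $\kappa=\pi+\pi j/L$ appearing in (\ref{eq:lamsqrt}). So I would expand both the solution and the data in the basis $\{e^{i\kappa_j y}\}_{j\in\mathbb{Z}}$ (which, together with the $e^{-i\kappa_j y}$ components dictated by the structure of $\mathcal{D}$, is complete in $L^2(0,L)$ in the relevant sense).

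Next I would reduce, for each transverse mode $j$, to a one-dimensional ODE problem in $x\in\mathbb{R}$ of the form $(-\partial_x^2 + \kappa_j^2 - \omega^2)\widehat{u}_j(x) = \widehat{f}_j(x)$, i.e. $(-\partial_x^2+\lambda_j^2(\text{or }-\lambda_j^2))\widehat u_j = \widehat f_j$ with $\lambda_j$ as in (\ref{eq:lamsqrt}). On the weighted space $e^{\pm\sigma x}L^2(\mathbb{R})$ this scalar operator is invertible precisely when $\pm i\sigma$ is not a root of the symbol, i.e. when the line $\Im\lambda=\pm\sigma$ contains none of the $\lambda$ from (\ref{eq:lamsqrt}); the inverse is convolution with the explicit Green's function $\tfrac{1}{2\mu}e^{-\mu|x|}$ shifted off the real axis, and one reads off the weighted $L^2\to H^2$ bound with a constant controlled by $\mathrm{dist}(\pm i\sigma,\{\text{roots}\})$. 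Summing the modewise estimates (uniformly in $j$, using that $|\kappa_j^2-\omega^2-(\mp\sigma)^2|$ is bounded below uniformly in $j$ since the $\kappa_j$ escape to infinity) gives surjectivity and the a priori estimate $\|(u,v,u',v');X_\sigma^\pm\|\le C\|(g,h,g',h');L_\sigma^\pm(\Pi)\|$; injectivity follows because any homogeneous solution in $X_\sigma^\pm$ has only modewise components that decay like $e^{\mp\sigma x}$ in a half-line yet solve a constant-coefficient ODE with no root on the line $\Im\lambda=\pm\sigma$, hence vanish. The ellipticity already recorded (Proposition \ref{PropEllip} and the remark that the problem is elliptic) upgrades the mode-by-mode $H^2$ control to membership in $X_0$ after the exponential weight is removed.

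The main obstacle, I expect, is bookkeeping the coupling in the boundary conditions: unlike a scalar Dirichlet problem, (\ref{eq:BC1INIT})--(\ref{eq:BC2INIT}) tie together the $\mathbf{K}$ and $\mathbf{K}'$ components and carry the phase $e^{-i2\pi L}$, so the transverse spectral problem is not literally the Dirichlet Laplacian on $(0,L)$ but a coupled system whose spectrum is exactly $\{\kappa_j^2\}$; I would need to check carefully that this transverse problem is self-adjoint with that spectrum and that its eigenprojections are uniformly bounded, so that Parseval-type identities hold and the modewise estimates sum correctly. A secondary technical point is handling the transition between the regions where $\lambda_j$ is real (finitely many $j$, giving oscillatory, non-decaying homogeneous solutions that must be excluded by the weight) and where $\lambda_j$ is imaginary (the generic large-$j$ case): the hypothesis that $\Im\lambda=\pm\sigma$ avoids all roots is exactly what makes both regimes uniformly invertible, and I would make this uniformity explicit. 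Everything else — recovering $v,v'$ from $u,u'$, verifying the recovered quadruple lies in $X_\sigma^\pm$, and checking continuity of the inverse — is routine once the modewise picture is in place.
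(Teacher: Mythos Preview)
Your approach is sound but takes a genuinely different route from the paper. The paper does not argue directly at all: it observes that the problem is a first-order operator pencil in a cylinder, sets $l=1$, $H_0=L^2(0,L)^4$, and $H_1$ equal to the $H^1$-subspace cut out by the boundary conditions, checks Conditions~I and~II of \cite{KozlovBV} via Proposition~\ref{PropEllip}, and then invokes Theorem~2.4.1 of \cite{KozlovBV} (alternatively Theorem~1.1 of \cite{NaPl}). The isomorphism is then immediate because the eigenvalues of the pencil are exactly the $\lambda$ in (\ref{eq:lamsqrt}), which the hypothesis keeps off the line $\Im\lambda=\pm\sigma$.

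What you outline is, in effect, a bare-hands reproof of that abstract theorem in this particular case: diagonalise transversely, invert the resulting constant-coefficient ODEs on the weighted line mode by mode, and sum. This buys transparency about where the hypothesis on $\sigma$ enters, at the price of length. Two cautions if you execute it. First, your second-order reduction puts a derivative on the data (e.g.\ $(i\partial_x+\partial_y)h$ with $h$ only in $L^2$), so the modewise right-hand sides sit in $H^{-1}$ and you recover $\widehat u_j\in H^1$, not $H^2$; this is exactly enough since $v,v'$ are first-order in $u,u'$, but the bookkeeping should reflect it. Second, the transverse spectral problem is not a Dirichlet Laplacian because of the phase $e^{-i2\pi L}$ in (\ref{eq:BC2INIT}); a clean way to get completeness and Parseval is the gauge change $u\mapsto e^{-i\pi y}u$, $u'\mapsto e^{i\pi y}u'$, which turns both boundary conditions into $\tilde u - i\tilde u'=0$ at $y=0,L$ and sends the eigenfunctions to the standard Fourier basis $e^{i\pi j y/L}$ on $(0,L)$.
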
 

\begin{proof}
The following result is a consequence of ellipticity and it follows
from Theorem 2.4.1 in \cite{KozlovBV}. To apply Theorem 2.4.1 in \cite{KozlovBV}, we put that $l=1$, $H_0=L_2(0,L)^4$ and $H_1=\{(\mathcal{U},\mathcal{V},\mathcal{U}',\mathcal{V}')\in H^1(0,L)^4:\ \mathcal{U}(0)-i\mathcal{U}'(0)=0,\  -i\mathcal{V}(0)+\mathcal{V}'(0)=0,\ 
e^{-i2\pi L}\mathcal{U}(L)-i\mathcal{U}'(L)=0,\ -ie^{-i2\pi L}\mathcal{V}(L)+\mathcal{V}'(L)=0\}$, then by Prop. \ref{PropEllip}, Condition I on p.27 and Condition II on p.28 in \cite{KozlovBV} are fulfilled and we can apply Theorem 2.4.1 in \cite{KozlovBV}. The assertion of this theorem can be obtained also from Theorem 1.1 in \cite{NaPl}.
\end{proof}

In what follows we assume that the integer $N$ defining a threshold
$\omega_{N}$ is fixed. Then according to Proposition \ref{Prop1},
there exist $\epsilon_{0}$ and $\gamma_{N}$ such that for $\omega=\omega_{\epsilon}$
with some small positive $\epsilon\in[0,\epsilon_{0}]$, the strip
$|\Im\lambda|\leq\gamma_{N}$ contains only real wavenumbers $\pm\lambda_{j}$,
$j=1,\ldots,N-1$ and two imaginary $\pm\lambda_{\epsilon}$ and the corresponding
wavefunctions are

\[
{\bf w}_{1}^{\pm},{\bf w}_{2}^{\pm},\ldots,{\bf w}_{N-1}^{\pm},{\bf w}_{N}^{\pm}.
\]
All waves but last two are oscillatory. Those last waves are of exponential
growth. 

\begin{theorem}\label{T1s} Let $\gamma_{N}$ and $\epsilon_{0}$
be the same positive numbers as in Proposition \ref{Prop1} and let
also $(g,h,g',h')\in L_{\gamma}^{+}(\Pi)\cap L_{\gamma}^{-}(\Pi)$.
Denote by $(u^{\pm},v^{\pm},u{}^{'\pm},v{}^{'\pm})\in X_{\gamma}^{\pm}$
the solution of problem {\rm(\ref{1})}, {\rm (\ref{2})}, {\rm(\ref{3})}, {\rm(\ref{4})}
with the boundary conditions {\rm(\ref{eq:BC1INIT})}, {\rm(\ref{eq:BC2INIT})},
which exist according to Theorem {\rm\ref{T1}}. Then 
\begin{equation*}
(u^{+},v^{+})=(u^{-},v^{-})+\sum_{j=1}^{N}C_{j}^{+}{\bf w}_{j}^{+}+\sum_{j=1}^{N}C_{j}^{-}{\bf w}_{j}^{-},
\end{equation*}
where 
\begin{equation*}
-iC_{j}^{+}=\int_{\Pi}\,(g,h,g',h')\cdot\overline{{\bf w}_{j}^{+}}dxdy,\;\; iC_{j}^{-}=\int_{\Pi}\,(g,h,g',h')\cdot\overline{{\bf w}_{j}^{-}}dxdy.
\end{equation*}
\end{theorem}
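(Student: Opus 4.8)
The plan is to compare the two solutions $(u^+,v^+,u'^+,v'^+)\in X_\gamma^+$ and $(u^-,v^-,u'^-,v'^-)\in X_\gamma^-$ which both solve the same inhomogeneous problem \eqref{1}--\eqref{4} with the same right-hand side and the same boundary conditions \eqref{eq:BC1INIT}, \eqref{eq:BC2INIT}. Their difference $W=(u^+-u^-,\,v^+-v^-,\,u'^+-u'^-,\,v'^+-v'^-)$ therefore solves the homogeneous problem $(\mathcal D-\omega\mathcal I)W=0$ in $\Pi$ with the boundary conditions, and, being the difference of a function decaying like $e^{-\gamma x}$ as $x\to+\infty$ and one decaying like $e^{\gamma x}$ as $x\to-\infty$, it has at most exponential growth $O(e^{\gamma|x|})$ in both directions. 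First I would invoke the elliptic regularity / separation-of-variables analysis from Sect.~\ref{sub:Problem-statement}: any solution of \eqref{eq:DiractotnoP}, \eqref{eq:BC1INIT}, \eqref{eq:BC2INIT} of at most exponential growth with $|\Im\lambda|\le\gamma_N$ is a finite linear combination of the waves whose wavenumbers lie in that strip, i.e.
\[
W=\sum_{j=1}^{N}C_j^+\,{\bf w}_j^{+}+\sum_{j=1}^{N}C_j^-\,{\bf w}_j^{-}
\]
for some constants $C_j^\pm\in\mathbb C$ (here the $j=N$ terms are the exponentially growing/decaying waves ${\bf w}_N^\pm$ of Sect.~\ref{sub:Biorthogonality-complex}, and all others are the oscillatory ${\bf w}_j^\pm$ of Sect.~\ref{sub:Biorthogonality-real}). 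This gives the displayed identity for $(u^+,v^+)$ once the $C_j^\pm$ are identified.

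Next I would compute the constants by testing against the symplectic form $q$ of Sect.~\ref{sub:Symplectic-form}. Fix one wave ${\bf w}_k^\theta$. Since $q$ is $a$-independent, evaluate $q(W,{\bf w}_k^\theta)=q_b(W,{\bf w}_k^\theta)-q_a(W,{\bf w}_k^\theta)$ on a slab $\Pi_{a,b}=(0,L)\times(a,b)$ and apply the same Green-type identity that was used to establish the invariance of $q$, but now to the pair $(W,{\bf w}_k^\theta)$ where $W$ solves the \emph{inhomogeneous} equation with data $(g,h,g',h')$ and ${\bf w}_k^\theta$ solves the homogeneous one. The computation yields a boundary term equal to $q$ evaluated at two $y$-levels \emph{plus} a volume integral $\int_{\Pi}(g,h,g',h')\cdot\overline{{\bf w}_k^\theta}\,dxdy$ coming from the mismatch $(\mathcal D-\omega\mathcal I)W=(g,h,g',h')$. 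On the left one uses that $W=u^+-u^-$ where $u^+$ decays as $x\to+\infty$ and $u^-$ decays as $x\to-\infty$, so in $q(u^+,{\bf w}_k^\theta)$ the contribution at $x\to+\infty$ vanishes and only the oscillatory/growth behaviour matters, and symmetrically for $u^-$; carrying out this standard "integration by parts in $x$" on the full strip $\Pi$ and using the biorthogonality relations $q({\bf w}_j^\tau,{\bf w}_k^\theta)=\tau i\,\delta_{j,k}\delta_{\tau,\theta}$ from \eqref{eq:bioosc} and Sect.~\ref{sub:Biorthogonality-complex} isolates a single coefficient:
\[
\theta i\,C_k^\theta \;=\; \pm\!\int_{\Pi}(g,h,g',h')\cdot\overline{{\bf w}_k^\theta}\,dxdy,
\]
with the sign dictated by $\theta$, which is exactly $-iC_j^+=\int_\Pi(g,h,g',h')\cdot\overline{{\bf w}_j^+}\,dxdy$ and $iC_j^-=\int_\Pi(g,h,g',h')\cdot\overline{{\bf w}_j^-}\,dxdy$. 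One must check that $(g,h,g',h')\in L_\gamma^+(\Pi)\cap L_\gamma^-(\Pi)$ together with ${\bf w}_k^\theta=O(e^{\gamma|x|})$ makes this volume integral absolutely convergent: the integrand is $O(e^{-2\gamma|x|})$ near each end since one factor decays like $e^{\mp\gamma x}$ from membership in $L_\gamma^\mp$ and the wave grows at most like $e^{\gamma|x|}$ — actually one has a strict gap because $\gamma<\gamma_N$ separates the relevant wavenumbers from the next ones — so convergence is fine; this is the point where Proposition \ref{Prop1} is used.

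The main obstacle is the justification of the boundary-term bookkeeping at $x\to\pm\infty$, i.e.\ making rigorous the claim that the "flux through the cross-sections $\{x=\pm R\}$" of the pairing between $W$ and ${\bf w}_k^\theta$ tends to the asserted value as $R\to\infty$. For the oscillatory waves ($j\le N-1$) the cross-flux does not decay, so one must use the decay of $u^\pm$ at the respective infinity to kill the non-decaying part and be left only with the jump encoded by the $C_j^\pm$; for the exponentially growing waves ${\bf w}_N^\pm$ one must pair the $e^{+\gamma x}$-growth of the test wave against the $e^{-\gamma x}$-decay of $u^+$ (and symmetrically), which again closes. The cleanest way to handle this uniformly is to first note that $W$, being a solution of at most exponential growth $\le\gamma$ inside the strip, lies in $X^+_{\gamma'}\oplus\mathrm{span}\{{\bf w}_j^\pm\}$ modulo the waves for any $\gamma'$ with $\gamma<\gamma'<\gamma_N$ — i.e.\ the "remainder" actually decays strictly faster than $e^{-\gamma x}$ — so all the limiting cross-section integrals against $\overline{{\bf w}_k^\theta}$ (which grows at most like $e^{\gamma x}$) vanish, and only the explicitly computed pairings $q({\bf w}_j^\tau,{\bf w}_k^\theta)$ survive. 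Once that asymptotic decomposition is in hand, the identification of the coefficients is a direct application of \eqref{eq:bioosc}, and the statement for $(u^+,v^+)$ (the components associated with the $\mathbf K$ valley) follows by reading off the first two entries of the vector identity.
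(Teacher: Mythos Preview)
The paper's own proof is a one-line citation: it identifies the ${\bf w}_j^\pm$ with the canonical solutions in Kozlov--Maz'ya and invokes Proposition~2.8.1 there. Your plan is essentially the standard hands-on argument that this citation unpacks to (difference of the $X_\gamma^\pm$-resolvents is a finite linear combination of the waves in the strip, with coefficients read off by the Green/symplectic pairing), so in spirit you are doing the same thing, just explicitly rather than by reference.

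There is, however, a genuine slip in your write-up. You set $W=(u^+-u^-,\dots)$ and then write that ``$W$ solves the \emph{inhomogeneous} equation with data $(g,h,g',h')$'' and that the volume term $\int_\Pi (g,h,g',h')\cdot\overline{{\bf w}_k^\theta}$ comes from $(\mathcal D-\omega\mathcal I)W$. This is wrong: $W$ is the difference of two solutions with the \emph{same} right-hand side, so $(\mathcal D-\omega\mathcal I)W=0$, and pairing $W$ with ${\bf w}_k^\theta$ produces no volume term at all. The volume integral must come from applying the Green identity on $\Pi_{a,b}$ to the pair $(u^-,{\bf w}_k^\theta)$ (or $(u^+,{\bf w}_k^\theta)$), where one factor genuinely solves the inhomogeneous problem:
\[
\int_{\Pi_{a,b}} (g,h,g',h')\cdot\overline{{\bf w}_k^\theta}\,dxdy
= q_b(u^-,{\bf w}_k^\theta)-q_a(u^-,{\bf w}_k^\theta).
\]
Then send $a\to-\infty$ (the term vanishes by the decay of $u^-$), and for $b\to+\infty$ write $u^-=u^+-W$: the $u^+$ part vanishes by decay at $+\infty$, while $q_b(W,{\bf w}_k^\theta)=q(W,{\bf w}_k^\theta)=\theta i\,C_k^\theta$ is $b$-independent by the biorthogonality relations. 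This yields $\int_\Pi (g,h,g',h')\cdot\overline{{\bf w}_k^\theta}=-\theta i\,C_k^\theta$, which is exactly the claimed formula. Two further cosmetic points: your ``$y$-levels'' should be ``$x$-levels'' (cross-sections are at fixed $x$), and the paper's shorthand $(u^+,v^+)=(u^-,v^-)+\cdots$ stands for all four components, not the $\mathbf K$-valley pair only.
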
 
\begin{proof}
The functions ${\bf w}_{j}^{\pm}$ correspond to functions (2.11) and (2.12) on p.30 from \cite{KozlovBV}, and so by Prop. 2.8.1 in \cite{KozlovBV} we obtain the claim in the theorem.
\end{proof}

\section{The Dirac equation with potential\label{sec:Dirac-withP}}

\subsection{Problem statement}

Here we examine the problem with a potential, prove its solvability result
and asymptotic formulas for the solutions. Consider the nanoribbon with
a potential: 
\begin{equation}
\mathcal{D}\left(\begin{array}{c}
u\\
v\\
u'\\
v'
\end{array}\right)+\delta\mathcal{P}\left(\begin{array}{c}
u\\
v\\
u'\\
v'
\end{array}\right)=\omega\left(\begin{array}{c}
u\\
v\\
u'\\
v'
\end{array}\right),\label{Dirac}
\end{equation}
with the boundary conditions (\ref{eq:BC1INIT}), (\ref{eq:BC2INIT}).
Here, ${\cal D}$ is defined in (\ref{def:dirac}), $\mathcal{P}=\mathcal{P}(x,y)$
is a bounded, continuous real-valued function with compact support
in $\overline{\Pi}$ and $\delta$ is a small parameter. We assume
in what follows that 
\begin{equation*}
\mbox{supp}\mathcal{P}\subset[-R_{0},R_{0}]\times[0,1]\:\:\:\mbox{and}\:\:\:\sup_{(x,y)\in\Pi}|\mathcal{P}|\leq1,
\end{equation*}
where $R_{0}$ is a fixed positive number.

We assume that $N$, $\gamma$, $\epsilon_{0}$ are fixed and
\begin{equation*}
\omega=\omega_{\epsilon},\;\;\;\mbox{where}\;\;\;\omega_{\epsilon}=\omega_{N}-\epsilon,
\end{equation*}
with $\epsilon\in[0,\epsilon_{0}]$ according to \ref{Prop1}(i).

Since the norm of the multiplication by $\delta\mathcal{P}$ operator
in $L^{2}(\Pi)$ is less than $\delta$ we derive from Theorem \ref{T1}
the following

\begin{theorem}\label{T1a} The operator 
\begin{equation*}
{\cal D}+(\delta\mathcal{P}-\omega_{\epsilon}){\cal I}\;:\; X_{\gamma}^{\pm}\;\to\; L_{\gamma}^{\pm}(\Pi)
\end{equation*}
is an isomorphism for $|\delta|\leq\delta_{0}$, where $\delta_{0}$
is a positive constant depending on the norm on the inverse operator
$({\cal D}-\omega_{\epsilon}I)^{-1}\,:\, L_{\gamma}^{\pm}(\Pi)\to L_{\gamma}^{\pm}(\Pi)$.
\end{theorem}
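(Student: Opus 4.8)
The plan is to derive Theorem \ref{T1a} as a perturbation of the isomorphism established in Theorem \ref{T1}. First I would write the operator in question as
\[
{\cal D}+(\delta\mathcal{P}-\omega_{\epsilon}){\cal I}=({\cal D}-\omega_{\epsilon}{\cal I})\big({\cal I}+\delta({\cal D}-\omega_{\epsilon}{\cal I})^{-1}\mathcal{P}\big),
\]
which makes sense once we know, by Theorem \ref{T1}, that $({\cal D}-\omega_{\epsilon}{\cal I})^{-1}:L_{\gamma}^{\pm}(\Pi)\to X_{\gamma}^{\pm}$ exists and is bounded (note here we use that $\gamma=\gamma_N$ together with Proposition \ref{Prop1}(1) guarantees the line $\Im\lambda=\pm\gamma$ avoids all $\lambda$ from (\ref{eq:lamsqrt}), so the hypothesis of Theorem \ref{T1} holds with $\sigma=\gamma$). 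Then it suffices to show the factor ${\cal I}+\delta({\cal D}-\omega_{\epsilon}{\cal I})^{-1}\mathcal{P}$ is an isomorphism of $X_{\gamma}^{\pm}$ onto itself.

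The key observation is that multiplication by $\mathcal{P}$ is bounded from $X_{\gamma}^{\pm}$ into $L_{\gamma}^{\pm}(\Pi)$ with norm at most $\sup|\mathcal{P}|\le 1$, and actually at most $\delta$ after including the factor $\delta$: indeed, for $(u,v,u',v')\in X_{\gamma}^{\pm}$ one has $\|e^{\pm\gamma x}\delta\mathcal{P}(u,v,u',v');L^{2}(\Pi)\|\le|\delta|\,\|e^{\pm\gamma x}(u,v,u',v');L^{2}(\Pi)\|\le|\delta|\,\|(u,v,u',v');X_{\gamma}^{\pm}\|$, since the $X_\gamma^\pm$-norm controls the weighted $L^2$-norm of the components. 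Hence the composition $\delta({\cal D}-\omega_{\epsilon}{\cal I})^{-1}\mathcal{P}:X_{\gamma}^{\pm}\to X_{\gamma}^{\pm}$ has operator norm at most $|\delta|\,\|({\cal D}-\omega_{\epsilon}{\cal I})^{-1};L_{\gamma}^{\pm}(\Pi)\to X_{\gamma}^{\pm}\|$. Choosing $\delta_{0}$ to be (say, half) the reciprocal of this last norm, the Neumann series $\sum_{n\ge0}(-\delta)^{n}\big(({\cal D}-\omega_{\epsilon}{\cal I})^{-1}\mathcal{P}\big)^{n}$ converges in the operator norm of $X_{\gamma}^{\pm}$ whenever $|\delta|\le\delta_{0}$, giving a bounded inverse of ${\cal I}+\delta({\cal D}-\omega_{\epsilon}{\cal I})^{-1}\mathcal{P}$. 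Composing with the isomorphism ${\cal D}-\omega_{\epsilon}{\cal I}$ yields that ${\cal D}+(\delta\mathcal{P}-\omega_{\epsilon}){\cal I}$ is an isomorphism from $X_{\gamma}^{\pm}$ onto $L_{\gamma}^{\pm}(\Pi)$, with $\delta_{0}$ determined solely by the norm of $({\cal D}-\omega_{\epsilon}{\cal I})^{-1}$ as claimed.

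There is no serious obstacle here; this is the standard small-perturbation-of-an-isomorphism argument. The only point requiring a little care is bookkeeping with the two weighted spaces (the $+$ and the $-$ case) simultaneously, and checking that the compact support of $\mathcal{P}$ together with its boundedness really does give a uniformly bounded multiplication operator on \emph{both} $X_\gamma^+$ and $X_\gamma^-$ — which it does, since on $\mathrm{supp}\,\mathcal{P}\subset[-R_0,R_0]\times[0,1]$ the weights $e^{\pm\gamma x}$ are bounded above and below, so the weighted and unweighted norms are comparable there and the estimate $\|\delta\mathcal{P}\,\cdot\|\le|\delta|\,\|\cdot\|$ for the multiplication operator in $L^2$ transfers verbatim to the weighted spaces. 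One should also note that $\delta_0$ depends on $\epsilon$ only through $\|({\cal D}-\omega_\epsilon{\cal I})^{-1}\|$, which is uniformly bounded for $\epsilon\in[0,\epsilon_0]$ by Theorem \ref{T1} and Proposition \ref{Prop1}(1), so in fact $\delta_0$ may be taken independent of $\epsilon$ in this range.
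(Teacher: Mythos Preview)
Your argument is correct and is exactly the small-perturbation argument the paper has in mind: the paper does not spell out a proof but simply remarks, immediately before the theorem, that ``the norm of the multiplication by $\delta\mathcal{P}$ operator in $L^{2}(\Pi)$ is less than $\delta$'' and that the result then follows from Theorem~\ref{T1}. Your Neumann-series factorisation $({\cal D}-\omega_\epsilon{\cal I})\big({\cal I}+\delta({\cal D}-\omega_\epsilon{\cal I})^{-1}\mathcal{P}\big)$ is the standard way to make that sentence precise, and your remarks on the compact support of $\mathcal{P}$ and the uniformity in $\epsilon$ are appropriate refinements.
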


We introduce two new spaces for $\gamma>0$ 
\[
{\cal H}_{\gamma}^{+}=\{(u,v,u',v')\;:\;(u,v,u',v')\in X_{\gamma}^{+}\cap X_{\gamma}^{-}\}
\]
and 
\[
{\cal H}_{\gamma}^{-}=\{(u,v,u',v')\;:\;(u,v,u',v')\in X_{\gamma}^{+}\cup X_{\gamma}^{-}\}.
\]
The norms in this spaces are defined by 
\[
||(u,v,u',v');{\cal H}_{\gamma}^{\pm}||^{2}=\int_{\Pi}e^{\pm2\gamma|x|}\Big(|u|^{2}+|v|^{2}+|u'|^{2}+|v'|^{2}+|{\cal D}(u,v,u',v')^{t}|^{2}\Big)dxdy.
\]
Note that, $\mathcal{H}_{0}^{+}=\mathcal{H}_{0}^{-}=X_{0}$, where
$X_{0}$ was introduced in Sect. \ref{sub:Problem-statement}.

Let also ${\cal L}_{\gamma}^{\pm},$ $\gamma>0$, be two weighted $L^{2}$-spaces in $\Pi$ with the norms 
\[
||(u,v,u',v');{\cal L}_{\gamma}^{\pm}||^{2}=\int_{\Pi}e^{\pm2\gamma|x|}\Big(|u|^{2}+|v|^{2}+|u'|^{2}+|v'|^{2}\Big)dxdy.
\]
We define two operators acting in the introduced spaces 
\begin{equation*}
A_{\gamma}^{\pm}=A_{\gamma}^{\pm}(\epsilon,\delta)={\cal D}+(\delta\mathcal{P}-\omega){\cal I}\,:\,{\cal H}_{\gamma}^{\pm}\rightarrow{\cal L}_{\gamma}^{\pm}.
\end{equation*}
Some important properties of these operator are collected in the following

\begin{theorem}\label{T3s} The operators $A_{\gamma}^{\pm}$ are
Fredholm and ${\rm ker}A_{\gamma}^{+}=\{0\}$, coker$A_{\gamma}^{-}=\{0\}$.
Moreover 
\[
{\rm dim}\,{\rm coker}A_{\gamma}^{+}={\rm dim}\,{\rm ker}A_{\gamma}^{-}=2N.
\]
 \end{theorem}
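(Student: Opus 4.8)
\textbf{Proof plan for Theorem \ref{T3s}.}
The plan is to combine the two one-sided isomorphism results of Theorem \ref{T1} with an abstract functional-analytic dimension count, exploiting the exponential weights and the symplectic form $q$. First I would note that $A_\gamma^\pm$ differ from the operators $\mathcal D+(\delta\mathcal P-\omega_\epsilon)\mathcal I:X_\gamma^\pm\to L_\gamma^\pm(\Pi)$ of Theorem \ref{T1a} only by a change of exponential weight: the spaces ${\cal H}_\gamma^+={\cal L}$-preimage use the weight $e^{\gamma|x|}$ (decay as $|x|\to\infty$ in \emph{both} directions), which is the intersection $X_\gamma^+\cap X_\gamma^-$, while ${\cal H}_\gamma^-$ uses $e^{-\gamma|x|}$ (growth allowed in both directions), the sum $X_\gamma^++X_\gamma^-$. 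Since by Proposition \ref{Prop1} the strip $|\Im\lambda|\le\gamma$ contains exactly the $2N$ wavenumbers $\pm\lambda_1,\dots,\pm\lambda_{N-1},\pm\lambda_\epsilon$ and no $\lambda$ lies on the lines $\Im\lambda=\pm\gamma$, the general theory of elliptic problems in cylinders (Kondrat'ev--type index theory, as in \cite{KozlovBV}, or directly from Theorem \ref{T1a} applied with both signs of weight) gives that $A_\gamma^\pm$ are Fredholm and that the index jumps by exactly the total algebraic multiplicity of the $\lambda$'s crossed when passing from weight $+\gamma$ to weight $-\gamma$, namely $2N$.

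The concrete steps would be: (1) Establish that $A_\gamma^+$ is injective. A nontrivial kernel element $w\in{\cal H}_\gamma^+$ decays exponentially in both directions; testing the equation against $w$ itself and using that $\mathcal D$ is symmetric, $\delta\mathcal P$ real, and $\omega_\epsilon$ real, together with the vanishing of boundary terms at $x=\pm\infty$ (guaranteed by the weight), shows $w$ satisfies the homogeneous problem; but then $w\in X_\gamma^+$ and Theorem \ref{T1a} (the operator into $L_\gamma^+$ is an isomorphism, hence injective) forces $w=0$. So ${\rm ker}\,A_\gamma^+=\{0\}$. (2) Dually, $A_\gamma^-$ is surjective: given $f\in{\cal L}_\gamma^-$, extend it to $f\in L_\gamma^+(\Pi)\cap L_\gamma^-(\Pi)$? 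No — rather, use that $L_\gamma^-(\Pi)\subset L_{\gamma}^+(\Pi)$ is false; instead observe ${\cal L}_\gamma^-\supset{\cal L}_\gamma^+$ and that the isomorphism of Theorem \ref{T1a} into $L_\gamma^-(\Pi)$ (weight $e^{-\gamma x}$, one-sided) combined with the one into $L_\gamma^+(\Pi)$ produces, via a partition of unity in $x$, a solution in ${\cal H}_\gamma^-=X_\gamma^++X_\gamma^-$; hence ${\rm coker}\,A_\gamma^-=\{0\}$. (3) Compute the index. Here I would invoke Proposition 2.8.1 / the index formula in \cite{KozlovBV}: the difference ${\rm ind}\,A_\gamma^- -{\rm ind}\,A_\gamma^+$ equals the number of $\lambda$ (with multiplicity) in the strip $|\Im\lambda|<\gamma$, which is $2N$ (each $\pm\lambda_j$, $j=1,\dots,N-1$, contributes $1$, and $\pm\lambda_\epsilon$ contributes $1$ each, total $2N$). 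Since ${\rm ind}\,A_\gamma^+={\rm dim\,ker}-{\rm dim\,coker}=0-{\rm dim\,coker}\,A_\gamma^+$ and ${\rm ind}\,A_\gamma^-={\rm dim\,ker}\,A_\gamma^--0$, and a symmetry/duality argument (the formal adjoint of $A_\gamma^+$ is $A_\gamma^-$ up to complex conjugation of $\mathcal P$, which is real, plus the weight flip) gives ${\rm dim\,coker}\,A_\gamma^+={\rm dim\,ker}\,A_\gamma^-$, I get $2\,{\rm dim\,ker}\,A_\gamma^-= {\rm ind}\,A_\gamma^--{\rm ind}\,A_\gamma^+=2N$, whence ${\rm dim\,coker}\,A_\gamma^+={\rm dim\,ker}\,A_\gamma^-=N$...

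Wait — the claim is $2N$, not $N$, so the counting must be that the strip $|\Im\lambda|\le\gamma$ contains these $2N$ roots but the relevant jump when going from weight $+\gamma$ to $-\gamma$ double-counts, i.e. each root is crossed from below and the total multiplicity swept is $2N$ with the cokernel of $A_\gamma^+$ carrying the full $2N$-dimensional space of "incoming/outgoing" data. The cleanest route, which I would actually follow, is to bypass the index subtraction and directly identify ${\rm coker}\,A_\gamma^+$ with the span of the $2N$ waves ${\bf w}_1^\pm,\dots,{\bf w}_{N-1}^\pm,{\bf w}_N^\pm$: by Theorem \ref{T1s}, the obstruction to solving in the doubly-decaying space ${\cal H}_\gamma^+$ is precisely that all $2N$ coefficients $C_j^\pm$ (given by pairing the right-hand side against $\overline{{\bf w}_j^\pm}$) vanish; this exhibits an explicit $2N$-dimensional cokernel and dually an explicit $2N$-dimensional kernel of $A_\gamma^-$ (the same waves, now admissible because ${\cal H}_\gamma^-$ permits exponential growth and the added potential term is a compact perturbation killed by the weight at infinity). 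Thus ${\rm dim\,coker}\,A_\gamma^+={\rm dim\,ker}\,A_\gamma^-=2N$. \textbf{The main obstacle} is step (2)–(3): carefully justifying that perturbing by $\delta\mathcal P$ (compactly supported, hence a relatively compact perturbation in these weighted spaces) does not change the Fredholm index or the dimensions computed for $\delta=0$, and that the explicit wave families remain a basis of kernel/cokernel — this is where the smallness of $\delta$ from Theorem \ref{T1a} and the stability of the Fredholm index under compact perturbations do the work.
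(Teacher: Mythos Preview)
The paper gives no proof here at all: it simply refers to Theorem~3.2 of \cite{TrappedArXiv}. So there is nothing concrete to compare your argument against, and your overall strategy (weighted-space Fredholm theory for elliptic problems in a cylinder, \`a la \cite{KozlovBV,NaPl}, plus a compact perturbation by $\delta\mathcal P$) is exactly the standard one that the reference presumably carries out.

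One genuine slip to flag: your index-subtraction argument is off by a factor of two, and you noticed the symptom but not the cause. The cylinder $\Pi$ has \emph{two} ends. Passing from the weight $e^{+\gamma|x|}$ to $e^{-\gamma|x|}$ relaxes the decay condition at $x\to+\infty$ \emph{and} at $x\to-\infty$ independently; at each end you sweep across the same $2N$ exponents $\pm\lambda_1,\dots,\pm\lambda_{N-1},\pm\lambda_\epsilon$. Equivalently, the quotient ${\cal H}_\gamma^-/{\cal H}_\gamma^+$ (modulo the image of $A_\gamma^+$) is spanned by the $4N$ localized waves $\chi_\pm{\bf w}_j^\tau$, not $2N$. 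Hence ${\rm ind}\,A_\gamma^- - {\rm ind}\,A_\gamma^+ = 4N$, and with the duality ${\rm dim\,coker}\,A_\gamma^+={\rm dim\,ker}\,A_\gamma^-$ you then get $2\,{\rm dim\,ker}\,A_\gamma^- = 4N$, i.e.\ $2N$ as claimed. Your ``cleanest route'' via Theorem~\ref{T1s}, identifying the cokernel of $A_\gamma^+$ explicitly with the $2N$ orthogonality conditions against $\overline{{\bf w}_j^\pm}$, sidesteps this counting entirely and is correct as stated; the stability under the compactly supported perturbation $\delta\mathcal P$ that you flag as the main obstacle is indeed handled by the smallness in Theorem~\ref{T1a} together with invariance of the Fredholm index.
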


\begin{proof} The proof repeats reasoning presented in the proof of
Theorem 3.2 in \cite{TrappedArXiv}.

\end{proof}

\textcolor{black}{In the next theorem and in what follows, we fix
four smooth functions, $\chi_{\pm}=\chi_{\pm}(x)$ and $\eta_{\pm}=\eta_{\pm}(x)$
such that $\chi_{+}(x)=1$, $\chi_{-}(x)=0$ for $x>R_{0}$ and $\chi_{+}(x)=0$,
$\chi_{-}(x)=1$ for $x<-R_{0}$. Then let $\eta_{\pm}(x)=1$ for
large positive $\pm x$, $\eta_{\pm}(x)=0$ for large negative $\pm x$
and $\chi_{\pm}\eta_{\pm}=\chi_{\pm}$.}

Let us derive an asymptotic formula for the solution to the perturbed
problem (\ref{Dirac}), (\ref{eq:BC1INIT}), (\ref{eq:BC2INIT}).

\begin{theorem}\label{T3s2} Let $f\in{\cal L}_{\gamma}^{+}$ and
let $w=(u,v,u',v')\in{\cal H}_{\gamma}^{-}$ be a solution to

\begin{equation}
({\cal D}+(\delta\mathcal{P}-\omega){\cal I})w=f.\label{April9b}
\end{equation}
satisfying {\rm(\ref{eq:BC1INIT})}, {\rm(\ref{eq:BC2INIT})}.Then 
\begin{equation*}
w=\eta_{+}\sum_{j=1}^{N}\sum_{\tau=\pm}C_{j}^{\tau}{\bf w}_{j}^{\tau}+\eta_{-}\sum_{j=1}^{N}\sum_{\tau=\pm}D_{j}^{\tau}{\bf w}_{j}^{\tau}+R,
\end{equation*}
where $R\in{\cal H}_{\gamma}^{+}$ . \end{theorem}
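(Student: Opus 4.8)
The plan is to deduce the asymptotic decomposition by comparing the given solution $w\in\mathcal H_\gamma^-$ with a solution that lives in the better weighted space and then identifying the obstruction to membership in $\mathcal H_\gamma^+$ with a finite linear combination of the wave functions singled out by Proposition \ref{Prop1}. First I would apply Theorem \ref{T1a}: since $f\in\mathcal L_\gamma^+\subset L_\gamma^+(\Pi)$, there is a unique $w_+\in X_\gamma^+$ solving $({\cal D}+(\delta\mathcal P-\omega){\cal I})w_+=f$ with the boundary conditions (\ref{eq:BC1INIT}), (\ref{eq:BC2INIT}); likewise a unique $w_-\in X_\gamma^-$. One checks that $w_+$ actually lies in $\mathcal H_\gamma^+$ (it decays at both ends of the strip, using that $\operatorname{supp}\mathcal P$ is compact so that far from the support the equation is just $({\cal D}-\omega){\cal I}$ and the $X_\gamma^+$ decay at $x\to+\infty$ forces, via Theorem \ref{T1} run the other way, decay at $x\to-\infty$ as well). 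The solution $w$ we are handed lies in $\mathcal H_\gamma^-$, hence in particular in $X_\gamma^-\cup X_\gamma^+$; after possibly reflecting, assume $w\in X_\gamma^-$. Then $w$ and $w_-$ solve the same equation in $X_\gamma^-$, so by uniqueness in Theorem \ref{T1a} (applied with the $-$ sign) we may take $w=w_-$.

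The core step is to compare $w_-$ with $w_+$. Their difference $W:=w_+-w_-$ satisfies the homogeneous equation $({\cal D}+(\delta\mathcal P-\omega){\cal I})W=0$ with boundary conditions (\ref{eq:BC1INIT}), (\ref{eq:BC2INIT}), and outside $[-R_0,R_0]\times[0,1]$ it solves $({\cal D}-\omega){\cal I}W=0$ with at most the growth allowed by the weights $e^{\mp\gamma|x|}$ on the two sides. By the discussion in Sections \ref{sub:Solutions-with-real} and \ref{sub:Solutions-with-complex} together with Proposition \ref{Prop1}, every solution of the homogeneous equation in the strip $|\Im\lambda|\le\gamma_N$ is a linear combination of the $2N$ wave functions ${\bf w}_1^\pm,\dots,{\bf w}_{N-1}^\pm,{\bf w}_N^\pm$; since $W$ is exponentially decaying on the left (coming from $w_-\in X_\gamma^-$) it is a genuine solution there up to that weight, and an analysis of which exponents $e^{i\lambda x}$ survive in the two half-strips shows that for $x>R_0$ we have $W=\sum_{j,\tau}C_j^\tau{\bf w}_j^\tau$ modulo a remainder in $\mathcal H_\gamma^+$, and for $x<-R_0$ we have $W=\sum_{j,\tau}D_j^\tau{\bf w}_j^\tau$ modulo such a remainder. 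Concretely I would write $w=w_+ - \chi_+\big(\sum_{j,\tau}C_j^\tau{\bf w}_j^\tau\big) - \chi_-\big(\sum_{j,\tau}D_j^\tau{\bf w}_j^\tau\big) + (\text{error})$, feed this combination back into the equation, and use Theorems \ref{T1} and \ref{T1a} to show the error lies in $\mathcal H_\gamma^+$; replacing $\chi_\pm$ by $\eta_\pm$ (which agree with $\chi_\pm$ near $\pm\infty$ and differ only on a compact set, hence change $w$ only by an $\mathcal H_\gamma^+$ term) yields the stated form with $R\in\mathcal H_\gamma^+$.

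The coefficients $C_j^\tau$, $D_j^\tau$ are then extracted by pairing with the wave functions through the symplectic form $q$ from Section \ref{sub:Symplectic-form}: because $q$ is $x$-independent and the ${\bf w}_j^\tau$ satisfy the biorthogonality relations (\ref{eq:bioosc}) and its imaginary-$\lambda$ analogue, evaluating $q$ at $x\to+\infty$ versus $x\to-\infty$ on $w$ paired against ${\bf w}_j^\tau$ isolates each coefficient, exactly as in Theorem \ref{T1s}; the discrepancy between the two limits is an integral of $f$ (and of $\delta\mathcal P w$) against $\overline{{\bf w}_j^\tau}$ over $\Pi$. The main obstacle I anticipate is the bookkeeping in the imaginary-$\lambda$ sector: the waves ${\bf w}_N^\pm$ grow exponentially, so one must be careful that the weight $\gamma=\gamma_N$ is chosen (as in Proposition \ref{Prop1}) strictly between $\Im\lambda_\epsilon$ and $\Im\lambda_{N+1}-\Im\lambda_\epsilon$, so that exactly these two growing waves, and no others, appear in the expansion and the remainder genuinely gains decay $e^{\gamma|x|}$ on both sides. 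This is really a repackaging of Theorem \ref{T1s} (which is the $\mathcal P=0$ prototype) combined with the Fredholm structure in Theorem \ref{T3s}, and the proof can be phrased as "repeat the argument of Theorem \ref{T1s} with $f$ in place of $(g,h,g',h')$ and absorb the extra term $\delta\mathcal P w\in\mathcal L_\gamma^+$ into the right-hand side."
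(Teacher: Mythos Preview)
The paper does not give a proof here; it simply cites Theorem~3.3 of \cite{TrappedArXiv}. Your closing summary---move $\delta\mathcal P w$ to the right-hand side (it lies in $\mathcal L_\gamma^+$ because $\mathcal P$ has compact support), then run the unperturbed asymptotic comparison of Theorem~\ref{T1s}---is the correct strategy and almost certainly what the referenced proof does.

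However, the detailed argument you give before that summary contains a real error. You claim that the solution $w_+\in X_\gamma^+$ furnished by Theorem~\ref{T1a} ``actually lies in $\mathcal H_\gamma^+$'' because decay at $+\infty$ together with the unperturbed equation outside $\operatorname{supp}\mathcal P$ forces decay at $-\infty$. This is false: Theorem~\ref{T1} gives isomorphisms $X_\gamma^\pm\to L_\gamma^\pm$, and for $f\in L_\gamma^+\cap L_\gamma^-$ the two resulting solutions $w_+\in X_\gamma^+$ and $w_-\in X_\gamma^-$ are \emph{different}, their difference being precisely the sum of the $2N$ waves (this is the content of Theorem~\ref{T1s}). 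If $w_+$ were in $\mathcal H_\gamma^+$ it would equal $w_-$ and all the coefficients $C_j^\tau,D_j^\tau$ would vanish. Your own next paragraph, where you analyze $W=w_+-w_-$ and find it contains the waves ${\bf w}_j^\tau$, directly contradicts the claim.

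A second, smaller issue: the step ``$w\in\mathcal H_\gamma^-$, hence $w\in X_\gamma^-\cup X_\gamma^+$; assume $w\in X_\gamma^-$, so $w=w_-$'' leans on the paper's set-theoretic definition of $\mathcal H_\gamma^-$, but the norm given for $\mathcal H_\gamma^-$ (weight $e^{-2\gamma|x|}$) allows growth at \emph{both} ends, so a general $w$ need not lie in either $X_\gamma^+$ or $X_\gamma^-$. The clean fix is to localize: $\eta_\pm w\in X_\gamma^\mp$ solves $(\mathcal D-\omega)(\eta_\pm w)=\eta_\pm(f-\delta\mathcal P w)+[\mathcal D,\eta_\pm]w$ with right-hand side in $L_\gamma^+\cap L_\gamma^-$, and then Theorem~\ref{T1s} applied on each half gives the wave expansion with remainder in $\mathcal H_\gamma^+$. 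Once you drop the erroneous $w_+\in\mathcal H_\gamma^+$ claim and argue this way, your sketch goes through.
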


\begin{proof} The proof is analogous to the proof of Theorem 3.3
in \cite{TrappedArXiv}.

\end{proof}

\subsection{The augumented scattering matrix\label{sub:Augumented-scattering-matrix}}

The scattering matrix is our main tool for the identification of the trapped
modes. Using the q-form, we define the incoming/outgoing waves. The scattering
matrix is defined via coefficients in this combination of waves. It
is important to point out that this matrix is often called augumented
as it contains coefficients of the waves exponentially growing at
infinity as well. Finally, by the end of the section we define a space
with separated asymptotics and check that it produces a unique solution
to the perturbed problem.

Let
\begin{equation*}
Q_{R}(w,\tilde{w})=q_{R}(w,\tilde{w})-q_{-R}(w,\tilde{w}).
\end{equation*}
If $w=(u,v)$ and $\tilde{w}=(\tilde{u},\tilde{v})$ are solutions
to (\ref{Dirac}), (\ref{eq:BC1INIT}), (\ref{eq:BC2INIT}) for $|y|\geq R_{0}$,
then using the Green's formula one can show that this form is independent
of $R\geq R_{0}$. We introduce two sets of localized waves at $\pm\infty$
waves, which we call outgoing and incoming (for physical interpretation
see Appendix \ref{sec:Mandelstam-radiation-condition}) 
\begin{equation}
W_{k}^{\pm}=W_{k}^{\pm}(x,y;\epsilon)=\chi_{\pm}(y){\bf w}_{k}^{\pm}(x,y)\label{W1a}
\end{equation}
and 
\begin{equation}
V_{k}^{\mp}=V_{k}^{\mp}(x,y;\epsilon)=\chi_{\pm}(y){\bf w}_{k}^{\mp}(x,y)\label{W1b}
\end{equation}
with $k=1,\ldots,N$. The reason for introducing this sets of waves
is the property 
\begin{equation}
Q_{R}(W_{k}^{\tau},W_{j}^{\theta})=i\delta_{k,j}\delta_{\tau,\theta},\;\; Q_{R}(V_{k}^{\tau},V_{j}^{\theta})=-i\delta_{k,j}\delta_{\tau,\theta}\label{W1c}
\end{equation}
where for $j,k=1,\ldots,N$. Moreover, 
\begin{equation}
Q_{R}(W_{k}^{\tau},V_{j}^{\theta})=0.\label{W1ca}
\end{equation}
Thus the sign of the $Q$ product separates waves $W$ and $V$.

In the next lemma we give a description of the kernel of the operator
$A_{\gamma}^{-}$, which is used in the definition of the scattering
matrix.

\begin{theorem}\label{tTh1} There exists a basis in $\ker A_{\gamma}^{-}$
of the form 
\begin{equation}
z_{k}^{\tau}=V_{k}^{\tau}+\sum_{\theta=\pm}\sum_{j\in\mathcal{I}_{\kappa}}{\bf S}_{k\tau}^{j\theta}W_{j}^{\theta}+\tilde{z}_{k}^{\tau},\label{zk}
\end{equation}
where $\tilde{z}_{k}^{\tau}\in{\cal H}_{\gamma}^{+}$. Moreover, the
coefficients ${\bf S}_{k\tau}^{j\theta}={\bf S}_{k\tau}^{j\theta}(\epsilon,\delta)$
are uniquely defined. \end{theorem}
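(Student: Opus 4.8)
The plan is to construct the claimed basis of $\ker A_\gamma^-$ by combining Theorem~\ref{T3s} with the asymptotic decomposition of Theorem~\ref{T3s2} and then using the $Q_R$-biorthogonality relations \eqref{W1c}, \eqref{W1ca} to pin down the coefficients. First I would recall from Theorem~\ref{T3s} that $\dim\ker A_\gamma^- = 2N$, so it suffices to exhibit $2N$ linearly independent elements of the stated form, one for each pair $(k,\tau)$ with $k=1,\ldots,N$ and $\tau=\pm$. Given such a pair, I want a solution $z_k^\tau\in\mathcal H_\gamma^-$ of the homogeneous problem $A_\gamma^- z_k^\tau = 0$ whose asymptotics at $y\to+\infty$ is $V_k^\tau$ plus outgoing waves only, and whose asymptotics at $y\to-\infty$ is outgoing waves only (i.e.\ no incoming part). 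To produce it, apply $A_\gamma^+$ (which by Theorem~\ref{T3s} has cokernel of dimension $2N$) together with the Fredholm alternative: the function $-A_\gamma^+(\chi_\mp V_k^\tau)$ — more precisely $-({\cal D}+(\delta\mathcal P-\omega))(\chi\text{-cutoff of }{\bf w}_k^\tau\text{ localized at }-\infty)$ — lies in ${\cal L}_\gamma^+$ because differentiating the cutoff kills the wave part on the support of $\mathcal P$ and leaves something compactly supported in $x$ and exponentially decaying; then Theorem~\ref{T3s2} gives a solution $R$ in ${\cal H}_\gamma^-$ whose decomposition into $\eta_\pm$-weighted sums of the ${\bf w}_j^\theta$ plus a remainder in ${\cal H}_\gamma^+$ yields, after adding back the cutoff wave, an element of the desired shape \eqref{zk}.

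Once a candidate $z_k^\tau$ of the form \eqref{zk} is in hand, the next step is to extract the coefficients ${\bf S}_{k\tau}^{j\theta}$ and prove they are uniquely determined. Here I would use the symplectic form: since $z_k^\tau$ and the $W_j^\theta$, $V_j^\theta$ all solve the perturbed homogeneous problem outside the support of $\mathcal P$ (for $|y|\ge R_0$), the form $Q_R$ evaluated on any two of them is $R$-independent and the remainder $\tilde z_k^\tau\in{\cal H}_\gamma^+$ contributes nothing to $Q_R$ in the limit (its exponential decay in $|x|$ forces the boundary integral defining $q_{\pm R}$ to vanish as we push things off, and more directly $Q_R(\tilde z_k^\tau,\cdot)$ with an exponentially growing wave still vanishes because $\tilde z_k^\tau$ decays and the integral in \eqref{eq:q} is over the bounded cross-section $(0,L)$ — one checks the $x$-asymptotics). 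Pairing \eqref{zk} with $W_j^\theta$ and using \eqref{W1c}, \eqref{W1ca} gives directly
\[
{\bf S}_{k\tau}^{j\theta} = -i\,Q_R(z_k^\tau, W_j^\theta),
\]
an explicit formula, hence uniqueness of the coefficients. Linear independence of the $z_k^\tau$ over $(k,\tau)$ follows because their leading incoming parts $V_k^\tau$ are linearly independent and any nontrivial combination lying in ${\cal H}_\gamma^+$ would be a nonzero element of $\ker A_\gamma^+=\{0\}$, a contradiction; counting $2N$ of them then shows they form a basis.

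The main obstacle I anticipate is the careful bookkeeping of which waves are "outgoing" versus "incoming" when two of the $2N$ waves ${\bf w}_N^\pm$ are exponentially growing/decaying rather than oscillatory. The cutoff functions $\chi_\pm,\eta_\pm$ and the split into $W$'s at $+\infty$ and $V$'s at $-\infty$ must be organized so that (i) the signs in the $Q_R$-orthogonality relations \eqref{W1c}–\eqref{W1ca} come out right — this rests on the normalizations $q({\bf w}_j^\tau,{\bf w}_k^\theta)=\tau i\delta_{jk}\delta_{\tau\theta}$ established in Sects.~\ref{sub:Biorthogonality-real}–\ref{sub:Biorthogonality-complex} for both the oscillatory and the imaginary-wavenumber waves — and (ii) the source term $A_\gamma^+(\chi V_k^\tau)$ genuinely lands in ${\cal L}_\gamma^+$, which is where the compact support of $\mathcal P$ and the relation $\chi_\pm\eta_\pm=\chi_\pm$ are used. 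The rest (Fredholm solvability, $R$-independence of $Q_R$, vanishing of the remainder's contribution) is routine given Theorems~\ref{T3s} and \ref{T3s2}. I would therefore structure the proof as: (1) set up the source and invoke Theorem~\ref{T3s2} to get a solution of shape \eqref{zk}; (2) compute coefficients via $Q_R$-pairing and conclude uniqueness; (3) count dimensions and check linear independence against $\ker A_\gamma^+=\{0\}$.
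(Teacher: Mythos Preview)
The paper itself does not give a self-contained proof here; it simply cites Theorem~3.4 of \cite{TrappedArXiv}. Your outline is the standard argument one expects to find there, and the key ingredients you identify---the dimension count from Theorem~\ref{T3s}, the asymptotic splitting from Theorem~\ref{T3s2}, the $Q_R$-biorthogonality \eqref{W1c}--\eqref{W1ca}, and the triviality of $\ker A_\gamma^+$---are exactly the right ones. The coefficient formula ${\bf S}_{k\tau}^{j\theta}=-i\,Q_R(z_k^\tau,W_j^\theta)$ and the uniqueness/linear-independence arguments are correct.

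One organizational point deserves tightening. In your step~(1) you produce a kernel element by setting $f=-A V_k^\tau\in\mathcal L_\gamma^+$ and solving $Aw=f$ in $\mathcal H_\gamma^-$ (this uses surjectivity of $A_\gamma^-$, not the cokernel of $A_\gamma^+$). The resulting $z_k^\tau=V_k^\tau+w$ then has, via Theorem~\ref{T3s2}, an expansion containing \emph{both} $W$- and $V$-terms, so it is not yet of the form~\eqref{zk} with a single incoming wave $V_k^\tau$. The clean fix is to bypass the explicit source construction: take any basis of the $2N$-dimensional kernel, apply Theorem~\ref{T3s2} with $f=0$ to write each element as $\sum a\,V+\sum b\,W+R$, and observe that the ``$V$-coefficient'' map $\ker A_\gamma^-\to\mathbb C^{2N}$ is injective. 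Indeed, if all $a$'s vanish then $Q_R(z,z)=0$ forces $i\sum|b|^2=0$ (the cross terms with $R\in\mathcal H_\gamma^+$ vanish as $R\to\infty$ because the waves grow at rate $|\lambda_\epsilon|<\gamma$ while the remainder decays at rate $\gamma$), whence $z\in\ker A_\gamma^+=\{0\}$. Bijectivity then yields the basis~\eqref{zk}. You have all of these pieces; they just need to be reordered. (Minor: your asymptotics should be in $x\to\pm\infty$, not $y$; the infinite direction of $\Pi$ is the $x$-axis.)
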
 

\begin{proof} The proof repeats the reasoning presented in the proof
of Theorem 3.4 in \cite{TrappedArXiv}.

\end{proof}

The scattering matrix ${\bf S}$ is defined through the kernel of operator
$A_{\gamma}^{-}$ that is through the formula (\ref{zk}).

\subsection{The block notation}

An important role in the construction of a trapped mode, is played by
a part of the scattering matrix defined within the block notation.

Let us write 
\[
{\bf W}=({\bf W}_{\bullet},{\bf W}_{\dagger}),{\bf \;\;\; V}=({\bf V}_{\bullet},{\bf V}_{\dagger}),
\]
where 
\[
{\bf W}_{\bullet}=(W_{1}^{+},W_{1}^{-},\ldots,W_{N-1}^{+},W_{\text{N-1}}^{-}),\;\;{\bf V}_{\bullet}=(V_{1}^{+},V_{1}^{-},\ldots,V_{N-1}^{+},V_{N-1}^{-}),
\]
and
\[
{\bf W}_{\dagger}=(W_{N}^{+},W_{N}^{-}),\:\:\:{\bf V}_{\dagger}=(V_{N}^{+},V_{N}^{-}).
\]

Equation (\ref{zk}) in the vector form reads 
\begin{equation}
{\bf z}={\bf V}+{\bf S}{\bf W}+{\bf r}\label{z_sol}
\end{equation}
with 
\begin{equation*}
{\bf z}=({\bf z}_{\bullet},{\bf z}_{\dagger}),\;\;\;{\bf r}=({\bf r}_{\bullet},{\bf r}_{\dagger}),
\end{equation*}
and
\begin{equation}
{\bf z}_{\bullet}=(z_{1}^{+},z_{1}^{-},..,z_{N-1}^{+},z_{N-1}^{-})\in{\cal H}_{\gamma}^{-},\;\;\;{\bf r}_{\bullet}=(r_{1}^{+},r_{1}^{-},..,r_{N-1}^{+},r_{N-1}^{-})\in{\cal H}_{\gamma}^{+},
\end{equation}

\[
{\bf z}_{\dagger}=(z_{N}^{+},z_{N}^{-})\in{\cal H}_{\gamma}^{-},\;\;\;{\bf r}_{\dagger}=(r_{N}^{+},r_{N}^{-})\in{\cal H}_{\gamma}^{+}.
\]
Here both vectors ${\bf z}$ and $r$ have $2N$ elements. The matrix
${\bf S}={\bf S}(\epsilon,\delta)$ is written in the block form 
\begin{equation*}
\mathcal{S}=\left(\begin{array}{cc}
{\bf S}_{\bullet\bullet} & {\bf S}_{\bullet\dagger}\\
{\bf S}_{\dagger\bullet} & {\bf S}_{\dagger\dagger}
\end{array}\right).
\end{equation*}
The symmetry transformations $T_{1},T_{2},T_{3}$ applied to waves
(\ref{eq:soluv}), (\ref{eq:WNPWNM}) gives 
\[
T_{1}\mathbf{w}_{j}^{\tau}=i\mathbf{w}_{j}^{-\tau},\;\;\; T_{3}\mathbf{w}_{j}^{\tau}=e^{i\kappa_{j}L}\mathbf{w}_{j}^{-\tau},\;\;\; j=1,2,\ldots,N,\;\tau=\pm,
\]
 
\[
T_{2}\mathbf{w}_{j}^{\tau}=-\frac{\tau\lambda_{j}+i\kappa_{j}}{\omega}\mathbf{w}_{j}^{-\tau},\;\;\; j=1,2,\ldots,N-1,\;\tau=\pm
\]
and

\[
T_{2}\mathbf{w}_{N}^{+}=\frac{\lambda_{\epsilon}^{2}-1}{2\omega_{\epsilon}}\mathbf{w}_{N}^{+}+\frac{-2i\kappa_{N}-1-\lambda_{\epsilon}^{2}}{2\omega_{\epsilon}}\mathbf{w}_{N}^{-},
\]
\[
T_{2}\mathbf{w}_{N}^{-}=\frac{-2i\kappa_{N}+1+\lambda_{\epsilon}^{2}}{2\omega_{\epsilon}}\mathbf{w}_{N}^{+}+\frac{-\lambda_{\epsilon}^{2}+1}{2\omega_{\epsilon}}\mathbf{w}_{N}^{-}.
\]
The symmetry $T_{1}$ leads to important properties of the matrix $S$,
that are used later, in Sect. \ref{sub:Proof-of-Theorem1}, namely

\begin{equation*}
S_{k\tau}^{k\tau}=S_{k(-\tau)}^{k(-\tau)},\;\;\; S_{k\tau}^{j\theta}=S_{j(-\theta)}^{k(-\tau)},\;\;\; k,\, j=1,\ldots,N,\;\; k\neq j,\;\;\tau,\,\theta=\pm,
\end{equation*}
that is illustrated in colorful bullets in Figure \ref{fig:matrixS}.
In the case $\mathcal{P}(x,y)=\mathcal{P}(x,L-y)$, the symmetry $T_{3}$
implies 
\[
S_{N\tau}^{j\theta}=0\;\;\;\mbox{for}\;\;\; N-j\;\mbox{being an odd number}
\]
and 
\[
S_{N-}^{k+}=S_{N+}^{k-},\;\;\; S_{k+}^{N-}=S_{k-}^{N+}.
\]

The relations for symmetry $T_{2}$ require the assumption $\mathcal{P}(x,y)=\mathcal{P}(-x,y)$
and are more complicated. In our construction of a potential introducing
a trapped mode in Sect. \ref{sub:Proof-of-Theorem1}, we use the symmetries
$T_{1}$ and $T_{3}$ assuming $\mathcal{P}(x,y)=\mathcal{P}(x,L-y)$;
however we do not require $\mathcal{P}(x,y)=\mathcal{P}(-x,y)$.

Relations (\ref{W1c}) and (\ref{W1ca}) take the form 
\begin{equation}
Q({\bf W},{\bf W})=i\mathbb{I}\ ,\ \ Q({\bf V},{\bf V})=-i\mathbb{I}\ \ \ Q({\bf W},{\bf V})=\mathbb{O}.\label{eq:Qprop}
\end{equation}
where $\mathbb{I}$ is the identity matrix and $\mathbb{O}$ is the
null matrix of appropriate size.

\begin{figure}
\centering{}\includegraphics[scale=0.6]{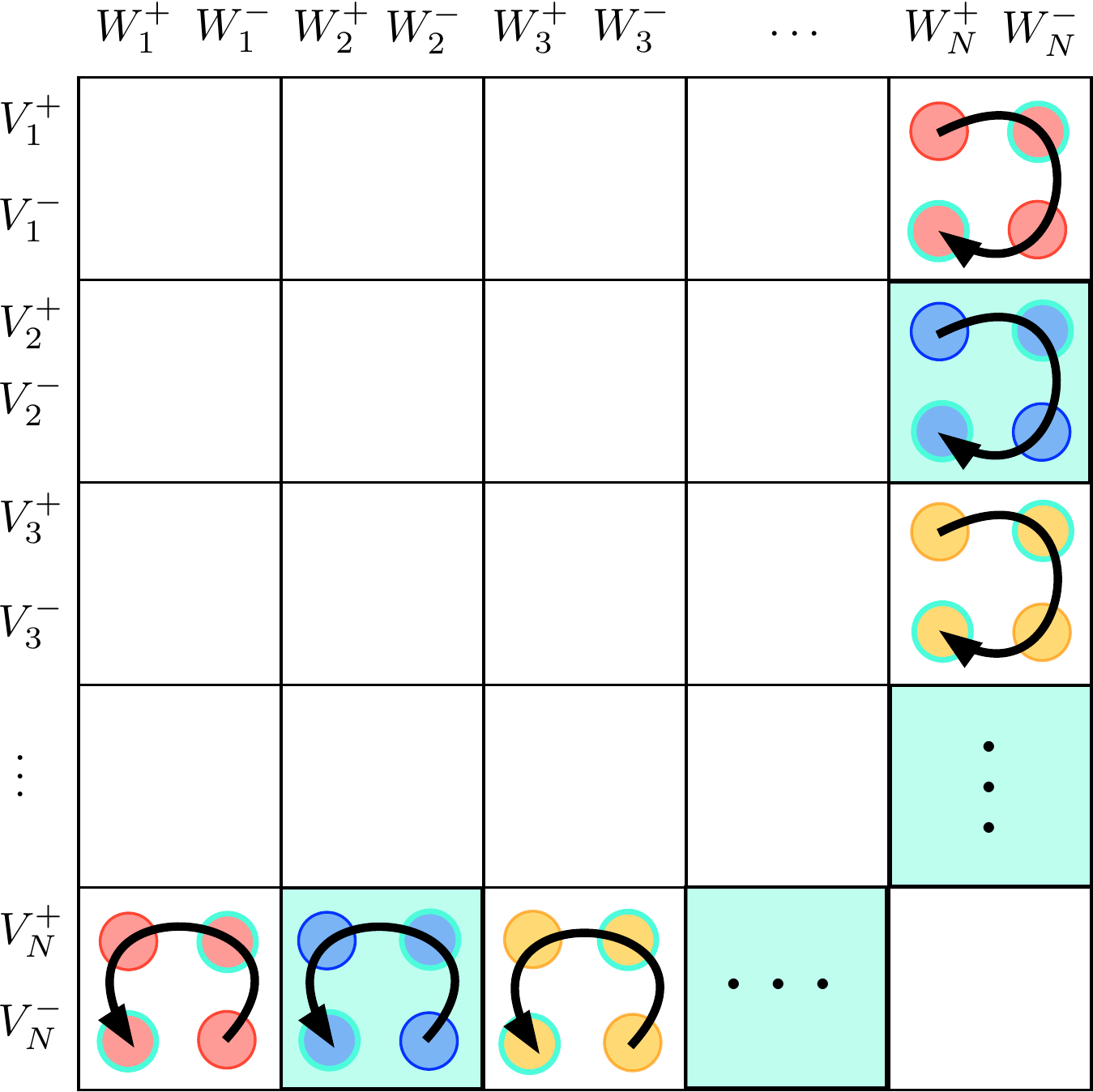}
\caption{{\footnotesize Symmetries in the augumented scattering matrix $S$ (only the part
with $S_{\dagger\bullet}$ and $S_{\bullet\dagger}$ are considered). According to symmetry $T_{1}$,
the left bottom square has permuted elements of the right top square:
they are indicated in red and the permutation follows the arrows,
that is the first element is placed by the beginnings of arrows and
the last by the ends. Similar permutations for other squares are indicated
in blue and yellow. From symmetry $T_{3}$ follows that all the elements
in the green squares (every second square but not the right bottom
square) are zero; moreover elements within square with red elements
that have green borders are equal, similarly for blue and yellow elements.}
\label{fig:matrixS}}
\end{figure}

\subsection{Properties of the scattering matrix}

\begin{prop}\label{sdeltaunitary} The scattering matrix ${\bf S}$
is unitary. \end{prop}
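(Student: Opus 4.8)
The plan is to exploit the $Q$-product, whose signature separates the ``incoming'' waves $\mathbf V$ from the ``outgoing'' waves $\mathbf W$, together with the fact that the elements $z_k^\tau$ of $\ker A_\gamma^-$ decay exponentially after subtracting their wave parts. First I would observe that any two solutions $z_k^\tau$, $z_j^\theta$ of $A_\gamma^- z = 0$ lie in $\mathcal H_\gamma^-$, so in particular $z_k^\tau - \eta_+(\mathbf V + \mathbf S\mathbf W)$-type remainders decay like $e^{-\gamma|x|}$ at both $\pm\infty$; consequently the $Q$-form evaluated on any pair $z_k^\tau, z_j^\theta$ is well defined and, by the Green's-formula argument already used to show $q_a$ is independent of $a$ (applied here on the truncated strip $(0,L)\times(-R,R)$ with both functions solving the \emph{same} homogeneous equation $(\mathcal D + (\delta\mathcal P - \omega)\mathcal I)\,\cdot\, = 0$ for $|y|\ge R_0$), the quantity $Q_R(z_k^\tau, z_j^\theta)$ vanishes: the boundary terms at $\pm R$ tend to zero because the remainders $\tilde z\in\mathcal H_\gamma^+$ are exponentially small and the genuine wave contributions cancel in the difference $q_R - q_{-R}$ after we pass $R\to\infty$.

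Next I would substitute the explicit form \eqref{z_sol}, ${\bf z}={\bf V}+{\bf S}{\bf W}+{\bf r}$, into $Q({\bf z},{\bf z})=\mathbb O$ and expand using the orthogonality relations \eqref{eq:Qprop}, namely $Q({\bf W},{\bf W})=i\mathbb I$, $Q({\bf V},{\bf V})=-i\mathbb I$, $Q({\bf W},{\bf V})=\mathbb O$. Since the remainder ${\bf r}\in\mathcal H_\gamma^+$ contributes nothing to the $Q$-form (it decays at both ends and pairs to zero against the waves and against itself in the limit $R\to\infty$), the computation reduces to
\[
\mathbb O \;=\; Q({\bf V}+{\bf S}{\bf W},\,{\bf V}+{\bf S}{\bf W})\;=\;-i\mathbb I \;+\; {\bf S}\big(i\mathbb I\big){\bf S}^{*}\;=\;i\big({\bf S}{\bf S}^{*}-\mathbb I\big),
\]
where one must be careful that $Q$ is sesquilinear and anti-Hermitian so that $Q({\bf S}{\bf W},{\bf V}) = {\bf S}\,Q({\bf W},{\bf V}) = \mathbb O$ and $Q({\bf V},{\bf S}{\bf W}) = Q({\bf W},{\bf V})^{*}{\bf S}^{*} = \mathbb O$ as well, while $Q({\bf S}{\bf W},{\bf S}{\bf W}) = {\bf S}\,Q({\bf W},{\bf W})\,{\bf S}^{*} = i{\bf S}{\bf S}^{*}$. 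This yields ${\bf S}{\bf S}^{*}=\mathbb I$, and since ${\bf S}$ is a finite ($2N\times 2N$) matrix this is equivalent to ${\bf S}$ being unitary.

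The main obstacle I expect is not the algebra but the justification that $Q_R$ is genuinely $R$-independent and that the cross terms involving the exponentially-decaying remainders vanish in the limit. This requires knowing that $z_k^\tau$ and its remainder actually solve the homogeneous perturbed equation (true, as elements of $\ker A_\gamma^-$) so that Green's formula applies with no right-hand side, and that the decay rate $\gamma$ is strictly positive and lies in the gap guaranteed by Proposition \ref{Prop1}, so that $e^{-\gamma|x|}$-type remainders paired against waves that grow at most like $e^{\gamma_N|x|}$ still yield boundary integrals tending to zero once one checks $\gamma$ beats the growth; here one uses that $\gamma$ can be chosen with $\gamma < \gamma_N$ but large enough relative to the decay of $\mathbf r$. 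A secondary subtlety is that ${\bf w}_N^{\pm}$ are exponentially \emph{growing} waves, so the ``outgoing/incoming'' labels are in the augmented (non-physical) sense; nonetheless the relations \eqref{eq:Qprop} were established precisely for these augmented families, so the same computation goes through verbatim, and no positivity of a physical flux is needed — only the non-degenerate, block-diagonal structure of $Q$ on $(\mathbf W,\mathbf V)$. Once these analytic points are in place, unitarity follows immediately from ${\bf S}{\bf S}^{*}=\mathbb I$.
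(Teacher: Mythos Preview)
Your proposal is correct and follows essentially the same route as the paper: use Green's formula to get $Q(\mathbf z,\mathbf z)=\mathbb O$ for the basis of $\ker A_\gamma^-$, substitute $\mathbf z=\mathbf V+\mathbf S\mathbf W+\mathbf r$, invoke the relations \eqref{eq:Qprop}, and read off $\mathbf S\mathbf S^*=\mathbb I$. The paper compresses all of this into a single displayed line, whereas you spell out why the remainder $\mathbf r\in\mathcal H_\gamma^+$ drops out and why the exponentially growing augmented waves cause no trouble; your only slight wobble is the remark about choosing $\gamma<\gamma_N$ --- in fact no additional tuning is needed, since by Proposition~\ref{Prop1} one already has $|\Im\lambda_\epsilon|<\gamma_N$, so $\mathbf r$ (decaying like $e^{-\gamma_N|x|}$) paired against waves growing like $e^{|\Im\lambda_\epsilon||x|}$ automatically gives boundary terms that vanish as $R\to\infty$.
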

\begin{proof}
From the Green's formula $Q(\mathbf{z},\mathbf{z})=0$, that together
with (\ref{eq:Qprop}) gives

\[
0=Q(\mathbf{V}+\mathbf{S}\mathbf{W},\mathbf{V}+\mathbf{S}\mathbf{W})=Q(\mathbf{V},\mathbf{V})+Q(\mathbf{S}\mathbf{W},\mathbf{S}\mathbf{W})=-i\mathbb{I}+i\mathbf{S}\mathbf{S}^{*},
\]
what furnishes the result.
\end{proof}
Consider the non-homogeneous problem (\ref{April9b}) with $f\in\mathcal{{L}}_{\gamma}^{+}(\Pi)$.
This problem has a solution $w\in{\cal H}_{\gamma}^{-}$ which
admits the asymptotic representation 
\begin{equation}
w=\sum_{j=1}^{N}\sum_{\tau=\pm}C_{j\tau}^{1}W_{j}^{\tau}+\sum_{j=1}^{N}\sum_{\tau=\pm}C_{j\tau}^{2}V_{j}^{\tau}+R,\;\; R\in{\cal H}_{\gamma}^{+}\label{tod}
\end{equation}
which is a rearrangement of the representation (\ref{April9c-1}).
This motivates the following definition of the space ${\cal H}_{\gamma}^{out}$
consisting of vector functions $w\in{\cal H}_{\gamma}^{-}$ which
admits the asymptotic representation (\ref{tod}) with $C_{j\tau}^{2}=0$.
The norm in this space is defined by 
\begin{equation*}
||w;{\cal H}_{\gamma}^{{\rm out}}||=\Big(||R;{\cal H}_{\gamma}^{+}||^{2}+\sum_{j=1}^{N}\sum_{\tau=\pm}\,|C_{j\tau}^{1}|^{2}\Big)^{1/2}.
\end{equation*}
Now, we note that the kernel in Theorem \ref{tTh1} can be equivalently
spanned by

\begin{equation*}
Z_{k}^{\tau}=W_{k}^{\tau}+\sum_{\theta=\pm}\sum_{j=1}^{N}{\bf \tilde{S}}_{k\tau}^{j\theta}V_{j}^{\theta}+\tilde{Z}_{\text{}k}^{\tau},\:\:\:\tilde{Z}_{k}^{\tau}\in{\cal H}_{\gamma}^{+},
\end{equation*}
where the incoming
and outgoing waves were interchanged (compare with (\ref{zk})) and
$\tilde{S}$ is a scattering matrix corresponding to that exchange.

\begin{theorem}\label{thrm_iso-1} For any $f\in\mathcal{L}_{\gamma}^{+}(\Pi)$,
problem {\rm (\ref{April9b})} has a unique solution $w\in{\cal H}_{\gamma}^{out}$
and the following estimate holds 
\begin{equation*}
||w;{\cal H}_{\gamma}^{{\rm out}}||\leq c||f;\mathcal{L}_{\gamma}^{+}(\Pi)||,
\end{equation*}
where the constant $c$ is independent of $\epsilon\in[0,\epsilon_{N}]$
and $|\delta|\leq\delta_{0}$. Moreover, 
\begin{equation}
iC_{j\tau}^{1}=\int_{\Pi}f\cdot\overline{Z_{j}^{\tau}}dxdy.\label{April9c-1}
\end{equation}
\end{theorem}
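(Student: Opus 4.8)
The plan is to deduce the theorem from the Fredholm results already established, combined with the separation of asymptotics provided by the $Q$-form. First I would observe that Theorem~\ref{T3s2} gives, for any $f\in\mathcal{L}_\gamma^+$, \emph{some} solution $w\in\mathcal{H}_\gamma^-$ of \eqref{April9b} with a full asymptotic expansion; rearranging the $\eta_\pm$-localized terms into the $W$- and $V$-waves (using \eqref{W1a}, \eqref{W1b}) yields a representation of the form \eqref{tod} with $2N$ coefficients $C^1_{j\tau}$ of outgoing type and $2N$ coefficients $C^2_{j\tau}$ of incoming type. To land in $\mathcal{H}_\gamma^{\mathrm{out}}$ I need to subtract off the incoming part: by Theorem~\ref{tTh1} (or its $Z$-variant) the kernel $\ker A_\gamma^-$ is spanned by the $2N$ functions $z_k^\tau$, each of which has incoming leading term $V_k^\tau$ and no other incoming component. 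Hence subtracting $\sum_{k,\tau}C^2_{k\tau}z_k^\tau$ from $w$ kills all incoming waves and produces a solution $w_0\in\mathcal{H}_\gamma^{\mathrm{out}}$. This gives existence.

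For uniqueness, suppose $w\in\mathcal{H}_\gamma^{\mathrm{out}}$ solves the homogeneous problem $A w=0$. Then $w\in\ker A_\gamma^-$, so $w=\sum_{k,\tau}\alpha_{k\tau}z_k^\tau$ for some constants. But each $z_k^\tau$ contributes the incoming wave $V_k^\tau$ with coefficient $\alpha_{k\tau}$, while $w\in\mathcal{H}_\gamma^{\mathrm{out}}$ means all incoming coefficients vanish; by the uniqueness of the asymptotic coefficients (the $W$'s and $V$'s are linearly independent modulo $\mathcal{H}_\gamma^+$, which follows from \eqref{eq:Qprop}) we get $\alpha_{k\tau}=0$ for all $k,\tau$, so $w=0$. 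Thus $A:\mathcal{H}_\gamma^{\mathrm{out}}\to\mathcal{L}_\gamma^+$ is a bijection. The closed graph theorem (both spaces being Banach) then gives the bound $\|w;\mathcal{H}_\gamma^{\mathrm{out}}\|\le c\,\|f;\mathcal{L}_\gamma^+\|$; the uniformity of $c$ in $\epsilon\in[0,\epsilon_N]$ and $|\delta|\le\delta_0$ follows from the uniformity of the corresponding constants in Theorems~\ref{T3s}, \ref{T3s2} and \ref{tTh1}, together with continuity of the whole construction in the parameters on the compact parameter set.

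Finally, to obtain the formula \eqref{April9c-1} for the coefficients, I would pair the equation $Aw=f$ with $\overline{Z_j^\tau}$ and integrate over $\Pi$. Using that $Z_j^\tau\in\ker A_\gamma^-$, i.e. $(\mathcal D+(\delta\mathcal P-\omega)\mathcal I)Z_j^\tau=0$, an integration by parts (the Green's formula that defines $q$, applied on the truncated strip $\Pi_{-b,b}$ and letting $b\to\infty$) converts $\int_\Pi f\cdot\overline{Z_j^\tau}\,dxdy$ into a boundary term at $\pm\infty$, which is exactly a $Q$-product $Q(w,Z_j^\tau)$. Now insert the asymptotics: $w=\sum C^1_{k\theta}W_k^\theta + R$ with $R\in\mathcal H_\gamma^+$ (no incoming part), and $Z_j^\tau = W_j^\tau + \sum \tilde S^{k\theta}_{j\tau}V_k^\theta + \tilde Z_j^\tau$. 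The remainder terms $R$, $\tilde Z_j^\tau$ decay and drop out of $Q$; the orthogonality relations \eqref{eq:Qprop}, $Q(W_k^\theta,W_l^\mu)=i\delta_{kl}\delta_{\theta\mu}$ and $Q(W_k^\theta,V_l^\mu)=0$, leave only $Q(\sum_{k,\theta}C^1_{k\theta}W_k^\theta,\ W_j^\tau)=iC^1_{j\tau}$, which is the claim.

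The main obstacle I expect is the careful justification of the integration-by-parts / Green's formula step when passing $\int_\Pi f\cdot\overline{Z_j^\tau}$ to a $Q$-product at infinity: one must check that $f\in\mathcal L_\gamma^+$ decays fast enough and that the ``good'' parts $R,\tilde Z_j^\tau\in\mathcal H_\gamma^+$ really contribute nothing in the limit, so that only the finitely many oscillatory/growing waves survive in the boundary pairing — this is where the weighted-space bookkeeping and the $R\ge R_0$-independence of $Q$ are essential. The existence and uniqueness parts, by contrast, are a fairly direct bookkeeping exercise on top of Theorems~\ref{T3s2} and \ref{tTh1}.
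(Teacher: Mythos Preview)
Your argument is correct and is precisely the standard route for this type of statement; the paper itself does not give an independent proof but defers to Theorem~3.5 of \cite{TrappedArXiv}, where exactly this scheme (surjectivity of $A_\gamma^-$ from Theorem~\ref{T3s}, subtraction of the kernel elements $z_k^\tau$ to kill incoming waves, uniqueness from the asymptotic independence of the $V_k^\tau$, and the coefficient formula via Green's identity and the $Q$-orthogonality \eqref{eq:Qprop}) is carried out. One small clarification: existence of a solution in $\mathcal H_\gamma^-$ is not the content of Theorem~\ref{T3s2} (which only describes the asymptotics of a given solution) but follows from $\mathrm{coker}\,A_\gamma^-=\{0\}$ in Theorem~\ref{T3s}; otherwise your bookkeeping is exactly right.
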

\begin{proof}
The proof is analogous to the proof of Theorem 3.5, presented in \cite{TrappedArXiv}.
\end{proof}
We represent ${\bf S}$ as 
\begin{equation*}
{\bf S}=\Bbb I+{\bf s},\;\;\;\mbox{or, equivalently,}\;\;{\bf S}_{j\tau}^{k\theta}=\delta_{k,j}\delta_{\tau,\theta}+{\bf s}_{j\tau}^{k\theta}.
\end{equation*}

\begin{theorem}\label{lemma_analytic-1} The scattering matrix ${\bf S}(\epsilon,\delta)$
depends analytically on small parameters $\epsilon\in[0,\epsilon_{N}]$
and $\delta\in[-\delta_{0},\delta_{0}]$. Moreover, 
\begin{equation}
{\bf s}_{j\tau}^{k\theta}=i\delta\int_{\Pi}\,\mathcal{P}{\bf w}_{j}^{\tau}\cdot\overline{{\bf w}_{k}^{\theta}}dxdy+O(\delta^{2}).\label{Apr8c}
\end{equation}
\end{theorem}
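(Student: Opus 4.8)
The plan is to establish analyticity first and then extract the first-order term by perturbation theory. For the analyticity statement, I would use Theorem \ref{thrm_iso-1}: the operator realizing the solution of \eqref{April9b} in the space $\mathcal{H}_\gamma^{\mathrm{out}}$ is an isomorphism whose bound is uniform in $(\epsilon,\delta)$, and the operator $A_\gamma^- = \mathcal{D}+(\delta\mathcal{P}-\omega_\epsilon)\mathcal{I}$ depends analytically (indeed polynomially in $\delta$ and, through $\omega_\epsilon=\omega_N-\epsilon$ and the analytic waves $\mathbf{w}_N^{\epsilon\pm}$ from \eqref{eq:weps+}, \eqref{eq:weps-}, analytically in $\epsilon$ for small $|\epsilon|$). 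The coefficients $\mathbf{S}_{k\tau}^{j\theta}$ are obtained from the solution $z_k^\tau$ of Theorem \ref{tTh1} by the explicit formula \eqref{April9c-1} applied to a suitable right-hand side; since that right-hand side, the test functions $Z_j^\tau$, and the solution operator all depend analytically on $(\epsilon,\delta)$, so do the entries of $\mathbf{S}$. One has to be slightly careful that the waves $W_k^\pm, V_k^\pm$ built from $\mathbf{w}_N^\pm$ in \eqref{eq:WNPWNM} involve $1/\lambda_\epsilon$, which is not analytic at $\epsilon=0$; but this is precisely why one works with the combinations $\mathbf{w}_N^{\epsilon\pm}$, and the change of basis is a constant invertible matrix for each $\epsilon$, so analyticity of $\mathbf{S}$ in the chosen basis is preserved.

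Next, to get \eqref{Apr8c}, I would compute $\mathbf{S}$ at $\delta=0$ and differentiate in $\delta$. At $\delta=0$ the problem \eqref{April9b} is the unperturbed one, the solution $z_k^\tau$ of $A_\gamma^-z=0$ is simply the wave $V_k^\tau$ extended by the incoming/outgoing decomposition, so $\mathbf{S}(\epsilon,0)=\mathbb{I}$, i.e. $\mathbf{s}(\epsilon,0)=\mathbb{O}$; this is consistent with the representation $\mathbf{S}=\mathbb{I}+\mathbf{s}$. For the first derivative, write $z_k^\tau = V_k^\tau + \sum_{j,\theta}\mathbf{S}_{k\tau}^{j\theta}W_j^\theta+\tilde z_k^\tau$ and apply $A_\gamma^-=\mathcal{D}-\omega_\epsilon\mathcal{I}+\delta\mathcal{P}$ to it. Since $(\mathcal{D}-\omega_\epsilon\mathcal{I})z_k^\tau = -\delta\mathcal{P}z_k^\tau$ and each $\mathbf{w}_j^\theta$ solves the homogeneous equation, the function $\tilde z_k^\tau$ satisfies $(\mathcal{D}-\omega_\epsilon\mathcal{I})\tilde z_k^\tau$ plus commutator terms from the cutoffs $\chi_\pm,\eta_\pm$ equals $-\delta\mathcal{P}z_k^\tau$, so $\tilde z_k^\tau = O(\delta)$ in $\mathcal{H}_\gamma^+$. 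Then pair the identity $(\mathcal{D}-\omega_\epsilon\mathcal{I})(z_k^\tau - V_k^\tau - \sum \mathbf{S}_{k\tau}^{j\theta}W_j^\theta) = -\delta\mathcal{P}z_k^\tau$ against $\overline{\mathbf{w}_m^\nu}$ and integrate over $\Pi$: the symplectic form $q$ is constructed (Sect. \ref{sub:Symplectic-form}) precisely so that integrating $\mathcal{D}-\omega$ against a homogeneous solution reduces to boundary terms, and the biorthogonality relations \eqref{eq:bioosc}, \eqref{W1c}, \eqref{W1ca} convert those boundary terms into the single coefficient $\mathbf{S}_{k\tau}^{m\nu}$ (up to a factor $\pm i$). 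The leading term on the right is $-\delta\int_\Pi \mathcal{P}\,V_k^\tau\cdot\overline{\mathbf{w}_m^\nu}\,dxdy = -\delta\int_\Pi \mathcal{P}\,\mathbf{w}_k^\tau\cdot\overline{\mathbf{w}_m^\nu}\,dxdy + O(\delta)$, because $V_k^\tau$ agrees with $\mathbf{w}_k^\tau$ on $\operatorname{supp}\mathcal{P}$ (the cutoff $\chi_\pm$ equals $1$ there, by the choice $\operatorname{supp}\mathcal{P}\subset[-R_0,R_0]\times[0,1]$ and the support conventions on $\chi_\pm$). Matching signs and the normalization $q(\mathbf{w}_j^\tau,\mathbf{w}_k^\theta)=\tau i\delta_{j,k}\delta_{\tau,\theta}$ then yields exactly \eqref{Apr8c}.

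The main obstacle is bookkeeping the commutator terms coming from the cutoff functions $\chi_\pm,\eta_\pm$ and verifying that they do not contribute to the $O(\delta)$ coefficient. Concretely, $[\mathcal{D},\chi_\pm]$ and $[\mathcal{D},\eta_\pm]$ are multiplication operators supported where the cutoffs vary, i.e. away from $\operatorname{supp}\mathcal{P}$; one needs that the corresponding integrals either vanish by the biorthogonality of distinct waves or are absorbed into the remainder because $\tilde z_k^\tau\in\mathcal{H}_\gamma^+$ decays. This requires using the $R$-independence of $Q_R$ (already noted after the definition of $Q_R$) and choosing the contour/cutoff radius appropriately; it is routine but must be done carefully to be sure the first-order term is clean. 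A secondary technical point is the non-analyticity of the waves $W_N^\pm,V_N^\pm$ at $\epsilon=0$: one fixes it by expressing everything through $\mathbf{w}_N^{\epsilon\pm}$, but then one must check that \eqref{Apr8c} written in terms of $\mathbf{w}_N^\pm$ still holds entrywise, which follows from the fact that the change of basis \eqref{eq:WNPWNM} is the same on the $W$-side and the $V$-side, so it conjugates $\mathbf{s}$ and leaves the bilinear expression $\int_\Pi\mathcal{P}\,\mathbf{w}_j^\tau\cdot\overline{\mathbf{w}_k^\theta}\,dxdy$ transforming the same way.
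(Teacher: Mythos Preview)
Your approach is the standard one and matches what the paper intends (the paper itself merely cites Theorem~3.6 of \cite{TrappedArXiv}): analyticity from the uniform isomorphism of Theorem~\ref{thrm_iso-1}, and the first-order term by pairing $(\mathcal{D}-\omega_\epsilon)z_k^\tau=-\delta\mathcal{P}z_k^\tau$ against the homogeneous solutions $\mathbf{w}_m^\nu$ via the symplectic form and the biorthogonality relations.

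One slip to correct: $V_k^\tau$ does \emph{not} coincide with $\mathbf{w}_k^\tau$ on $\operatorname{supp}\mathcal{P}$. By the paper's conventions $\chi_\pm$ are in transition precisely on $[-R_0,R_0]$, not equal to $1$ there, so your justification of the step $-\delta\int_\Pi\mathcal{P}V_k^\tau\cdot\overline{\mathbf{w}_m^\nu}=-\delta\int_\Pi\mathcal{P}\mathbf{w}_k^\tau\cdot\overline{\mathbf{w}_m^\nu}+O(\delta)$ is wrong as written. The clean argument is that at $\delta=0$ the kernel element is the global wave itself, $z_k^\tau=\mathbf{w}_k^\tau$, so by the analyticity you just established $z_k^\tau=\mathbf{w}_k^\tau+O(\delta)$ and hence $\mathcal{P}z_k^\tau=\mathcal{P}\mathbf{w}_k^\tau+O(\delta)$ on the compact support of $\mathcal{P}$; the pairing then yields \eqref{Apr8c}. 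Your concern about analyticity in $\epsilon$ is also slightly over-cautious: the basis waves $\mathbf{w}_N^\pm$ of \eqref{eq:WNPWNM} are already analytic in $\epsilon$, being constant-coefficient combinations of the analytic $\mathbf{w}_N^{\epsilon\pm}$, so no further change of basis is required.
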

\begin{proof}
The proof follows the idea of the proof of Theorem 3.6 in \cite{TrappedArXiv}.
\end{proof}

\section{Trapped modes}

\subsection{Necessary and sufficient conditions for the existence of trapped
mode solutions\label{sub:Necessary-and-sufficient}}

Let us first introduce a value $d$ which is crucial for the formulation
of the necessary and sufficient condition for the existence of trapped
modes

\begin{equation}
d(\epsilon)=\frac{\lambda_{\epsilon}+1}{\lambda_{\epsilon}-1}=-1-i2\sqrt{\epsilon}\sqrt{2\omega_{N}}+O(\epsilon),\;\;\;|d|=1.\label{eq:d}
\end{equation}

\begin{theorem}\label{ThTrap} Problem {\rm(\ref{Dirac})} with boundary
conditions {\rm(\ref{eq:BC1INIT})}, {\rm(\ref{eq:BC2INIT})} has a non-trivial
solution in $X_{0}$ (a trapped mode), if and only if the following
matrix is degenerate 
\[
{\bf S}_{\dagger\dagger}+d(\varepsilon)\Upsilon,\;\;\;\Upsilon=\left(\begin{array}{cc}
0 & 1\\
1 & 0
\end{array}\right).
\]
\end{theorem}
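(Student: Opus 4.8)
The statement links the existence of a trapped mode (a nontrivial $L^2$ solution) to the degeneracy of the $2\times2$ matrix ${\bf S}_{\dagger\dagger}+d(\varepsilon)\Upsilon$. The key observation is that a trapped mode is precisely a solution in $\ker A_\gamma^-$ (a solution with the prescribed exponential growth permitted) that in fact decays, i.e. lies in ${\cal H}_\gamma^+\subset X_0$. So the plan is to take a general element of $\ker A_\gamma^-$, expand it in the basis $\{z_k^\tau\}$ from Theorem \ref{tTh1}, and determine when a linear combination has \emph{all} its outgoing and incoming coefficients (the coefficients of the growing waves ${\bf w}_N^\pm$, which are the genuinely non-$L^2$ pieces) vanish simultaneously. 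Concretely, write $w=\sum_{k,\tau}a_k^\tau z_k^\tau$; for $k\le N-1$ the waves ${\bf w}_k^\pm$ are oscillatory and live in $X_0$ already, so the only obstruction to $w\in X_0$ comes from the $\dagger$-block, i.e. from the coefficients multiplying $W_N^\pm$ and $V_N^\pm$. Collecting these, $w\in X_0$ iff a certain homogeneous linear system on the vector $(a_N^+,a_N^-)$ (and possibly $(a_k^\tau)_{k\le N-1}$, but those decouple) has a nontrivial solution, and that system's matrix is exactly ${\bf S}_{\dagger\dagger}+d(\varepsilon)\Upsilon$ up to invertible factors.

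The step I expect to carry the real content is identifying where the factor $d(\varepsilon)=(\lambda_\epsilon+1)/(\lambda_\epsilon-1)$ comes from. The waves $W_N^\pm=\chi_+\,{\bf w}_N^\pm$ and $V_N^\pm=\chi_-\,{\bf w}_N^\pm$ are cutoffs of the \emph{same} exponentially growing/decaying solutions ${\bf w}_N^\pm$ of (\ref{eq:WNPWNM}): since $\lambda_\epsilon$ is purely imaginary, ${\bf w}_N^+$ decays as $x\to-\infty$ and grows as $x\to+\infty$, and vice versa for ${\bf w}_N^-$. For the solution $w$ to be globally in $X_0$, the growing part at $+\infty$ must cancel and likewise at $-\infty$; but ${\bf w}_N^+$ and ${\bf w}_N^-$ are related through the linear combinations (\ref{eq:weps+}), (\ref{eq:weps-}) and through their behaviour relative to $W_N^\pm$, $V_N^\pm$. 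Tracking the coefficients $\frac{1}{2{\cal N}}(1\pm\frac1{\lambda_\epsilon})$ appearing in (\ref{eq:WNPWNM}) and matching the $\chi_+$-supported piece against the $\chi_-$-supported piece produces exactly the ratio $(\lambda_\epsilon+1)/(\lambda_\epsilon-1)$. So I would: (i) write the general $w\in\ker A_\gamma^-$ via (\ref{z_sol}) in block form ${\bf z}={\bf V}+{\bf S}{\bf W}+{\bf r}$; (ii) impose that $w\in X_0$, which forces the $\dagger$-components of both the incoming amplitudes and the (growing) outgoing amplitudes to be linked; (iii) substitute the relation between $W_N^\pm$ and $V_N^\pm$ in terms of the analytic basis ${\bf w}_N^{\epsilon\pm}$ to convert "no growth at $\pm\infty$" into the matrix equation; (iv) read off that nontrivial solvability is equivalent to $\det({\bf S}_{\dagger\dagger}+d(\varepsilon)\Upsilon)=0$.

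For the converse direction, given degeneracy of ${\bf S}_{\dagger\dagger}+d(\varepsilon)\Upsilon$, one picks the kernel vector, forms the corresponding combination of the $z_k^\tau$, and checks it lies in ${\cal H}_\gamma^+$; since ${\cal H}_\gamma^+\hookrightarrow X_0$ and the combination solves (\ref{Dirac}) with the boundary conditions, it is a trapped mode, and it is nontrivial because the $z_k^\tau$ are linearly independent modulo ${\cal H}_\gamma^+$. The matrix $\Upsilon$ (the swap) enters because the symmetry $T_1$ (or, more directly, the pairing between $W_N^+$ and $W_N^-$) interchanges the two $\dagger$-waves; this is already visible in the relations $T_1{\bf w}_j^\tau=i{\bf w}_j^{-\tau}$ listed before the block notation, and is what makes the "diagonal" term of the obstruction matrix the anti-diagonal $\Upsilon$ rather than the identity. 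The main obstacle is bookkeeping: keeping the $2N$ coefficients organized, correctly separating the oscillatory block (which imposes no condition) from the $\dagger$-block, and not losing track of the normalization constant ${\cal N}$ and the factors $1\pm\lambda_\epsilon^{-1}$ when passing between the bases $\{{\bf w}_N^\pm\}$, $\{{\bf w}_N^{\epsilon\pm}\}$, and $\{W_N^\pm,V_N^\pm\}$. I would handle this by doing the entire computation in the block form of (\ref{z_sol}) and only expanding the $\dagger$-block explicitly.
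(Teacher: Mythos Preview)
Your overall strategy matches the paper's: write a general element of $\ker A_\gamma^-$ in the basis $\{z_k^\tau\}$, and impose that it lies in $X_0$. The identification of $d(\varepsilon)$ via the coefficients $\tfrac{1}{2{\cal N}}(1\pm\lambda_\epsilon^{-1})$ in (\ref{eq:WNPWNM}) is also exactly right. However, there is a genuine gap.

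You assert that the oscillatory waves ${\bf w}_k^\pm$ for $k\le N-1$ ``live in $X_0$ already'' and that the oscillatory block ``imposes no condition''. This is false: $X_0$ requires $L^2(\Pi)$-integrability over the infinite strip, and a bounded oscillatory wave $e^{\pm i\lambda_j x}(\dots)$ with real $\lambda_j$ is not in $L^2$. Consequently the $\bullet$-block does impose conditions. First, the incoming oscillatory waves $V_k^\tau$ appear with coefficient $a_k^\tau$, so $a_\bullet=0$ is forced. Second, and more seriously, even after setting $a_\bullet=0$ the combination $a_\dagger{\bf z}_\dagger^\top$ still contains the outgoing oscillatory waves ${\bf S}_{\dagger\bullet}{\bf W}_\bullet$ with coefficient vector $a_\dagger{\bf S}_{\dagger\bullet}$. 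For $w\in X_0$ you also need $a_\dagger{\bf S}_{\dagger\bullet}=0$, and nothing in your plan produces this.

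This is precisely where the paper uses an ingredient you do not mention: the unitarity of ${\bf S}$ together with $|d(\varepsilon)|=1$. Once the $\dagger$-matching yields $a_\dagger{\bf S}_{\dagger\dagger}=-d\,a_\dagger\Upsilon$, one computes
\[
|a_\dagger|^2=|{\bf S}a|^2=|a_\dagger{\bf S}_{\dagger\bullet}|^2+|a_\dagger{\bf S}_{\dagger\dagger}|^2=|a_\dagger{\bf S}_{\dagger\bullet}|^2+|d|^2|a_\dagger|^2,
\]
forcing $a_\dagger{\bf S}_{\dagger\bullet}=0$. Without this step your ``converse'' direction fails: given a kernel vector of ${\bf S}_{\dagger\dagger}+d\Upsilon$, the corresponding combination of the $z_N^\tau$ will generically carry oscillatory tails and will \emph{not} lie in ${\cal H}_\gamma^+$, so you cannot simply ``check'' it lies there. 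The unitarity argument is what upgrades the $2\times2$ condition to a genuine equivalence.
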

\begin{proof}
A trapped mode $w\in X_{0}$, is a solution to (\ref{Dirac}) with
boundary conditions (\ref{eq:BC1INIT}), (\ref{eq:BC2INIT}), so certainly
$w\in\ker A_{-\gamma}$ and hence 
\[
w=a({\bf V}+{\bf S}{\bf W}+{\bf r})^{\top},
\]
where $a=(a_{\bullet},a_{\dagger})\in\Bbb C^{2N}$ and ${\bf V}$,
${\bf W}$ and ${\bf r}$ are the vector functions from the representation
of the kernel of $A_{-\gamma}$ in (\ref{z_sol}). Using the splitting
of vectors and the scattering matrix in $\bullet$ and $\dagger$
components, we write the above relation as 
\begin{equation*}
w=a_{\bullet}({\bf V_{\bullet}}+{\bf S}_{\bullet\bullet}{\bf W}_{\bullet}+{\bf S}_{\bullet\dagger}{\bf W}_{\dagger}+{\bf r}_{\bullet})^{T}+a_{\dagger}({\bf V_{\dagger}}+{\bf S}_{\dagger\dagger}{\bf W}_{\dagger}+{\bf S}_{\dagger\bullet}{\bf W}_{\bullet}+{\bf r}_{\dagger})^{\top}.
\end{equation*}
The first term in the right-hand side contains waves oscillating at
$\pm\infty$ and to guarantee the vanishing of this term we must
require $a_{\bullet}=0$. Since ${\bf r}$ vanishes at $\pm\infty$,
a trapped mode $w\in X_{0}$ has a representation with 
\begin{equation}
a_{\dagger}({\bf V_{\dagger}}+{\bf S}_{\dagger\dagger}{\bf W}_{\dagger}+{\bf S}_{\dagger\bullet}{\bf W}_{\bullet})^{\top}\;\;\mbox{is vanishing at \ensuremath{\pm\infty}.}\label{Apr10b}
\end{equation}
From the representations (\ref{eq:WNPWNM}) and (\ref{Apr10b}), matching
the coefficients for the increasing exponents at $\pm\infty$ we arrive
at

\begin{equation}
\frac{a_{1}}{2\mathcal{N}}\frac{\lambda_{\epsilon}+1}{\lambda_{\epsilon}}+(a_{1},a_{2}){\bf S}_{\dagger\dagger}\Big(0,\frac{1}{2\mathcal{N}}\frac{\lambda_{\epsilon}-1}{\lambda_{\epsilon}}\Big)^{\top}=0,\label{eq:d1}
\end{equation}

\begin{equation}
\frac{a_{2}}{2\mathcal{N}}\frac{\lambda_{\epsilon}+1}{\lambda_{\epsilon}}+(a_{1},a_{2}){\bf S}_{\dagger\dagger}\Big(\frac{1}{2\mathcal{N}}\frac{\lambda_{\epsilon}-1}{\lambda_{\epsilon}},0\Big)^{\top}=0,\label{eq:d2}
\end{equation}
where $a_{\dagger}=(a_{1},a_{2})$. Using the definition of $d(\epsilon)$
in (\ref{eq:d}), we write equations (\ref{eq:d1}) and (\ref{eq:d2})
as 

\begin{equation}
a_{\dagger}\Big({\bf S}_{\dagger\dagger}+d(\epsilon)\Upsilon\Big)=0.\label{eq:condd}
\end{equation}
Now, using the condition (\ref{eq:condd}), the unitarity property of matrix ${\bf S}$
and $a_{\bullet}=0$, we get 
\[
|a_{\dagger}|^{2}=|a|^{2}=|{\bf S}a|^{2}=|a_{\dagger}{\bf S}_{\dagger\bullet}|^{2}+|a_{\dagger}{\bf S}_{\dagger\dagger}|^{2}=|a_{\dagger}{\bf S}_{\dagger\bullet}|^{2}+|da_{\dagger}|^{2}.
\]
Since $|d|=1$, we have $a_{\dagger}{\bf S}_{\dagger\bullet}=0$ and
so $w$ with asymptotic representation (\ref{Apr10b}) does not contain
any oscillatory waves and is vanishing at $\pm\infty$. 
\end{proof}

\subsection{Proof of Theorem \ref{thrm1} \label{sub:Proof-of-Theorem1}}

In this section, we construct an example potential, that produces
a trapped mode. The potential is extracted from the asymptotic analysis
of a set of conditions posed on the scattering matrix ${\bf S}$.
The crucial condition concerns the augumented part of the scattering
matrix ${\bf S}$ and arrives from Theorem \ref{ThTrap}

\textcolor{black}{
\begin{equation}
\det({\bf S}_{\dagger\dagger}+d\Upsilon)=0.\label{Apr10ab}
\end{equation}
}To distinguish between small and big elements in the asymptotic analysis,
let us introduce the following notation 
\begin{align}
d(\epsilon) & =-e^{i\sigma},\;\;\;{\bf S}=\Bbb I+{\bf s}\:\:\:,\mathbf{s}=:i\delta s,\label{eq:representations}
\end{align}
where 
\begin{equation}\label{sigmaexp}
\sigma=\sqrt{\epsilon}2\sqrt{2\omega_N}+O(\epsilon^{3/2}),
\end{equation}
 from expansion (\ref{eq:d}).
 From (\ref{Apr8c}),
${\bf s}$ is of order $\delta$ and hence $s$ in (\ref{eq:representations})
is of order $1$. Now, we have three small parameters $\epsilon$,
$\sigma$ and $\delta$. 

To resolve the condition (\ref{Apr10ab}), let us list the properties
of the scattering matrix ${\bf S}$. 

First of all, the symmetry $T_{1}$ is valid for
any choice of potential $\mathcal{P}$, provided it is a real-valued
function. This symmetry imposes the following relations on the elements
of the matrix $s$: 
\begin{equation}
s_{N\theta}^{j\tau}=s_{j(-\tau)}^{N(-\theta)},\;\;\; j=1,\ldots,N-1,\;\tau,\theta=\pm,\label{eq:T1_1}
\end{equation}
\begin{equation}
s_{N+}^{N+}=s_{N-}^{N-}.\label{eq:T1_2}
\end{equation}
Those relations are illustrated in the sketch of the scattering matrix
${\bf S}$ in Figure \textcolor{black}{\ref{fig:matrixS}.}

\textcolor{black}{Secondly, requiring the symmetry of the potential
$\mathcal{P}(x,y)=\mathcal{P}(x,L-y)$, all the elements of the matrix
}${\bf S}$\textcolor{black}{{} that are odd functions of $y$ with
respect to $\frac{L}{2}$ vanish. We identify those elements
using the symmetry $T_{3}$ (Figure \ref{fig:matrixS}).}

Taking into account the unitarity property \textcolor{black}{of the matrix
}${\bf S}$ and all the above listed properties, \textcolor{black}{we
seek for $\mathcal{P}$ and small $\delta>0$ that satisfy
the relations
\begin{equation}
s_{N+}^{j\tau}=0,\;\;\; s_{N-}^{N+}=0\label{eq:Apr18a}
\end{equation}
with $\tau=\pm$, $j\in\mathcal{I}nd_{S}$, where 
\[
\mathcal{I}nd_{S}=\begin{cases}
j=1,3,\ldots,N-1 & \mbox{when }N\;\mbox{is even},\\
j=2,4,\ldots,N-2 & \mbox{when }N\;\mbox{is odd}.
\end{cases}
\]
}This choice together with the unitarity property and symmetries (\ref{eq:T1_1}),
(\ref{eq:T1_2}) yield

\[
s_{N-}^{j\tau}=0,\;\;\; s_{j\tau}^{N+}=0,\;\;\;,s_{j\tau}^{N-}=0,\;\;\; s_{N+}^{N-}=0,
\]
with $\tau=\pm$, $j=1,\,2,\ldots,N-1$
and

\textcolor{black}{
\[
|1+i\delta s_{N+}^{N+}|=1.
\]
Thus condition (\ref{Apr10ab}) becomes 
\begin{equation*}
1+i\delta s_{N+}^{N+}=\pm d,
\end{equation*}
that is equivalent to 
\begin{equation}
\Im(1+i\delta s_{N+}^{N+})=\pm\Im d.\label{Apr18b}
\end{equation}
To solve this equation, we fix the last small parameter 
\begin{equation*}
\delta=\sin\sigma,
\end{equation*}
that according to the expansion (\ref{sigmaexp}) gives
$\delta=\sqrt{\epsilon}C_{d}(1+O(\epsilon))$ with $C_d=2\sqrt{2\omega_N}$
and (\ref{Apr18b}) becomes
\begin{equation*}
\Re(s_{N+}^{N+})=\pm1,
\end{equation*}
}In particular, we seek for the potential $\mathcal{P}$ so that 
\begin{equation}
\Re(s_{N+}^{N+})=1,\label{eq:R1}
\end{equation}
\textcolor{black}{together with (\ref{eq:Apr18a}).}
\textcolor{black}{Let us use the asymptotic formula
\begin{equation}
s_{j\tau}^{k\theta}(\delta\mathcal{P})=\int_{\Pi}\,\mathcal{P}{\bf w}_{j}^{\tau}\cdot\overline{{\bf w}_{k}^{\theta}}dxdy+O(\delta)\label{Apr8cy}
\end{equation}
with $j,k=1,2,\ldots,N$ and $\tau,\theta=\pm$ which follows from
(\ref{Apr8c}) and (\ref{eq:representations}). }

\textcolor{black}{Getting together the set of conditions (\ref{eq:Apr18a})
and (\ref{eq:R1}), we obtain the following system of $2(N+1)+3$
equations when $N$ is even and of $2(N-1)+3$ equations, when $N$
is odd 
\begin{equation}
\Re s_{N+}^{j\tau}(\delta\mathcal{P})=0,\;\;\Im s_{N+}^{j\tau}(\delta\mathcal{P})=0,\;\; j\in\mathcal{I}nd_{S},\;\;\tau=\pm,\label{Apr11aa-1}
\end{equation}
\begin{equation}
\Re s_{N-}^{N+}(\delta\mathcal{P})=0,\;\;\Im s_{N-}^{N+}(\delta\mathcal{P})=0,\label{eq:Apr11bb-1}
\end{equation}
and
\begin{equation}
\Re\Big(s_{N+}^{N+}(\delta\mathcal{P})\Big)=1.\label{Apr11ab-1}
\end{equation}
}To unite the notation, we introduce \textcolor{black}{the set of indece}s:\textcolor{black}{
\begin{align}
\mathcal{I}nd & =\Big\{\alpha=(j,\tau,\Xi):\ \ \  j\in\mathcal{I}nd_{S};\,\tau=\{+,-\};\,\Xi=\{\Re,\,\Im\};\nonumber \\
 & j=N;\,\tau=-;\,\Xi=\{\Re,\,\Im\};\ \  \  j=N;\,\tau=+;\,\Xi=\Re\Big\}.\label{eq:Ind}
\end{align}
The indices with $j\in\mathcal{I}nd_{S}$ are related to equation
(\ref{Apr11aa-1}), the indices with $j=N,\,\tau=+$ correspond to
(\ref{eq:Apr11bb-1}) and the last index $(N,+,\Re)$ corresponds
to (\ref{Apr11ab-1}).}
We are looking for the potential having the following from:\textcolor{black}{
\begin{align*}
\mathcal{P}(x,y) & =\Phi(x,y)+\sum_{\alpha\in\mathcal{I}nd}\eta^{\alpha}\Psi^{\alpha}(x,y)
\end{align*}
where the functions $\Phi$, $\{\Psi^{\alpha}\}_{\alpha\in\mathcal{I}nd}$}
\textcolor{black}{are continuous, real valued with compact support
in $[-R_{0},R_{0}]\times[0,1]$. The functions are assumed to be fixed
and are subject to a set of conditions that are presented later on
in this section. The unknown coefficients $\{\eta^{\alpha}\}_{\alpha\in\mathcal{I}nd}$
can be chosen from the Banach Fixed Point Theorem. Using indices $\mathcal{I}nd$
(\ref{eq:Ind}) and the asymptotic form of the scattering matrix (\ref{Apr8cy})
we define 
\begin{equation*}
s_{\alpha}:=\Xi s_{j\tau}^{N+},\:\:\:\alpha\neq(N,+,\Re);\:\:\: s_{\alpha}:=\Re(s_{N+}^{N+}),\:\:\:\alpha=(N,+,\Re)
\end{equation*}
 and 
\begin{equation*}
\mathbf{\upsilon_{\alpha}}:=\Xi\Big({\bf w}_{j}^{\tau}\cdot\overline{{\bf w}_{N}^{+}}\Big),\:\:\:\alpha\neq(N,+,\Re);\:\:\:\mathbf{\upsilon_{\alpha}}:={\bf w}_{N}^{+}\cdot\overline{{\bf w}_{N}^{+}},\:\:\:\alpha=(N,+,\Re).
\end{equation*}
This enables us to write equations (\ref{Apr11aa-1}), (\ref{eq:Apr11bb-1})
and (\ref{Apr11ab-1}), first in the scalar form 
\begin{align*}
s_{\alpha}\Big(\delta(\Phi+\sum_{\beta\in\mathcal{I}nd}\eta^{\beta}\Psi^{\beta})\Big) & =\int_{\Pi}\,(\Phi+\sum_{\beta\in\mathcal{I}nd}\eta^{\beta}\Psi^{\beta})\upsilon_{\alpha}dxdy-\delta\mu_{\alpha}(\delta,\boldsymbol{\eta})=0,
 & \alpha\in\mathcal{I}nd,\nonumber 
\end{align*}
and then combine to a matrix form 
\begin{equation}
\mathcal{M}(\delta,\boldsymbol{\eta}):={\bf \Phi}+{\cal A}\boldsymbol{\eta}-\delta\boldsymbol{\mu}(\delta,\boldsymbol{\eta})=\delta_{(N,+,\Re)}^{\alpha},\label{matrixeq}
\end{equation}
}with a vector function $\mathcal{M}=\{\mathcal{M}_{\alpha}\}_{\alpha\in\mathcal{I}nd}$,
a vector\textcolor{black}{{} $\boldsymbol{\Phi}=\{\Phi_{\alpha}\}_{\alpha\in\mathcal{I}nd}$
with the elements 
\begin{align}
\Phi_{\alpha} & =\int_{\Pi}\,\Phi\upsilon_{\alpha}dxdy,\label{eq:phimultiplicands}
\end{align}
the matrix ${\cal A=}\{\mathcal{A}_{\alpha}^{\beta}\}_{\alpha,\beta\in\mathcal{I}nd}$
given by 
\begin{align*}
\mathcal{A}_{\alpha}^{\beta} & =\int_{\Pi}\,\Psi^{\beta}\upsilon_{\alpha}dxdy,
\end{align*}
a vector $\boldsymbol{\eta}=\{\eta^{\alpha}\}_{\alpha\in\mathcal{I}nd}$
with real unknown coefficients and a vector function $\boldsymbol{\mu}=\{\mathcal{\mu}_{\alpha}\}_{\alpha\in\mathcal{I}nd}$
that depends on $\delta$ and $\boldsymbol{\eta}$ analytically (analyticity
follows form Theorem \ref{lemma_analytic-1}). }

\textcolor{black}{Our goal is to solve system (\ref{matrixeq})
with respect to $\boldsymbol{\eta}$. We reach it in three steps.
First, we eliminate the constant $"1"$ on the right-hand side of (\ref{matrixeq}),
as $\delta_{(N,+,\Re)}^{\alpha}=1$ for $\alpha=(N,+,\Re)$, by an
appropriate choice of function $\Phi$. Secondly, we choose the functions
$\{\Psi^{\alpha}\}_{\alpha\in\mathcal{I}nd}$ in such a way that ${\cal A}$
is a unit and our system becomes nothing more than $\boldsymbol{\eta}=f(\boldsymbol{\eta})$$ $
(with a certain small function $f$) and is solvable due to the Banach Fixed
Point Theorem. }

\textcolor{black}{The choice of function $\Phi$ is the following
\begin{equation}
\Phi_{\alpha}=0,\,\,\,\alpha\neq(N,+,\Re);\:\:\:\Phi_{\alpha}=1,\,\,\,\alpha=(N,+,\Re)\label{eq:phiequations}
\end{equation}
and it is possible due to the following lemma.}

\begin{lemma} \textcolor{black}{When $2L$ is not a natural
number, then functions
\begin{equation}
\Re{\bf w}_{N}^{+}\cdot\overline{{\bf w}_{j}^{\tau}},\;\Im{\bf w}_{N}^{+}\cdot\overline{{\bf w}_{j}^{\tau}},\;\Re{\bf w}_{N}^{-}\cdot\overline{{\bf w}_{N}^{+}},\;\Im{\bf w}_{N}^{-}\cdot\overline{{\bf w}_{N}^{+}},\;{\bf w}_{N}^{+}\cdot\overline{{\bf w}_{N}^{+}}\label{eq:flinindep}
\end{equation}
with $j\in\mathcal{I}nd_{S}$, $\tau=\pm$ are linearly independent.}\textcolor{red}{{}
\label{Lemma2}}

\end{lemma}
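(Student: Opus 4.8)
The plan is to prove linear independence of the explicit functions in \eqref{eq:flinindep} by writing them out concretely using the formulas \eqref{eq:soluv}, \eqref{eq:WNPWNM}, \eqref{eq:w0Ntau}, \eqref{eq:w1Ntau}, and then separating them according to their distinct oscillatory behaviour in $x$ and $y$. First I would compute the pointwise products ${\bf w}_j^\tau\cdot\overline{{\bf w}_N^+}$. Since ${\bf w}_j^\tau=e^{i\tau\lambda_j x}(\mathcal U_j^\tau(y),\dots)$ and ${\bf w}_N^+= e^{i\lambda_\epsilon x}(\dots)$ (or, at the threshold $\epsilon=0$, $w_N^0$ and $w_N^1$, which are polynomials of degree $\le 1$ in $x$), the product is of the form $e^{i(\tau\lambda_j-\overline{\lambda_\epsilon})x}\Theta_{j,\tau}(y) = e^{i\tau\lambda_j x}e^{i\lambda_\epsilon x}\Theta_{j,\tau}(y)$, so the different $(j,\tau)$ are distinguished by the longitudinal frequency $\tau\lambda_j$, all of which are distinct real numbers (the $\pm\lambda_j$ are $2(N-1)$ distinct reals since $\lambda_1>\lambda_2>\dots>\lambda_{N-1}>0$), while ${\bf w}_N^-\cdot\overline{{\bf w}_N^+}$ and ${\bf w}_N^+\cdot\overline{{\bf w}_N^+}$ carry the longitudinal factor $e^{-i\lambda_\epsilon x}e^{i\lambda_\epsilon x}$ and $e^{i\lambda_\epsilon x}e^{i\lambda_\epsilon x}$ respectively — at $\epsilon = 0$ these become $1$ and, for $w_N^1$, linear in $x$. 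Hence a nontrivial linear combination that vanishes identically forces, by uniqueness of such exponential/polynomial-times-exponential expansions in $x$, each $y$-coefficient to vanish separately.

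The second, genuinely needed, step is to show that within a fixed longitudinal frequency the remaining $y$-functions are linearly independent. For the pairs indexed by $(j,\tau)$ with $j\in\mathcal{I}nd_S$ one has a single complex-valued coefficient function $\Theta_{j,\tau}(y)$, a finite sum of terms $e^{i(\pm\kappa_j\pm\kappa_N)y}$ (four Fourier modes whose exponents are pairwise distinct because $2L$ is not a natural number, so $\kappa_j\pm\kappa_N$ are all nonzero and distinct), and one takes its real and imaginary parts; I would argue that $\Re\Theta$ and $\Im\Theta$ cannot be proportional because $\Theta$ is not a constant multiple of a real function — its distinct Fourier modes have coefficients that are not all real multiples of a common phase. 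For $j=N,\tau=-$ one similarly gets $\Theta_{N,-}(y)$, a combination of $e^{\pm 2i\kappa_N y}$ and a constant, again with $2\kappa_N\ne 0$, and again $\Re$ and $\Im$ are independent. Finally ${\bf w}_N^+\cdot\overline{{\bf w}_N^+} = |{\bf w}_N^+|^2$ is real and positive (a sum of $|{\cdot}|^2$ of the four components), so it cannot lie in the span of the others, all of which, within each frequency block, integrate/pair against the cross terms — more precisely its $y$-profile contains a nonzero constant Fourier mode with a sign that the purely-imaginary or cancelling combinations cannot match.

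The cleanest way to organise this is: (i) fix a vanishing real linear combination $\sum_{\alpha}c_\alpha\,\upsilon_\alpha(x,y)\equiv 0$ with real $c_\alpha$; (ii) expand in $x$ and group by the longitudinal exponential factor $e^{i\mu x}$ (treating $\epsilon=0$ as the relevant case, or keeping $\lambda_\epsilon$ generic — the finitely many exceptional $\epsilon$ can be excluded), concluding that the coefficient of each distinct $e^{i\mu x}$ (or $x^k e^{i\mu x}$) vanishes; this already separates the three groups in \eqref{eq:flinindep} from one another except that the pair $\Re{\bf w}_N^-\cdot\overline{{\bf w}_N^+},\ \Im{\bf w}_N^-\cdot\overline{{\bf w}_N^+}$ and the single term ${\bf w}_N^+\cdot\overline{{\bf w}_N^+}$ may share a frequency; (iii) within each surviving block, expand in the $y$-Fourier modes $e^{i\nu y}$, using that $2L\notin\mathbb N$ guarantees all the relevant exponents $\pm\kappa_j\pm\kappa_N$, $\pm 2\kappa_N$, $0$ are distinct, and read off that the real and imaginary parts of a single non-real complex exponential-polynomial are linearly independent, while a strictly positive real function is independent of any combination of the others.

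I expect the main obstacle to be step (iii) in the block where ${\bf w}_N^+\cdot\overline{{\bf w}_N^+}$, $\Re({\bf w}_N^-\cdot\overline{{\bf w}_N^+})$ and $\Im({\bf w}_N^-\cdot\overline{{\bf w}_N^+})$ could all contribute to the same longitudinal factor (this happens precisely at the threshold $\epsilon=0$, where all three longitudinal factors collapse, and with the $w_N^1$-part producing an extra factor of $x$): one must carefully track the $x$-linear vs.\ $x$-constant pieces coming from $w_N^1$ versus $w_N^0$, and then within the $x$-constant part disentangle the three $y$-profiles by their Fourier content — the constant ($\nu=0$) Fourier mode of $|{\bf w}_N^+|^2$ is a positive number, whereas the constant mode of $\Im({\bf w}_N^-\cdot\overline{{\bf w}_N^+})$ vanishes and that of $\Re({\bf w}_N^-\cdot\overline{{\bf w}_N^+})$ has a definite value, so a dimension count on the three Fourier modes $\{e^{-2i\kappa_N y},1,e^{2i\kappa_N y}\}$ closes the argument. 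Everything else is bookkeeping with explicit exponentials, so once the frequency-separation principle is set up the proof is essentially a finite linear-algebra computation.
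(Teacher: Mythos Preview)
Your overall plan — write out the products explicitly and separate by Fourier content in $x$ and $y$ — matches the paper's approach, but your proposed step (iii) will not work because the $y$-Fourier content is far more degenerate than you expect.

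A direct computation from (\ref{eq:soluv}) and (\ref{eq:wpm}) shows that in ${\bf w}_{N}^{+}\cdot\overline{{\bf w}_{j}^{\tau}}$ the coefficients of $e^{i(\kappa_{j}-\kappa_{N})y}$ and $e^{-i(\kappa_{j}-\kappa_{N})y}$ are \emph{equal}, while the modes $e^{\pm i(\kappa_{j}+\kappa_{N})y}$ never appear at all. Hence the $y$-dependence factors out as the single \emph{real} function $a_{j}(y)=\cos\big((\kappa_{N}-\kappa_{j})y\big)$; for $j=N$ this is a constant, so the modes $e^{\pm 2i\kappa_{N}y}$ you planned to exploit in the $N$-block are simply absent. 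Consequently, for fixed $(j,\tau)$ the real and imaginary parts share the \emph{same} $y$-profile and your argument that ``$\Re\Theta$ and $\Im\Theta$ cannot be proportional'' fails; likewise the three $N$-block functions are all constant in $y$, so your three-mode dimension count collapses to one dimension.

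The paper's remedy is to reverse the roles of $x$ and $y$ in the fine step. One passes to $\epsilon=0$ by continuity, uses the distinct profiles $a_{j}(y)$ only to separate the different $j\in\mathcal{I}nd_{S}$ from one another and from the $N$-block, and then within each fixed $j$ uses the richer $x$-structure generated by $w_{N}^{1}$: for $j\in\mathcal{I}nd_{S}$ the four functions ($\Re/\Im$, $\tau=\pm$) are explicit combinations of $\cos(\lambda_{j}x),\,\sin(\lambda_{j}x),\,x\cos(\lambda_{j}x),\,x\sin(\lambda_{j}x)$ and one must verify that a $4\times 4$ determinant of coefficients is nonzero; for $j=N$ the three functions turn out to be explicit polynomials in $x$ of degree $\le 2$ and one checks directly that they span $\{1,x,x^{2}\}$. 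The independence thus rests on concrete coefficient checks in $x$, not on a $y$-dimension count.
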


\begin{proof} \textcolor{black}{We first note that functions (\ref{eq:flinindep})
continuously depend on $\epsilon$, so for the proof of linear independence,
it is enough to consider the limit case $\epsilon=0$. Using
the expressions (\ref{eq:soluv}) for the oscillatory waves $w_{j}^{\tau}$
and the formulas (\ref{eq:WNPWNM}) for the exponential waves ${\bf w}_{N}^{+}$
and ${\bf w}_{N}^{-}$ with their asymptotic behaviour (\ref{eq:weps+}),
(\ref{eq:weps-}), we can write 
\[
{\bf w}_{N}^{+}\cdot\overline{{\bf w}_{j}^{\tau}}=e^{-\tau i\lambda_{j}x}a_{j}(y)(C_{1}^{j\tau}+xC_{3}^{j\tau}+i(C_{2}^{j\tau}+xC_{4}^{j})),
\]
with $\tau=\pm$ and $j\in\mathcal{I}nd_{S}$ or separating real and
imaginary parts}

\textcolor{black}{
\begin{equation}
\Re{\bf w}_{N}^{+}\cdot\overline{{\bf w}_{j}^{\tau}}=a_{j}(y)\Big((C_{1}^{j\tau}+xC_{3}^{j\tau})\cos(\lambda_{j}x)+\tau(C_{2}^{j\tau}+xC_{4}^{j})\sin(\lambda_{j}x)),\label{eq:l1}
\end{equation}
\begin{equation}
\Im{\bf w}_{N}^{+}\cdot\overline{{\bf w}_{j}^{\tau}}=a_{j}(y)\Big((C_{2}^{j\tau}+xC_{4}^{j})\cos(\lambda_{j}x)-\tau(C_{1}^{j\tau}+xC_{3}^{j\tau})\sin(\lambda_{j}x)),\label{eq:l2}
\end{equation}
and for the expoential waves 
\begin{equation}
\Re{\bf w}_{N}^{+}\cdot\overline{{\bf w}_{N}^{-}}=\frac{\omega_{N}}{L}(-4x^{2}+\frac{4\mbox{sign}(\kappa_{N})}{\omega_{N}}x-\frac{2}{\omega_{N}^{2}}+4),\label{eq:l3}
\end{equation}
\begin{equation}
\Im{\bf w}_{N}^{+}\cdot\overline{{\bf w}_{N}^{-}}=\frac{\omega_{N}}{L}(8x-\frac{4\mbox{sign}(\kappa_{N})}{\omega_{N}}),\label{eq:l4}
\end{equation}
\begin{equation}
{\bf w}_{N}^{+}\cdot\overline{{\bf w}_{N}^{+}}=\frac{\omega_{N}}{L}(2x^{2}-\frac{2\mbox{sign}(\kappa_{N})}{\omega_{N}}x+\frac{1}{\omega_{N}^{2}}+2),\label{eq:l5}
\end{equation}
\[
\]
with 
\begin{equation}
a_{j}(y)=\frac{\omega_{N}}{4L\lambda_{j}}\cos\Big((\kappa_{N}-\kappa_{j})y\big)=\frac{\omega_{N}}{4L\lambda_{j}}\cos(\frac{\pi}{L}(N-j)y)\label{eq:aj}
\end{equation}
being an even function with respect to $\frac{L}{2}$, and constants}

\textcolor{black}{
\[
C_{1}^{j\tau}=1+\mbox{sign}(\kappa_{j})\frac{\kappa_{j}}{\omega}+\frac{\tau\lambda_{j}}{\omega^{2}},\;\;\; C_{2}^{j\tau}=\mbox{sign}(\kappa_{j})\frac{\tau\lambda_{j}}{\omega}-\frac{\kappa_{j}}{\omega},
\]
}
\[
C_{3}^{j\tau}=-\mbox{sign}(\kappa_{j})\frac{\tau\lambda_{j}}{\omega},\;\;\; C_{4}^{j}=1+\mbox{sign}(\kappa_{j})\frac{\kappa_{j}}{\omega}.
\]
\textcolor{black}{From the form (\ref{eq:aj}), functions $\{a_{j}(y)\}_{j\in\mathcal{I}nd}$
are linearly independent. Still for a fixed $j$ , functions $\Re{\bf w}_{N}^{+}\cdot\overline{{\bf w}_{j}^{+}}$,
$\Re{\bf w}_{N}^{+}\cdot\overline{{\bf w}_{j}^{-}}$, $\Im{\bf w}_{N}^{+}\cdot\overline{{\bf w}_{j}^{+}}$,
$\Im{\bf w}_{N}^{+}\cdot\overline{{\bf w}_{j}^{-}}$ in (\ref{eq:l1})
and (\ref{eq:l2}) could be linearly dependent, as they are linear
combinations of four types of functions $\cos(\lambda_{j}x)$, $x\cos(\lambda_{j}x)$,
$\sin(\lambda_{j}x)$ and $x\sin(\lambda_{j}x)$. However this possiblity
is ruled out as the determinant with the coefficients of the composite
functions $\cos(\lambda_{j}x)$, $x\cos(\lambda_{j}x)$, $\sin(\lambda_{j}x)$
and $x\sin(\lambda_{j}x)$ is non-zero. It follows that (\ref{eq:l1})
and (\ref{eq:l2}) are linearly independent. Finally the functions
(\ref{eq:l3}), (\ref{eq:l4}), (\ref{eq:l5}) belong to $\{1,\, x,\, x^{2}\}$
and are linear independent as $\Re{\bf w}_{N}^{+}\cdot\overline{{\bf w}_{N}^{-}}+2{\bf w}_{N}^{+}\cdot\overline{{\bf w}_{N}^{+}}=8$.}

\end{proof} 

\textcolor{black}{By Lemma \ref{Lemma2}, all the functions $v_{\alpha}$ in (\ref{eq:phimultiplicands}) are linearly independent.
It follows that it is possible to choose $\Phi$ so that (\ref{eq:phiequations})
holds and equations (\ref{matrixeq}) is
\begin{equation}
{\cal A}\boldsymbol{\eta}-\delta\boldsymbol{\mu}(\delta,\boldsymbol{\eta})=0.\label{eq:matrixeqlin}
\end{equation}
Now we set matrix ${\cal A}$ to be unit, that is its elements fulfill
the conditions
\begin{equation}
\mathcal{A}_{\alpha}^{\beta}=\delta_{\alpha,\beta},\,\,\,\alpha,\beta\in\mathcal{I}nd.\label{eq:Aunit}
\end{equation}
Again using Lemma \ref{Lemma2}, it is possible to choose functions
$\{\Psi^{\alpha}\}_{\alpha\in\mathcal{I}nd}$ so that the conditions
(\ref{eq:Aunit}) are fulfilled and (\ref{eq:matrixeqlin}) reads
\begin{equation}
\boldsymbol{\eta}=\delta\boldsymbol{\mu}(\delta,\boldsymbol{\eta}).\label{eq:contraction}
\end{equation}
Now, as $\delta$ is small, the operator on the right hand side of
equation (\ref{eq:contraction}) is a contraction operator, moreover
$\boldsymbol{\mu}$ is analytic in $\delta$ and $\boldsymbol{\eta}$
so from Banach Fixed Point Theorem equation (\ref{eq:contraction})
is solvable for $\boldsymbol{\eta}$.}

We have just shown that it is possible to choose functions \textcolor{black}{$\Phi$,
$\{\Psi^{\alpha}\}_{\alpha\in\mathcal{I}nd}$} and tune parameters
\textcolor{black}{$\boldsymbol{\eta}$ in such a way that the potential
$\mathcal{P}$ produces a trapped mode.}

A numerical example of a potential (leading term $\Phi$) that produces
a trapped mode is 

\[
\mathcal{P}(x,y)\approx\Phi(x,y)=e^{-(\frac{y-0.67}{0.2})^{2}}(0.54e^{-(x+0.14)^{2}}-e^{-(x+0.32)^{2}}+0.54e^{-(x+0.49)^{2}})
\]
for a nanoribbon of width $L=1.33$ ($\frac{L}{2}=0.67$) and energy
$\omega\approx\omega_{2}=1.57$.

\subsection{Proof of Theorem\label{sub:Proof-of-Theorem2} }

In this section, we present the proof to the theorem about the multiplicity
of the Trapped modes. We show that (i) there are no trapped modes for
energies that are slightly bigger than the thresholds or are far from
the thresholds, (ii) the multiplicity of a trapped mode, for energies
slightly less than a threshold, is no more than one.

\begin{proof} 

(i) Assume the contrary: there exist a trapped mode solution, which
belongs to $X_{0}$. According to the Theorem \ref{T3s2} and Proposition
\ref{Prop1} (ii), $\lambda$ in the strip $\{|\Im\lambda|\leq\gamma\}$,
in the exponential factor of the solutions (\ref{uosc}), (\ref{eq:solUV-2-1}),
\[
w(x,y)=e^{i\lambda x}(\mathcal{U}(y),\mathcal{V}(y),\mathcal{U}'(y),\mathcal{V}'(y))
\]
are real, it follows that $w\in\mathcal{H}_{\gamma}^{+}$. However
according to the Theorem \ref{T1a}, the operator $\mathcal{D}+(\delta\mathcal{P}-\omega_{\epsilon})\mathcal{I}\,:\, X_{\gamma}^{\pm}\rightarrow L_{\gamma}^{\pm}(\Pi)$
is an isomorphism, it follows that the only solution to $\Big(\mathcal{D}+(\delta\mathcal{P}-\omega_{\epsilon})\mathcal{I}\Big)w=0$
is $w=0$.

(ii) There exists at least one trapped mode given in Sect.
\ref{thrm1}. Assume now, that we have two trapped modes
$w_{1}\neq w_{2}$. From Theorem \ref{T3s2} and
Proposition \ref{Prop1} (i), which states that there are exactly
two solutions $w(x,y)=e^{i\lambda x}(\mathcal{U}(y),\mathcal{V}(y),\mathcal{U}'(y),\mathcal{V}'(y))$
with complex $\lambda$ in the strip $\{|\Im\lambda|\leq\gamma\}$,
it follows that the trapped mode is of the form

\begin{align*}
w_{j} & =C_{j}e^{i\lambda_{\epsilon}x}(\text{\ensuremath{\mathcal{U}}}_{N}^{+}(y),\mathcal{V}_{N}^{+}(y),\mathcal{U}_{N}^{'+}(y),\mathcal{V}_{N}^{'+}(y))\\
 & +D_{j}e^{-\lambda_{\epsilon}x}(\mathcal{U}_{N}^{-}(y),\mathcal{V}_{N}^{-}(y),\mathcal{U}_{N}^{'-}(y),\mathcal{V}_{N}^{'-}(y))+R_{j},
\end{align*}
with $j=1,2$ and $R_{j}\in\mathcal{H}_{\gamma}^{+}$. Consider the
following linear combination of trapped modes $\omega_{1}$ and $\omega_{2}$
\begin{align*}
w_{3} & =w_{1}-\frac{C_{1}}{C_{2}}w_{2}=(D_{1}-\frac{C_{1}}{C_{2}}D_{2})e^{-i\lambda_{\epsilon}^{-}x}(\mathcal{U}_{N}^{-}(y),\text{\ensuremath{\mathcal{V}}}_{N}^{-}(y),\mathcal{U}_{N}^{'-}(y),\mathcal{V}_{N}^{'-}(y))\\
 & +(R_{1}-\frac{C_{1}}{C_{2}}R_{2})\in X_{\gamma}^{+},
\end{align*}
however form Theorem \ref{T1a} as the operator $\mathcal{D}+(\delta\mathcal{P}-\omega_{\epsilon})\mathcal{I}\,:\, X_{\gamma}^{+}\rightarrow L_{\gamma}^{+}(\Pi)$
is an isomorphism, it follows that $w_{1}=\frac{C_{1}}{C_{2}}w_{2}$.

\end{proof} 

\section*{Acknowledgement}
V. Kozlov and A. Orlof acknowledge support of the Link\"{o}ping Univeristy. The authors thank I. V. Zozoulenko for the discussion on physical aspects of the paper. S. A. Nazarov acknowledges financial support from The Russian Science Foundation (Grant 14-29-00199). 

\appendix

\section{Norm\label{sec:Ellipticity}}
\begin{proof}
Let us consider only a part of $\int_{\Pi}|\mathcal{D}(u,v,u',v')^{\top}|^{2}dxdy$,
that contains elements with functions $u$ and $u'$, namely

\[
\int_{\Pi}|(i\partial_{x}-\partial_{y})u|^{2}+|(-i\partial_{x}-\partial_{y})u'|^{2}dxdy
\]
\[
=\int_{\Pi}\Big(|\nabla u|^{2}+|\nabla u'|^{2}\Big)dxdy+i\int_{\Pi}\Big(\overline{u}_{x}u_{y}-\overline{u}_{y}u_{x}-\overline{u}_{x}'u_{y}'+\overline{u}_{y}'u_{x}'\Big)dxdy.
\]
We show that due to the boundary conditions (\ref{eq:BC1INIT}) and
(\ref{eq:BC2INIT})
\[
\int_{\Pi}\Big(\overline{u}_{x}u_{y}-\overline{u}_{y}u_{x}-\overline{u}_{x}'u_{y}'+\overline{u}_{y}'u_{x}'\Big)dxdy=0.
\]
Using integration by parts, we get
\begin{align*}
\int_{\Pi}\Big(\overline{u}_{x}u_{y}-\overline{u}_{y}u_{x}-\overline{u}_{x}'u_{y}'+\overline{u}_{y}'u_{x}'\Big)dxdy & =\int_{-\infty}^{+\infty}\overline{u}_{x}u-\overline{u}u_{x}-\overline{u}_{x}'u'+\overline{u}'u_{x}'\Big|_{0}^{L}dx\\
=\int_{-\infty}^{+\infty}\bigg(\overline{u}_{x}(x,L)\Big(u(x,L)-ie^{i2\pi L}u'(x,L)\Big) & -u_{x}(x,L)\Big(\overline{u}(x,L)+ie^{-i2\pi L}\overline{u}'(x,L)\Big)\\
-\overline{u}_{x}(x,0)\Big(u(x,0)-iu'(x,0)\Big) & +u_{x}(x,0)\Big(\overline{u}(x,0)+i\overline{u}'(x,0)\Big)\bigg)dxdy=0,
\end{align*}
where the last two equalities are consequence of the boundary conditions
(\ref{eq:BC1INIT}) and (\ref{eq:BC2INIT}). In a similar way, one
can show that the remaining part of $\int_{\Pi}|\mathcal{D}(u,v,u',v')^{\top}|^{2}dxdy$,
that contains elements with functions $v$ and $v'$ is 
\[
\int_{\Pi}\Big(|(i\partial_{x}+\partial_{y})v|^{2}+|(-i\partial_{x}+\partial_{y})v'|^{2}\Big)dxdy=\int_{\Pi}\Big(|\nabla v|^{2}+|\nabla v'|^{2}\Big)dxdy,
\]
consequently 
\[
\int_{\Pi}|\mathcal{D}(u,v,u',v')^{\top}|^{2}dxdy=\int_{\Pi}\Big(|\nabla u|^2+|\nabla u'|^{2}+|\nabla v|^2+|\nabla v'|^{2}\Big)dxdy.
\]

\end{proof}

\section{The Mandelstam radiation condition\label{sec:Mandelstam-radiation-condition}}

Here we want to clarify the splitting of waves in two classes (outgoing/incoming)
according to the appearance of the $\pm i$ in (\ref{eq:bioosc}).
To do this we use the Mandelstam radiation conditions whi\textcolor{black}{ch
}defines outgoing and incoming waves by the direction of the energy
transfer \cite{Mandelstam,na569,VoBa}.

Let us write the original system (\ref{eq:DiractotINIT}) in the
form
\begin{equation}
\mathcal{D}\mathbf{w}=i\partial_{t}\mathbf{w},\:\:\:\mathbf{w}=e^{-i\omega t}w,\:\:\:\mathbf{w}=(\mathbf{u},\mathbf{v},\mathbf{u'},\mathbf{v'}).\label{eq:Mand}
\end{equation}
The energy flux through the boundary is defined as
\[
-\frac{d}{dt}\int_{\Omega}|\partial_{t}\mathbf{w}|^{2}dxdy.
\]
Using relations (\ref{eq:Mand}) and performing partial integration
we get 

\begin{align*}
-\frac{d}{dt}\int_{\Omega}|\partial_{t}\mathbf{w}|^{2}dxdy & =-|\omega|^{2}\int_{\Omega}\Big(\mathbf{\overline{\mathbf{w}}}\partial_{t}\mathbf{w}+\mathbf{w}\partial_{t}\overline{\mathbf{w}}\Big)dxdy\\
=-|\omega|^{2}\int_{\Gamma}\Big(\overline{\mathbf{u}}\mathbf{v}+\mathbf{\overline{\mathbf{v}}\mathbf{u}}-(\overline{\mathbf{u}}'\mathbf{v'}+\mathbf{\overline{\mathbf{v}}'\mathbf{u}'}), & -i(\overline{\mathbf{u}}\mathbf{v}-\mathbf{\overline{\mathbf{v}}\mathbf{u}}+\overline{\mathbf{u}}'\mathbf{v'}-\mathbf{\overline{\mathbf{v}}'\mathbf{u}'})\Big)\cdot(n_{x},n_{y})ds,
\end{align*}
where $\Gamma$ is the boundary of area $\Omega$. Consider the energy
flux through the cross-section, that is choose $(n_{x},n_{y})=(1,0)$, then the last formula
is equal to 

\[
-\frac{d}{dt}\int_{\Omega}|\partial_{t}\mathbf{w}|^{2}dxdy=-|\omega|^{2}\int_{0}^{L}\Big(\overline{u}v+\overline{v}u-(\overline{u}'v'+\overline{v}'u')\Big)dy=-i|\omega|^{2}q(w,w),
\]
where the last equality comes from the the definition of q-form (\ref{eq:q}).
Accordingly the energy transfer along the nanoribbon is proportional
to $-iq$, which is $\pm1$ for $q=\pm i$. It follows that the value
of the q-form defines direction of wave propagation, namely $q=i$ describes
waves propagating from from $-\infty$ to $+\infty$ and $q=-i$ those
from $+\infty$ to $-\infty$. This leads to the definition of outgoing/incoming
waves (\ref{W1a}), (\ref{W1b}) as those travelling to $\pm\infty$
and from $\pm\infty$ .

\end{document}